\documentclass[11pt,letterpaper]{article}

\usepackage[backend=biber,
style=alphabetic,
sorting=nyt,
minalphanames=3,
maxbibnames=99]{biblatex}

\addbibresource{biblio.bib}

\usepackage{amsmath,amsthm,amsfonts,amssymb}
\usepackage{mathtools}
\usepackage{physics}
\usepackage{braket}
\usepackage{array}
\allowdisplaybreaks

\usepackage[margin=1in]{geometry}
\usepackage{xspace}
\usepackage{xcolor}
\usepackage{subcaption}
\usepackage{enumitem}
\usepackage{microtype}
\usepackage{csquotes}

\usepackage{hyperref}
\hypersetup{
    colorlinks=true,
    linkcolor=black,
    citecolor=black,
    filecolor=black,
    urlcolor=black,
}
\usepackage[capitalize]{cleveref}
\crefname{step}{step}{steps}
\Crefname{step}{Step}{Steps}

\usepackage[skins,many]{tcolorbox}
\newtheorem{theorem}{Theorem}[section]
\newtheorem{lemma}[theorem]{Lemma}
\newtheorem{corollary}[theorem]{Corollary}

\newtheorem{observation}[theorem]{Observation}
\theoremstyle{definition}
\newtheorem{definition}[theorem]{Definition}
\theoremstyle{remark}
\newtheorem{remark}[theorem]{Remark}
\newcommand{\local}{LOCAL\xspace}
\newcommand{\qlocal}{quantum-\local}
\newcommand{\detlocal}{det-\local}
\newcommand{\randlocal}{rand-\local} 
\newcommand{\slocal}{S\local}
\newcommand{\olocal}{online-\local} 

\newcommand{\dep}{dependence\xspace}
\newcommand{\dept}{dependent\xspace}
\newcommand{\boundep}{bounded-\dep}

\newcommand{\boundept}{bounded-\dept}

\newcommand{\nonsign}{non-signaling\xspace}

\newcommand{\myO}[1]{O\mathopen{}\left(#1\right)\mathclose{}}

\DeclareMathOperator{\dist}{dist} %
\newcommand{\neighborhood}{\NN}
\DeclareMathOperator{\poly}{poly} %

\newcommand{\floor}[1]{\left \lfloor #1 \right \rfloor}
\newcommand{\view}{\VV}
\DeclareMathOperator{\diam}{diam}

\newcommand{\nats}{\mathbb{N}}

\renewcommand{\AA}{\mathcal{A}}

\newcommand{\CC}{\mathcal{C}}

\newcommand{\EE}{\mathcal{E}}
\newcommand{\FF}{\mathcal{F}}

\newcommand{\HH}{\mathcal{H}}
\newcommand{\II}{\mathcal{I}}

\newcommand{\LL}{\mathcal{L}}

\newcommand{\NN}{\mathcal{N}}

\newcommand{\PP}{\mathcal{P}}

\newcommand{\VV}{\mathcal{V}}

\newcommand{\st}{\ \middle| \ }

\DeclareMathOperator{\oupt}{out}
\newcommand{\localVar}{\mathrm{x}}
\newcommand{\problem}{\Pi}
\newcommand{\maxDeg}{\Delta}
\newcommand{\algo}{\AA}
\newcommand{\outcome}{\mathrm{O}}

\makeatletter
\DeclareMathOperator{\regularlog}{log}
\renewcommand{\log}{\protect\@ifstar{\regularlog^*}{\regularlog}}
\makeatother

\newcommand{\Vin}{\VV_{\text{in}}}
\newcommand{\Ein}{\EE_{\text{in}}}
\newcommand{\Vout}{\VV_{\text{out}}}
\newcommand{\Eout}{\EE_{\text{out}}}
\newcommand{\inptLbl}{\ell_\text{in}}
\newcommand{\outLbl}{\ell_\text{out}}
\newcommand{\lbl}{\ell}

\newtcolorbox{myframe}[2][]{%
	breakable,enhanced,colback=white,colframe=black,coltitle=black,
	sharp corners,boxrule=0.4pt,
	fonttitle=\itshape,
	attach boxed title to top left={yshift=-0.3\baselineskip-0.4pt,xshift=2mm},
	boxed title style={tile,size=minimal,left=0.5mm,right=0.5mm,
		colback=white,before upper=\strut},
	title=#2,#1
}
\newcommand{\ltreelike}{\mathsf {tree}}
\newcommand{\lparent}{\mathsf {P}}
\newcommand{\lleft}{\mathsf {L}}
\newcommand{\lright}{\mathsf {R}}

\newcommand{\llch}{\ensuremath{\mathsf {Ch_L}}}
\newcommand{\lrch}{\ensuremath{\mathsf {Ch_R}}}
\newcommand{\lerror}{\mathsf {Error}}

\DeclareMathOperator{\iterghz}{GHZ}

\newcommand{\lbadgraph}{\mathsf {badGraph}}
\newcommand{\lbadtree}{\mathsf {badTree}}
\newcommand{\lproper}{\mathsf {proper}}
\newcommand{\lpromise}{\mathsf {promise}}
\newcommand{\lbadgadget}{\mathsf {badOctopus}}
\newcommand{\linearizable}{\mathsf {linearizable}}

\newcommand{\lgadget}{\mathsf {octopus}}

\newcommand{\gadgetN}{\mathsf {head}} 
\newcommand{\gadgetP}{\mathsf {port}} %
\newcommand{\nplink}{\mathsf {hp_{link}}}
\newcommand{\pnlink}{\mathsf {ph_{link}}}

\newcommand{\inter}{\mathsf{inter\text{-}octopus}}
\newcommand{\proper}{\mathsf {proper}}
\newcommand{\iplink}{\mathsf {ip_{link}}}
\newcommand{\pilink}{\mathsf {pi_{link}}}
\newcommand{\badgraph}{\mathsf {badGraph}}
\newcommand{\linter}{\mathsf{inter}}
\newcommand{\lintra}{\mathsf{intra}}

\newcommand{\lfirst}{\mathsf {first}}
\newcommand{\lother}{\mathsf {other}}

\newcommand{\compression}{\mathsf{compression}} 
\begin{document}
\begin{flushleft}
    \huge\bf
    Distributed Quantum Advantage in \\
    Locally Checkable Labeling Problems
\end{flushleft}
\smallskip
\begin{flushleft}
    \setlength{\parskip}{3pt}

    \textbf{Alkida Balliu} · Gran Sasso Science Institute

    \textbf{Filippo Casagrande} · Gran Sasso Science Institute

    \textbf{Francesco d'Amore} · Gran Sasso Science Institute

    \textbf{Massimo Equi} · Aalto University

    \textbf{Barbara Keller} · Aalto University

    \textbf{Henrik Lievonen} · Aalto University

    \textbf{Dennis Olivetti} · Gran Sasso Science Institute

    \textbf{Gustav Schmid} · University of Freiburg

    \textbf{Jukka Suomela} · Aalto University
\end{flushleft}
\smallskip

\paragraph{Abstract.}

    In this paper, we present the first known example of a locally checkable labeling problem (LCL) that admits asymptotic distributed quantum advantage in the LOCAL model of distributed computing: our problem can be solved in $O(\log n)$ communication rounds in the quantum-LOCAL model, but it requires $\Omega(\log n \cdot \log^{0.99} \log n)$ communication rounds in the classical randomized-LOCAL model.

    We also show that distributed quantum advantage cannot be arbitrarily large: if an LCL problem can be solved in $T(n)$ rounds in the quantum-LOCAL model, it can also be solved in $\tilde O(\sqrt{n T(n)})$ rounds in the classical randomized-LOCAL model. In particular, a problem that is strictly global classically is also almost-global in quantum-LOCAL.

    This solves a major open question at the intersection of distributed graph algorithms and quantum computing. LCL problems [Naor and Stockmeyer, STOC 1993] have been extensively studied in the past decade and they are by now very well-understood in the classical LOCAL model, yet whether any of them admits a genuine quantum advantage has remained open. Coiteux-Roy et al.\ [STOC 2024] showed that for \emph{some specific} LCL problems, \emph{if} quantum helps, it cannot help by much. Akbari et al.\ [STOC 2025] showed that, for \emph{some} LCLs, in \emph{rooted trees}, quantum-LOCAL and randomized-LOCAL have the same power. Balliu et al.\ [STOC 2025] showed that quantum helps in solving locally checkable problems faster when the maximum degree of the graph is super-constant; however, this does not give any asymptotic quantum advantage for LCLs. The above-mentioned works repeatedly asked the key question of whether quantum helps for LCLs. We solve this open question by giving the first example of an LCL problem that admits a super-constant distributed quantum advantage, and by also giving the first result that puts limits on distributed quantum advantage for LCLs in general graphs.

    Our second result also holds for $T(n)$-dependent probability distributions. As a corollary, if there exists a \emph{finitely dependent distribution} over valid labelings of some LCL problem $\Pi$, then the same problem $\Pi$ can also be solved in $\tilde O(\sqrt{n})$ rounds in the classical randomized-LOCAL and deterministic-LOCAL models. That is, finitely dependent distributions cannot exist for global LCL problems.
 \thispagestyle{empty}
\setcounter{page}{0}
\newpage
\section{Introduction}\label{sec:introduction}

Does quantum computation and communication help with solving graph problems in the distributed setting? We study this question in the usual LOCAL model of distributed computing, more precisely comparing these two settings (see \cref{sec:preliminaries} for precise definitions):
\begin{itemize}
    \item \textbf{Randomized-LOCAL model:} Each node of the input graph is a classical computer (that can store an arbitrary number of classical bits), and each edge is a classical communication channel (that can transmit an arbitrary number of classical bits per communication round). Each node is initialized with its own independent random bit string.
    \item \textbf{Quantum-LOCAL model:} Each node of the input graph is a quantum computer (that can store an arbitrary number of qubits), and each edge is a quantum communication channel (that can transmit an arbitrary number of qubits per communication round). Each node is initialized with its own unentangled qubits.
\end{itemize}
Given a graph problem $\Pi$ (e.g.\ graph coloring), we say that it can be solved in $T(n)$ rounds if there is a distributed algorithm $A$ such that in any $n$-node graph after $T(n)$ communication rounds each node terminates and outputs its own part of the solution (e.g.\ its own color).

\subsection{Prior work on distributed quantum advantage}

It is not at all obvious if quantum-LOCAL could possibly admit any advantage over randomized-LOCAL; after all, we did not put any limits on the amount of local computation, local storage, or number of bits communicated per round. Could it be the case that any quantum-LOCAL algorithm can be simulated with a classical algorithm in the same number of rounds (at least asymptotically), if we just encode the local state of qubits with an exponentially larger number of classical bits?

Surprisingly, this turns out not to be the case: \textcite{legall2019} show that there are graph problems that can be solved in $O(1)$ rounds in quantum-LOCAL but that require $\Omega(n)$ rounds in classical randomized-LOCAL. However, the problem studied in \cite{legall2019} is very different from problems commonly studied in the theory of distributed graph algorithms. In particular, their problem has got an inherently \emph{global} specification.

\subsection{Prior work on LCL problems}

In recent years, a large body of literature has focused on \textbf{locally checkable labeling problems}, or LCLs in brief, first introduced by \textcite{naor1995}. These are graph problems in which valid solutions can be specified by listing a finite set of valid labeled neighborhoods. LCLs strike a balance between being broad enough so that they contain a large number of interesting graph problems and being narrow enough so that it is possible to prove strong theorems that apply to all LCL problems, see e.g.\ \cite{naor1995,chang19hierarchy,chang_kopelowitz_pettie2019exp_separation,brandt16lll,balliu18lcl-complexity,balliu20almost-global,balliu20lcl-randomness,suomela-2020-landscape}. There are numerous results about LCL problems in the classical LOCAL model, yet we do not know if \emph{all} of these results hold also in the quantum-LOCAL model. Hence, this is the key question that we study: \textbf{do any LCL problems admit a distributed quantum advantage?}

There are several papers that have so far delivered mainly negative results: we have learned about cases in which quantum cannot help, at least not much \cite{coiteuxroy2023,dhar24rand,akbari2024}, and we have also learned about barriers for studying such questions \cite{akbari2024}. The main exception is the very recent work \cite{balliu2024quantum} that showed the following separation result: there is a family of LCL problems $\iterghz(\Delta)$ parameterized by maximum degree $\Delta$ such that $\iterghz(\Delta)$ can be solved in $O(1)$ rounds in quantum-LOCAL but it requires $\Omega(\Delta)$ rounds in randomized-LOCAL. However, this does not yield a super-constant separation for any fixed LCL problem: for any fixed $\Delta$, problem $\iterghz(\Delta)$ is an LCL that can be solved in $O(1)$ rounds (albeit with very different constants) in both quantum-LOCAL and randomized-LOCAL.

Is it possible to exhibit a single LCL problem $\Pi$ such that $\Pi$ is solvable in $T(n)$ rounds in quantum-LOCAL but requires $\omega(T(n))$ rounds in classical models? This is a \textbf{key open question} that has been mentioned repeatedly in the literature, see e.g.\ \cite{coiteuxroy2023,akbari2024,balliu2024quantum,suomela-open}.

\subsection{Contribution 1: first LCL problem with a quantum advantage}

In this work we show that the answer is yes: quantum-LOCAL is strictly stronger than randomized-LOCAL for LCL problems. More precisely, in \cref{sec:high-level-lcl,sec:lcl-details} we prove:
\begin{theorem}\label{thm:intro:sep}
    There is an LCL problem $\Pi$ such that the round complexity of $\Pi$ is $O(\log n)$ in quantum-LOCAL but $\Omega(\log n \cdot \log^{0.99} \log n)$ in randomized-LOCAL.
\end{theorem}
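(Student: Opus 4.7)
My plan starts from the degree-parameterized family $\iterghz(\Delta)$ of \cite{balliu2024quantum}, whose classical complexity is $\Omega(\Delta)$ but whose quantum complexity is $O(1)$. Because a genuine LCL has constant-size constraints (and hence constant maximum degree), the key task is to replace each high-degree ``super-node'' of $\iterghz(\Delta)$ by a bounded-degree tree-of-gadgets, and then to nest many such simulated instances into a single graph so that the separation scales with $n$ instead of with the maximum degree.

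Concretely, I would take a rooted bounded-degree tree $T$ of depth $\Theta(\log n)$ and plant, at each level of $T$, a bounded-degree simulation of an $\iterghz(\Delta)$ instance of internal ``width'' $\Delta = \Theta(\log^{0.99}\log n)$. The problem $\Pi$ would require each planted instance to be solved consistently with the instance immediately above and below it in $T$, via labels passed through interface ports of each gadget. All constraints remain checkable inside $O(1)$-radius neighborhoods, so $\Pi$ is an LCL. The size of the graph is $\tilde O(n)$, so the depth $\Theta(\log n)$ of $T$ is the right order.

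For the quantum upper bound I would design an algorithm that in $O(\log n)$ rounds simultaneously propagates information along $T$ and uses GHZ-style entanglement to solve each planted $\iterghz$; the point is that the quantum subroutine for $\iterghz(\Delta)$ adds only $O(1)$ rounds beyond the $O(\log \Delta)$ tree-aggregation already happening, and these can be overlapped with the vertical propagation through $T$. For the classical lower bound I would use a round-elimination / chain argument to show two facts: (i) each individual planted level requires $\Omega(\log^{0.99}\log n)$ rounds in randomized-\local, via a reduction to the $\Omega(\Delta)$ bound for $\iterghz(\Delta)$ with $\Delta = \Theta(\log^{0.99}\log n)$, after unrolling the bounded-degree simulation; and (ii) the levels cannot be pipelined, because the input to a level-$i$ gadget is determined by outputs of the level above it, and any early guess is falsified with constant probability by the GHZ-style constraint. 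Multiplying the per-level bound by the $\Theta(\log n)$ levels then yields $\Omega(\log n \cdot \log^{0.99}\log n)$.

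The main obstacle I foresee is the no-pipelining step. Standard round-elimination is typically phrased for a single LCL on a regular tree, and composing it across $\log n$ different levels requires a careful interface design so that the level-$i$ subinstance genuinely depends on the level-$(i{+}1)$ solution in the information-theoretic sense used by the lower bound. I would try to package each level's constraint as a certificate that is only producible after seeing a GHZ-solvable input from above, and then apply the $\iterghz$ lower bound at each level by a hybrid argument that fixes random valid choices for the other levels. Getting the exponent $0.99$ right, rather than just some $\Omega(\log\log n)$ factor, will likely require tuning the simulation so that one gadget costs barely less than $\log\log n$ rounds classically, just under the natural $\log\log n$ barrier coming from the bounded-degree simulation overhead.
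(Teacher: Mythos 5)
Your proposal takes a genuinely different route from the paper, and the route has a gap that I do not think can be repaired as stated.

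The paper does \emph{not} stack $\Theta(\log n)$ separate $\iterghz(\Delta)$ instances; it takes a \emph{single} $\iterghz(\Delta)$ instance (with $\Delta \approx \log\log n / \log\log\log n$), linearizes each degree-$\Delta$ player into a bounded-degree ``octopus'' gadget, and then \emph{pads each edge} between a player and a game by a tree-like gadget of height $\Theta(\log n)$. This padding is a purely geometric operation: every distance in the original instance is multiplied by $\Theta(\log n)$ in the lifted graph. The classical lower bound then follows by a simple simulation (Lemma~\ref{lem:lb-padding}): a $T'$-round \randlocal algorithm on the lifted graph yields a $O(T'/\log n)$-round algorithm on the original $\iterghz(\Delta)$ instance, so $T' = \Omega(\Delta\log n)$. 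No ``no-pipelining'' lemma is needed, no hybrid argument, and no composition of round elimination across levels. The quantum upper bound is likewise immediate: the $O(1)$-round GHZ algorithm becomes $O(\log n)$ rounds after accounting for the stretched edges and gadget diameters.

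Your stacking approach faces a dilemma the padding approach sidesteps. If the cross-level dependence is of the $\iterghz$ type (classically sequential but quantum-parallelizable via entanglement), then your entire construction is morally an $\iterghz(\Delta')$ instance with effective chain length $\Delta' = \Theta(\log n \cdot \Delta)$. But the $\iterghz$ classical lower bound is $\Omega(\min\{\Delta', \log_{\Delta'}\log n\})$: the round-elimination argument requires a locally tree-like hard instance of depth $\Omega(\Delta')$ and branching $\Delta'$, and for $\Delta' = \omega(\log\log n / \log\log\log n)$ such an instance no longer fits in $n$ nodes, so the bound collapses to $O(1)$. This is precisely why the paper takes $\Delta = \Theta(\log\log n / \log\log\log n)$ and then amplifies by geometry rather than by lengthening the chain. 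Conversely, if you make the cross-level dependence \emph{genuinely} causal (so that the valid output set at level $i$ is information-theoretically determined by level $i+1$), then quantum-\local pays the same sequential penalty, since causality also constrains quantum algorithms; the GHZ advantage only lets a quantum algorithm \emph{coordinate a choice among many valid answers} without communication, not bypass a genuine input dependence. Either way, the ``multiply per-level cost by $\Theta(\log n)$ levels'' step does not go through, and I do not see how to design the interface so that it simultaneously (a) forces classical to pay $\Omega(\Delta)$ per level, (b) allows quantum to pay $O(1)$ per level beyond vertical propagation, and (c) keeps the $\iterghz$ lower bound applicable.

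A further missing piece, independent of the lower bound: an LCL must be well-defined on arbitrary bounded-degree inputs, not just on graphs that look like a tree of planted instances. The paper devotes substantial machinery (tree-like gadgets, octopus gadgets, and the problems $\Pi^{\lbadtree}$, $\Pi^{\lbadgadget}$, $\Pi^{\lbadgraph}$) to letting nodes prove, in $O(\log n)$ deterministic rounds, that a given region is malformed — in a way that cannot be abused on well-formed inputs. Your proposal would need an analogous promise-removal layer, and this layer must not dominate the quantum upper bound. I would suggest replacing your stacking idea with the paper's padding idea: it delivers the $\Omega(\Delta\log n)$ factor by stretching distances rather than by iterating the problem, which removes both the no-pipelining obstacle and the node-budget obstruction.
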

This brings us both good news and bad news: The good news is that this demonstrates that distributed quantum computing indeed helps even when restricted to the family of LCL problems. The bad news is that there is now no longer hope that we could prove a theorem that shows that quantum-LOCAL and randomized-LOCAL are asymptotically equally strong for LCL problems, enabling us to lift a large body of prior work from classical models to the quantum model---to fully characterize an LCL problem, we may need to separately classify its locality in deterministic-LOCAL, randomized-LOCAL, and quantum-LOCAL.

\subsection{Contribution 2: limits on quantum advantage for LCLs}

While \cref{thm:intro:sep} resolves the open question, the separation that we have between the classical and quantum models is tiny, especially if we compare this with the non-LCL problem from \cite{legall2019} that admits an $O(1)$-round quantum algorithm and requires $\Omega(n)$ rounds in classical models. Could we demonstrate such a separation with LCL problems?

In this work we show that the answer is no: there is no LCL with a linear-in-$n$ gap between classical and quantum round complexities. Formally, in \cref{sec:quantum-advantage} we show:
\begin{theorem}\label{thm:intro:sim}
    Let $\Pi$ be any LCL problem that can be solved in $T(n)$ rounds in quantum-LOCAL. Then $\Pi$ can be solved in $O(\sqrt{n T(n)} \poly \log n)$ rounds in randomized-LOCAL.
\end{theorem}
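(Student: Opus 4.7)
My starting observation is that a $T$-round quantum-LOCAL algorithm $\AA$ is \emph{non-signaling}: it induces an output distribution $\mu$ such that (i) each marginal $\mu_v$ is determined by the inputs in $\neighborhood_T(v)$, (ii) for any two vertex sets $A, B \subseteq V(G)$ with $\dist(A,B) > 2T$, the joint marginal $\mu_{A \cup B}$ factors as $\mu_A \otimes \mu_B$, and (iii) $\mu$ is supported on valid labelings of $\problem$ with high probability. In randomized-LOCAL, a node can compute its own marginal $\mu_v$ in $T$ rounds by gathering $\neighborhood_T(v)$ and simulating $\AA$ under unbounded local computation. The task thus reduces to turning these local simulations into one globally valid labeling in $\tilde O(\sqrt{n T})$ rounds.

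The plan is to exploit non-signaling via a \emph{large-scale} randomized network decomposition. I would construct a decomposition of $G$ with cluster weak-diameter $R = \tilde\Theta(\sqrt{n T})$, a number of color classes $k = \myO{\poly \log n}$, and the property that same-color clusters are pairwise at distance strictly greater than $2T$. Such a decomposition can be built in $\tilde O(R)$ rounds via a scaled-up Rozh\v{o}n--Ghaffari-type construction. Color classes are then processed sequentially. In each phase, every cluster $C$ in the current class gathers its radius-$(R+T)$ neighborhood---including all inputs, the graph structure, and the outputs committed in earlier phases---and its leader draws a joint sample of the labels of $C$ from the distribution of $\mu_C$ conditioned on the committed history. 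Since same-color clusters are $2T$-separated, non-signaling lets them sample in parallel within a phase without introducing joint inconsistencies.

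The main obstacle is that conditioning in a $T$-dependent distribution is subtle: non-signaling at distance $2T$ yields unconditional factorization but not the Markov property, so the conditional distribution of $\mu_C$ given \emph{all} committed outputs need not be computable purely from local history inside $\neighborhood_{R+T}(C)$. Overcoming this requires a careful chain-rule argument along the decomposition's color ordering---showing that at each phase the relevant conditional marginal reduces to a local quantity for the specific class of distributions produced by quantum-LOCAL algorithms (and, more generally, for any $T$-dependent distribution supported on valid labelings, which is the form needed for the promised extension to $T$-dependent distributions). Once this is established, the total round complexity is $\tilde O(R)$ to build the decomposition plus $k \cdot O(R + T) = \tilde O(\sqrt{n T})$ for the sequential phases, and the final labeling is valid with high probability because it is drawn from (a law indistinguishable from) $\mu$.
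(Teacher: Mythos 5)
Your proposal correctly identifies the right setup (a $T$-round quantum-LOCAL algorithm yields a $2T$-dependent, non-signaling output distribution $\mu$ that you can locally evaluate) and correctly identifies the central obstacle: conditional distributions of a $T$-dependent distribution given far-away committed outputs are \emph{not} determined by local information. But you then assume this obstacle can be cleared by a ``careful chain-rule argument along the decomposition's color ordering,'' and that is exactly where the plan breaks. No such argument exists, and there is a concrete reason: if conditional marginals of arbitrary $T$-dependent distributions could be computed and sampled from local history, then sequentially committing color classes would produce a genuine sample from $\mu$ itself, for \emph{any} problem, in $\tilde{O}(\sqrt{nT})$ rounds. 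Applied with $T = O(1)$ to the distribution induced by the $O(1)$-round quantum-LOCAL algorithm of Le~Gall et al.\ (which solves a non-LCL problem requiring $\Omega(n)$ classical rounds), that would be a contradiction. So your phase-by-phase conditional sampling cannot, in general, be reduced to a local quantity — not because you haven't found the right chain rule yet, but because the conclusion it would give is false.

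The paper's proof avoids conditional sampling entirely and uses the LCL structure in a different, essential way. It computes a clustering into clusters $S_1,\ldots,S_k$ of weak diameter $\tilde{O}(\sqrt{nT})$ plus a \emph{small} leftover set $D$ with $|D| = O(\sqrt{n/T})$. It samples from $\mu$ only on $D$: because $|D|\cdot 2T = O(\sqrt{nT})$, the components of $D$ at mutual distance $>2T$ each have diameter $O(\sqrt{nT})$, and $2T$-separated sets can be sampled independently using the unconditional factorization (no conditioning needed). Then — and this is the move you are missing — it does \emph{not} try to extend this partial labeling according to any conditional law. It uses \emph{completability}: with probability $p$ the sampled restriction on $D$ is a restriction of some globally valid labeling, hence each cluster $S_i$ can independently find \emph{any} valid local completion by brute force in $O(\mathrm{diam}(S_i))$ rounds, and local validity on each cluster implies global validity because $\Pi$ is locally checkable and clusters are $\geq r$ apart. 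The role of the LCL hypothesis is to guarantee that ``some valid extension exists,'' not to make conditional sampling tractable. Your plan uses the LCL hypothesis only to conclude correctness at the end, which is not enough.
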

In particular, if the classical round complexity is $\Omega(n)$, i.e., the problem is inherently global, then also in the quantum-LOCAL model we need $\tilde\Omega(n)$ rounds (here we are using $\tilde\Omega$ and $\tilde O$ to hide polylogarithmic factors).

The main open question after this work is narrowing down the gap between \cref{thm:intro:sep,thm:intro:sim}: can we construct LCLs where quantum-LOCAL helps more than some doubly-logarithmic factors? Currently, there is a larger gap between deterministic-LOCAL and randomized-LOCAL than randomized-LOCAL and quantum-LOCAL, which seems counterintuitive: is access to classical randomness already almost as good as the ability to manipulate qubits?

While we do not know yet if \cref{thm:intro:sim} is tight, in \cref{ssec:intro-sim-ideas} we will see that there is a direct generalization of \cref{thm:intro:sim} to stronger (super-quantum) models, and there the result turns out to be close to the best possible, in the sense that we may have a gap of $\tilde O(\sqrt{n})$ between classical models and super-quantum models. Hence to strengthen the claim of \cref{thm:intro:sim}, we need techniques that do \emph{not} generalize far beyond quantum-LOCAL.

\subsection{Key ideas in the proof of Theorem~\ref{thm:intro:sep}}\label{ssec:key-ideas-separation}

In the proof of \cref{thm:intro:sep}, the key novel ingredient is the observation that the $\iterghz(\Delta)$ problem from \cite{balliu2024quantum} can be \emph{linearized}, and then further \emph{padded} with the technique from \cite{balliu20lcl-randomness} to construct a genuine LCL problem that admits a quantum advantage. We summarize here the key elements of the proof.

\paragraph{LCL problems.}
As our goal is to construct an LCL problem that exhibits distributed quantum advantages, let us recall what kind of entities LCL problems are. We will present a formal definition in \cref{sec:preliminaries}, but the following informal description suffices for now: An LCL problem $\Pi$ can be specified by listing a \textbf{finite set} of valid labeled radius-$r$ neighborhoods $\mathcal{N}$ for some constant $r$. The task is to label nodes and/or edges of the input graph $G$, and a solution is feasible if for every node $v$, the radius-$r$ neighborhood of $v$ is isomorphic to one of the neighborhoods in $\mathcal{N}$.

In particular, this implies that the set of node labels and edge labels has to be also finite, and $G$ must be a bounded-degree graph in order to admit a solution to $\Pi$. This leads to another equivalent way of characterizing LCL problems: they are graph problems for bounded-degree graphs, where the task is to label nodes and/or edges with labels from a finite alphabet, and the validity of a solution can be verified in a distributed manner by checking all constant-radius local neighborhoods.

The restrictions may sound at first technical, but this exact definition originally introduced by \cite{naor1995} has been tremendously successful: this is a problem family that is now very well-understood especially for the classical LOCAL model. What we seek to demonstrate is that inside this well-understood yet restrictive family there is indeed a problem $\Pi$ that admits a super-constant quantum advantage.

\paragraph{Starting point: the iterated GHZ problem.}
Let us next recall the key elements of the problem family $\iterghz(\Delta)$ from \cite{balliu2024quantum}. Here $\iterghz(\Delta)$ is a family of LCL problems, parameterized by $\Delta$. The problem family is defined so that hard instances are bipartite graphs, where one part consists of white nodes of degree $\Delta$ and the other part consists of black nodes of degree $3$. White nodes represent \emph{players} and black nodes represent \emph{games}---more precisely, they are instances of the GHZ game \cite{GHZineq,mermin1990quantum}, which is a $3$-party game where players can win the game without communication if they hold entangled qubits. The games are labeled with colors $1,2,\dotsc,\Delta$, so that each player is adjacent to one game of each color.
\begin{center}
    \includegraphics[page=1]{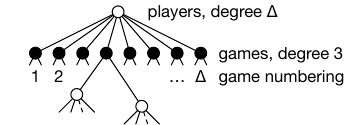}
\end{center}
Crucially, for each player, the input that they have in game of color $i$ is the output they got from game of color $i-1$. The key idea is that for a classical algorithm to win in a color-$i$ game, it first has to know the outputs of color-${(i-1)}$ games, and spend additional communication rounds after that to coordinate a feasible solution for the color-$i$ game. Such a strategy is inherently sequential, taking $\Omega(\Delta)$ rounds, and it can be shown with the round elimination technique \cite{Brandt2019automatic} that this is indeed the best that a classical algorithm can do (at least in neighborhoods that are tree-like). However, a quantum algorithm can do much better: we can use one communication round to establish a set of shared qubits for each game, and then the players can do the rest locally, by applying appropriate measurements to the shared qubits; this is $O(1)$ rounds independently of~$\Delta$.

At this point it is also good to recall that in the classical LOCAL model, time and distances are interchangeable: a $T$-round algorithm is equivalent to a function that maps radius-$T$ neighborhoods to local outputs (and vice versa). So the key point is this: any classical algorithm that solves $\iterghz(\Delta)$ has to see up to distance $\Omega(\Delta)$, in neighborhoods that locally look like a two-colored tree in which all white nodes have degree $\Delta$ and black nodes have degree $3$ (we will call it a $(\Delta,3)$-biregular tree).

Now $\iterghz(\Delta)$ is a well-defined graph problem even if we plug in e.g.\ $\Delta = \sqrt{n}$, but then we cannot claim that any classical algorithm for solving it requires $\Omega(\sqrt{n})$ rounds, simply because we can no longer construct a hard instance that would be a $(\sqrt{n},3)$-biregular tree of depth $\sqrt{n}$ with at most $n$ nodes, which certainly does not exist. However, a more modest choice of $\Delta \approx \log n / \log \log n$ makes sense and results in a graph problem that admits an $O(1)$-round quantum algorithm but requires $\Omega(\log n / \log \log n)$ rounds for any classical deterministic algorithm (the case of classical \emph{randomized} algorithms is more involved, but let us put it aside for now for the purposes of this informal overview). However, this is \textbf{not an LCL problem}. It is some labeling problem in which validity of a solution is locally verifiable, but it cannot be described with any finite set of valid neighborhoods; in particular, we will need super-constant degrees in the input graph.

Yet this is a useful source of inspiration for the present work. We aim at engineering a graph problem $\Pi$ that captures the spirit of ``what would happen if we take $\iterghz(\Delta)$ and plug in something like $\Delta \approx \log n / \log \log n$,'' and at the same time $\Pi$ is a genuine LCL problem in the strict sense of \cite{naor1995}, without any cheating.

\paragraph{Linearizing the iterated GHZ problem.}

The key new observation is that $\iterghz(\Delta)$ happens to have a very convenient property: for each player, the set of valid labels on the incident edge number $i$ only depends on the labels of the incident edge number $i-1$. In particular, there is a finite automaton that processes the labels of the incident edges in the order specified by the colors of the games and determines if a solution is valid from the player's perspective. This makes it possible to \textbf{linearize} the problem so that we can replace a player of degree $\Delta$ with a path of $\Delta$ nodes. This way we can (1)~eliminate high-degree nodes, (2)~preserve the property that we have a finite set of labels, and (3)~preserve the property that the feasibility of a solution can be verified locally.
\begin{center}
    \includegraphics[page=2]{figs.pdf}
\end{center}
Therefore we can define an LCL problem $\Pi_1$, where all nodes have low degrees, and a path with $\Delta$ white nodes behaves as if it was a degree-$\Delta$ player in the original $\iterghz(\Delta)$ problem. However, while doing this translation, we lose all control over the degrees of the players. What if the adversary constructs an input where there is a white path of length $\Theta(n)$? Even a quantum algorithm would have to spend $\Theta(n)$ rounds to just emulate the behavior of a single $\Theta(n)$-degree player, and hence there is no room for quantum advantage (as \emph{any} problem can be solved in $O(n)$ rounds by brute force).

We circumvent this challenge with a commonly-used solution: instead of a path we use a tree-like construction (a balanced binary tree where each level forms a path). In particular, this way the white nodes that represent a $\Delta$-degree player will be within distance $O(\log \Delta)$ from each other, and even in the worst case the overhead of simulating one such node is $O(\log n)$. We can furthermore add additional rules that ensure that if the adversary presents a graph that does not represent a valid instance, we can nevertheless produce a valid labeling in $O(\log n)$ rounds; let us call the resulting LCL problem $\Pi_2$.
\begin{center}
    \includegraphics[page=3]{figs.pdf}
\end{center}

\paragraph{Padding.}
Unfortunately, $\Pi_2$ as constructed above does not admit any distributed quantum advantage. To construct a hard instance for classical algorithms, we would need to construct a $(\Delta,r)$-biregular tree of depth $\Omega(\Delta)$, and then replace each degree-$\Delta$ node with our tree-like construction. But we will have to use $\Delta = O(\log n / \log \log n)$ so that we do not run out of nodes. To construct a worst-case instance, we would now take an $O(\Delta)$-deep $(\Delta,r)$-biregular tree and replace each white degree-$\Delta$ node with a tree-like construction of diameter $O(\log \Delta)$. This replacement process increases the diameter of the graph from $O(\Delta)$ to $O(\Delta) \cdot O(\log \Delta)$. But the diameter of the entire graph would be still bounded by $O(\Delta) \cdot O(\log \Delta) = O(\log n)$, and a classical algorithm could solve it trivially in $O(\log n)$ rounds by gathering everything. On the other hand, a quantum algorithm needs $\Omega(\log n)$ rounds to merely simulate one player in the worst case, so we gained nothing.

We circumvent this issue by borrowing the idea of \textbf{padding} from \cite{balliu20lcl-randomness}. In essence, we replace each edge between a player (white node) and a game (black node) with another tree-like construction, where the adversary gets to choose the height of the tree.
\begin{center}
    \includegraphics[page=4]{figs.pdf}
\end{center}
It turns out that this is enough to construct an LCL problem $\Pi$ that exhibits a distributed quantum advantage! In essence, the reason for this is as follows:
\begin{itemize}
    \item Quantum algorithms: In the worst case the adversary can add padding of depth $O(\log n)$, and the total overhead that we have in simulating the original $O(1)$-round quantum algorithm is $O(\log n)$ rounds to simulate the activities of the players plus $O(\log n)$ rounds to transmit information across a padded edge, in total $O(\log n)$ rounds.
    \item Classical algorithms: The adversary can select a suitable super-constant $\Delta$ and use trees of height $O(\log \Delta)$ to represent players and trees of height $O(\log n)$ for padding. We can start with a $\iterghz(\Delta)$-instance with e.g.\ $\Theta(n^{1/3})$ players, and we can then afford to use e.g.\ $\Theta(n^{1/3})$ nodes for each tree that we use in padding. This results in an instance in which the algorithm has to see beyond $\Delta$ padded edges, in total up to distance $\Omega(\Delta \log n)$, which is worse than the complexity of the quantum algorithm for a super-constant $\Delta$.
\end{itemize}
This is the essence of the construction. We give a more formal high-level overview in \cref{sec:high-level-lcl} and the technical details are postponed to \cref{sec:lcl-details}.

\paragraph{Generalizations and future work.}
We note here that exactly the same framework is applicable to a wide range of problems beyond $\iterghz(\Delta)$, as we are only making essential use of the fact that the validity of a solution for high-degree nodes can be verified with a finite automaton that processes edges in a sequential order, and hence we can linearize the problem. In particular, the widely-studied maximal matching problem has the same property.

Therefore we expect that the same construction will find use in engineering LCL problems that separate other models. For example, the maximal matching problem would be a good candidate for proving a separation between the quantum-LOCAL model and the SLOCAL model \cite{ghaffari2017}.

\subsection{Key ideas in the proof of Theorem~\ref{thm:intro:sim}}\label{ssec:intro-sim-ideas}

To prove \cref{thm:intro:sim}, we first take a step from distributed quantum algorithms to bounded-dependence distributions, and then observe that while such distributions \emph{cannot} be efficiently sampled in the randomized-LOCAL model, we can nevertheless do a combination of \emph{partial sampling and brute-force completion} to solve the same LCL problem.

\paragraph{Bounded-dependence distributions.}
The statement of \cref{thm:intro:sim} refers to the quantum-LOCAL model, but it will be much more convenient to work in the bounded-dependence model \cite{akbari2024}, which is a generalization of finitely-dependent distributions \cite{holroyd2016,holroyd2018,holroyd2024}.

Informally, a $T$-dependent distribution is a probability distribution $P$ over labelings of some graph $G$ where the following holds: if we choose two sets of nodes $X$ and $Y$ such that their $T$-radius neighborhoods do not overlap, then $P$ restricted to $X$ is independent of $P$ restricted to $Y$.
\begin{center}
    \includegraphics[page=5]{figs.pdf}
\end{center}
While it is easy to see that the output of a $T$-round randomized-LOCAL algorithm results in a $T$-dependent distribution, it also holds that the output of a $T$-round quantum-LOCAL algorithm results in a $T$-dependent distribution. Hence it suffices to show that if we have a $T$-dependent distribution $P$ over valid solutions of some LCL problem $\Pi$, we can construct a randomized-LOCAL algorithm that solves the same problem in $\tilde O(\sqrt{nT})$ rounds.

\paragraph{Sampling is hard.}
It would be tempting to now design a randomized-LOCAL algorithm that produces a sample from distribution $P$. However, this is \emph{not} possible. Indeed, if we could sample efficiently from any $T$-dependent distribution, we would also have a classical algorithm that efficiently simulates the distributed quantum algorithm from \cite{legall2019}, which is known to require $\Omega(n)$ rounds.

\paragraph{Partial sampling is easy.}
We will exploit the clustering algorithm from \cite{coiteuxroy2023}. In particular, with it we can partition nodes into well-separated clusters where the diameter of each cluster is $\tilde O(\sqrt{nT})$, the number of unclustered leftover nodes is $O(\sqrt{n/T})$, and clusters can be constructed in $\tilde O(\sqrt{nT})$ rounds in the randomized-LOCAL model.
\begin{center}
    \includegraphics[page=6]{figs.pdf}
\end{center}
We will split the leftover nodes into components that are at distance $\Omega(T)$ from each others; in the worst case all leftover nodes are in the same component and it will have diameter $D = O(\sqrt{nT})$.
\begin{center}
    \includegraphics[page=7]{figs.pdf}
\end{center}
Now if $X$ and $Y$ are two components of leftover nodes, they are sufficiently far from each other so that $P$ restricted to $X$ is independent of $P$ restricted to $Y$. Hence we can sample from $P$ independently in each such component, and the joint distribution will be a sample from $P$ restricted to all leftover nodes.

To recap, so far we have used $\tilde O(\sqrt{nT})$ rounds to compute a clustering, and another $O(\sqrt{nT})$ rounds to sample from $P$ in each component of leftover nodes. All leftover nodes have now committed to an output.
\begin{center}
    \includegraphics[page=8]{figs.pdf}
\end{center}

\paragraph{Completability helps us next.}
Now it would be tempting to switch to the clusters and locally sample from $P$ conditioned on what has already been sampled, with the hope of globally getting a sample from $P$.
But as we discussed above, this \emph{cannot work}, as it would let us simulate any quantum algorithms too efficiently in the classical models.

This is the point in which we will have to use the assumption that $P$ is an output distribution that solves LCL problem~$\Pi$ with high probability.
In the previous step, we have produced a sample $S$ from $P$ restricted to the leftover nodes.
Equivalently, we can imagine that $S$ is constructed by the following process: we have first sampled a labeling $S^*$ of the entire graph from $P$, and then discarded the labels inside the clusters (keeping only the values of the leftover nodes).
Now with high probability $S^*$ is a valid solution to $\Pi$, so with high probability our partial solution $S$ can be \textbf{extended} to some globally valid solution of $\Pi$.
Furthermore, we can find such a completion independently for each cluster, e.g., simply by brute force: collect the entire cluster and the partial solution at its boundaries and find a valid completion.
Here we exploit the fact that $\Pi$ is a locally checkable problem: any completion that is locally valid in each cluster is also globally valid.
As the clusters had diameter $\tilde O(\sqrt{nT})$, this step takes also $\tilde O(\sqrt{nT})$ rounds, and the total running time is also $\tilde O(\sqrt{nT})$.

Put together, given a $T$-round quantum-LOCAL algorithm for some LCL problem $\Pi$, we can construct a randomized-LOCAL algorithm that solves the same problem in $\tilde O(\sqrt{nT})$ rounds. We will give the details of the proof in \cref{sec:quantum-advantage}.

\paragraph{Generalizations and future work.}
As we took a route through bounded-dependence distributions, we immediately get a stronger result: $T$-dependent distributions over LCLs can be simulated in randomized-LOCAL in $\tilde O(\sqrt{nT})$ rounds. In particular, this shows that there cannot exist an LCL that admits a finitely dependent distribution (i.e., $O(1)$-dependent distribution) but requires $\Omega(n)$ rounds to solve with classical algorithms---previously this was only known in trees \cite{akbari2024,dhar24rand}.

Our general proof strategy can be pushed further, towards stronger models. It works directly in the \emph{com\-ponent-wise version} of the online-LOCAL model \cite{akbari_et_al:LIPIcs.ICALP.2023.10,akbari2024}. Furthermore, online-LOCAL algorithms with locality $O(1)$ can be turned into component-wise algorithms \cite{akbari2024}, and hence we also get a result that puts limits on the gap between the online-LOCAL and classical LOCAL models. Our simulation result is also close to the best possible, as a much stronger result would contradict with the known bounds for the problem of $3$-coloring bipartite graphs, which is a problem that can be solved with locality $O(\log n)$ in (component-wise) online-LOCAL yet requires $\Omega(\sqrt n)$ rounds in the classical randomized-LOCAL model \cite{akbari_et_al:LIPIcs.ICALP.2023.10,coiteuxroy2023,akbari2024,chang23_tight_arxiv,brandt17grid-lcl}.
We will discuss the extension of our result to component-wise \olocal more in \cref{app:reduction}.

\section{Preliminaries}\label{sec:preliminaries}

\paragraph{Graphs.}

We work with simple undirected graphs unless otherwise specified.
Let \(G = (V,E)\) be any graph.
If the set of nodes \(V\) and the set of edges \(E\) are not specified, we refer to them by the notation \(V(G)\) and \(E(G)\), respectively.
For any subset of nodes \(A \subseteq V\), the subgraph of \(G\) induced by \(A\) is denoted by \(G[A]\).
The distance between any two nodes \(u,v \in V\) is the number of edges composing a shortest path between \(u\) and \(v\), and is denoted by \(\dist_G(u,v)\).
The notion of distance can be easily extended to subset of nodes: 
Given any node \(u \in V\) and any two subsets \(A,B \subseteq V\), the distance between \(u\) and \(A\) is \(\dist_G(u,A) = \min_{v \in A}\{\dist_G(u,v)\}\) and the distance between \(A\) and \(B\) is \(\dist_G(A,B) = \min_{u \in A, v \in B}\{\dist_G(u,v)\}\).
When the graph is clear from the context, we omit the suffix and write only \(\dist()\) instead of \(\dist_G()\).

For any non-negative integer \(T\), the radius-\(T\) (closed) neighborhood of a node \(u\) in a graph \(G\) is the set \(\neighborhood_T[u] = \left\{v \in V \st \dist(u,v) \le T \right\}\).
More in general, the radius-\(T\) neighborhood of any subset of node \(A \subseteq V\) is \(\neighborhood_T[A] = \cup_{u \in A} \neighborhood_T[u]\).
For any graph \(G = (V,E)\) and any subset of nodes \(A \subseteq V\), the subgraph of \(G\) induced by \(A\) is denoted by \(G[A]\).

We adopt the nomenclature \emph{half-edge} to refer to any pair \((v,e)\) where \(v \in V\) and \(e \in E\) is an edge incident to \(v\).
In order to be able to define the class of problems we consider, we need to define what a labeled graph is.

\begin{definition}[Labeled graph]\label{def:preliminaries:labeled-graph}
    Let \(\VV\) and \(\EE\) be sets of labels.
    A graph \(G = (V,E)\) is said to be \((\VV,\EE)\)-\emph{labeled} if the following statements are satisfied:
    \begin{enumerate}[noitemsep]
        \item Each node \(v \in V\) is assigned a label from \(\VV\);
        \item Each half-edge \((v,e) \in V \times E\) satisfying $v \in e$ is assigned a label from \(\EE\).
    \end{enumerate}
\end{definition}

We also define the notion of \emph{centered graph}, that is, a graph with a distinguished node that we call the \emph{center}.

\begin{definition}[Centered graph]
    Let \(H = (V,E)\) be any graph, and let \(v_H \in V_H\) be any node of \(H\).
    The pair \((H,v_H)\) is said to be a \emph{centered graph} and \(v_H\) is said to be the \emph{center} of \((H,v_H)\).
\end{definition}

We will focus on problems that ask to produce labelings over graphs that satisfy some constraints in all radius-\(r\) neighborhoods of any node.
We remind the reader that the \emph{eccentricity} of a node \(v\) of a graph \(H\) is the maximum distance between \(v\) and any other node of \(H\).

\begin{definition}[Set of constraints]\label{def:preliminaries:set-of-constraints}
    Let \(r, \maxDeg \in \nats\) be constants, \(I\) a finite set of indices, and
    \((\VV,\EE)\) a tuple of finite label sets.
    Let \(\CC\) be a finite set of centered graphs \(\{(H_i,v_{H_i})\}_{i \in I}\) where each \(H_i\) is a \((\VV,\EE)\)-labeled graph of degree at most \(\maxDeg\) and \(v_{H_i}\) has eccentricity at most \(r\).
    Then, \(\CC\) is said to be an \((r,\maxDeg)\)-\emph{set of constraints} over \((\VV,\EE)\).
\end{definition}

With the following definition, we explain what it means to satisfy a set of constraints.

\begin{definition}[Labeled graph satysfying a set of constraints]\label{def:preliminaries:constraint-satisfaction}
    Let \(G = (V,E)\) be a \((\VV,\EE)\)-labeled graph for some finite sets of labels \(\VV\) and \(\EE\).
    Let \(\CC\) be an \((r,\maxDeg)\)-set of constraints over \((\VV,\EE)\) for some finite \(r,\maxDeg \in \nats\).
    The graph \(G\) satisfies \(\CC\) if the following statement is satisfied:
    \begin{itemize}
        \item For every node \(v \in V\), the \((\VV,\EE)\)-labeled graph \(G[\neighborhood_r[v]]\) is such that \((G[\neighborhood_r[v]],v) \in \CC\).
    \end{itemize}
\end{definition}

We are now ready to define the class of problems of interest.

\begin{definition}[Locally checkable labeling (LCL) problems]\label{def:preliminaries:lcl-problems}
    Let \(r, \maxDeg \in \nats\) be constants.
    A locally checkable labeling (LCL) problem \(\problem\) is a tuple \((\Vin, \Ein, \Vout, \Eout, \CC)\) where \(\Vin\), \(\Ein\), \(\Vout\), and \(\Eout\) are finite label sets and \(\CC\) is an \((r,\maxDeg)\)-set of constraints over \((\Vin \times \Vout, \Ein \times \Eout)\).
\end{definition}

We now explain what it means to \emph{solve} an LCL problem \(\problem\).
We are given as input a \((\Vin,\Ein)\)-labeled graph \(G = (V,E)\).
For any node \(v \in V\) and half-edge \((v,e) \in V \times E\), let us denote the input label of \(v\) by \(\inptLbl(v)\), and the input label of \((v,e)\) by \(\inptLbl((v,e))\). We want to produce an output labeling on \(G\) so that we obtain a \((\Vout,\Eout)\)-labeled graph.
Let us denote by \(\outLbl(v)\) and by \(\outLbl((v,e))\) the output labels of \(v\) and \((v,e)\), respectively.
Consider the \((\Vin \times \Vout, \Ein \times \Eout)\)-labeled graph \(G\) where the labeling is defined as follows: the label of each node \(v\) is \(\lbl(v) = (\inptLbl(v),\outLbl(v))\), and the label of each half-edge \((v,e)\) is \(\lbl((v,e)) = (\inptLbl((v,e)),\outLbl((v,e)))\).
Now, the \((\Vin \times \Vout, \Ein \times \Eout)\)-labeled graph \(G\) must satisfy the \((r,\maxDeg)\)-set of constraints \(\CC\) over \((\Vin \times \Vout, \Ein \times \Eout)\) according to \cref{def:preliminaries:constraint-satisfaction}.

\paragraph{The \local model.} 
In the \local model of computing,
we are given a distributed system of \(n\) processors/nodes.
The processors are connected through a communication network that is modeled by a graph \(G = (V,E)\). 
Computation proceeds in synchronous rounds: in each round, nodes simultaneously exchange messages with their neighbors, perform some local computation, and update their states variables based on the received messages.
Note that communication bandwidth is unconstrained (i.e., messages can be of any size) and nodes are capable of arbitrary local computation, and it is assumed that all processes are fault-free and messages cannot be corrupted.
At the beginning of computation, all processors are identical and start with identical copies of the same state variables,
except for a distinguished local variable \(\localVar(v)\) which stores input data for node \(v\).
Input data for a node \(v\) encodes the number \(n\) of nodes in the network, a unique identifier from the set \([n^c] = \{1,2,\ldots,n^c\}\)  where \(c \ge 1\) is a fixed constant, and possible inputs defined by the problem of interest (we assume nodes store both input node labels and input half-edge labels).
If computation is randomized, \(\localVar(v)\) also encodes a private, infinite random bit string that is independent of the random bits of all other nodes.
In this case, we refer to the model as the \emph{randomized \local model} as opposed to the \emph{deterministic \local model}.
The goal of the computation is to assign to each node \(v\) an output label \(\outLbl(v)\) and computation ends when all nodes have decided on their output labels.
The running time of an algorithm is the number of communication rounds required to solve a problem.
If computation is randomized, we also ask that the algorithm solves the problem of interest with probability at least \(1 - 1/\poly(n)\), where \(\poly{n}\) is any polynomial function in \(n\).
Since computation and message size is unbounded, we can look at \(T\)-rounds \local algorithms as functions mapping radius-\(T\) neighborhoods of nodes to output labels (in the deterministic case) or probability distributions over output labels (in the randomized case).
That is why we refer to the parameter \(T\) as the \emph{locality} of the algorithm.

\paragraph{The \qlocal model.}
The \qlocal of computing is similar to the deterministic LOCAL
model above, where we replace the classical processors with quantum processors and the classical communication links with quantum communication links. More precisely, the quantum
processors can manipulate local states consisting of an unbounded number of qubits with arbitrary unitary transformations, while the communication links are quantum communication channels, and the local outputs can be the result of any quantum measurement.
As in the deterministic \local model, adjacent nodes can exchange any number of
qubits.
We refer to the randomized \local model of computing as \emph{classical \local}, as opposed to the \qlocal model of computing.
Akin to the classical \local model, also in \qlocal we ask that a problem is solved with probability at least \(1-1/\poly(n)\).
This is just an intuitive definition of the model which is sufficient for our purposes, but the interested reader can find  a formal definition in \cite{gavoille2009}. %
\section{Quantum advantage: high level ideas}\label{sec:high-level-lcl}
In this section, we provide a high-level explanation of how the LCL problem $\Pi$ that exhibits quantum advantage is defined, and we explain why its quantum and classical complexities differ. All the formal details are deferred to \Cref{sec:lcl-details}.

\paragraph{Proper instances.} In \Cref{ssec:key-ideas-separation}, we explained how we would like a graph to be in order to have an instance that is easy for a quantum algorithm but hard for any classical algorithm. We call the family of graphs explained in \Cref{ssec:key-ideas-separation} \emph{proper instances}, i.e., these graphs are graphs that can be obtained by starting from an arbitrary (not necessarily connected) graph and then replacing each node, and each edge, with a tree-like construction.

\paragraph{LCLs must be promise-free.} However, a standard LCL must be defined on \emph{any} graph, that is, it must not require the promise that the given input graph satisfies some specific conditions. That is, our LCL problem $\Pi$ must be well-defined even in graphs that do not look at all like proper instances. In particular, we want a problem that, on any graph, even on those that look completely different from a proper instance, should still be solvable in $O(\log n)$ rounds by a quantum algorithm.

\paragraph{Overview of the problem \boldmath $\Pi$.} On a high level, our (promise-free) LCL problem $\Pi$ will be defined such that it requires solving two problems, called $\Pi^{\lbadgraph}$ and $\Pi^{\lpromise}$.
More in detail, in $O(\log n)$ classical deterministic rounds, nodes can produce a labeling that marks \emph{invalid} parts of the graph, that is, parts of the graph that do not look like subgraphs of a proper instance. On the other hand, the problem will be defined such that marking as invalid the parts of the graph that are actually valid is not possible. We will formally define this labeling as an LCL called $\Pi^{\lbadgraph}$. Then, the problem $\Pi$ will require to solve some other problem $\Pi^{\lpromise}$ in parts of the graph that are not marked.

\paragraph{How we define \boldmath $\Pi^{\lpromise}$.} The problem $\Pi^{\lpromise}$ will satisfy the following properties.
\begin{itemize}[noitemsep]
    \item It is an LCL \emph{with promise}, that is, it is guaranteed that the input graph is a proper instance.
    \item In proper instances, the complexity is $O(\log n)$ for quantum algorithms but $\omega(\log n)$ for classical randomized algorithms.
\end{itemize}

\paragraph{How we define \boldmath $\Pi^{\lbadgraph}$.}
The problem $\Pi^{\lbadgraph}$ will satisfy the following properties.
\begin{itemize}[noitemsep]
    \item This problem is \emph{promise-free}, that is, any input graph is allowed.
    \item There is a special output label for $\Pi^{\lbadgraph}$ called $\bot$.
    \item In $O(\log n)$ classical deterministic rounds it is possible to produce a solution for $\Pi^{\lbadgraph}$ such that the subgraph induced by nodes labeled $\bot$ is a proper instance.
\end{itemize}

\paragraph{How we define \boldmath $\Pi$.}
Our LCL problem $\Pi$ will then be defined as follows.
\begin{itemize}[noitemsep]
    \item Nodes need to solve $\Pi^{\lbadgraph}$.
    \item On the subgraph induced by nodes that output $\bot$ for $\Pi^{\lbadgraph}$, nodes need to solve $\Pi^{\lpromise}$.
\end{itemize}

\paragraph{Quantum upper bound.}
Then, we will show that our problem $\Pi$ can be solved in $O(\log n)$ quantum rounds, as follows.
\begin{itemize}
    \item Nodes spend $O(\log n)$ classical deterministic rounds to mark invalid parts of the graph, guaranteeing that the subgraph induced by nodes labeled $\bot$ is a proper instance.
    \item Nodes spend $O(\log n)$ quantum rounds to solve $\Pi^{\lpromise}$ in the subgraph induced by nodes labeled $\bot$.
\end{itemize}

\paragraph{Classical lower bound.}
Finally, we will show that our problem $\Pi$ requires $\omega(\log n)$ classical randomized rounds, as follows.
\begin{itemize}
    \item As a lower bound graph, we consider a graph that is a proper instance. This implies that, on this graph, any solution for $\Pi^{\lbadgraph}$ must label all nodes $\bot$.
    \item The (proper) instance is constructed as follows. Let $G$ be a lower bound graph for $\iterghz(\Delta)$, where $\Delta$ is an appropriately chosen degree, as a function of $n$. We replace nodes and edges with tree-like constructions, as follows.
    \begin{itemize}
        \item Tree-like constructions used to replace \emph{nodes} will have an appropriate height to guarantee that they have $\Delta$ leaves.
        \item Tree-like constructions used to replace \emph{edges} will have height $\Theta(\log n)$.
    \end{itemize}
    \item We show that if $\Pi$ can be solved in $O(\log n)$ randomized classical rounds on these graphs, then we reach a contradiction with the lower bound for $\iterghz(\Delta)$. For this purpose, we will exploit the fact that, informally, each edge has been stretched by a $\Theta(\log n)$ factor.
\end{itemize}

\subsection{The tree-like gadget}
A basic building block for our construction is a structure called \emph{tree-like gadget}, that has already been used in different ways in several works related to LCLs (see, e.g., \cite{balliu20lcl-randomness,balliu2024shared-randomness,balliu21lcl-congest}). An example of tree-like gadget is shown in \Cref{fig:treelike}. In \Cref{sec:tree-like gadget}, we will formally define tree-like gadgets and state their properties.

\paragraph{Local checkability of tree-like gadgets.}
Informally, a tree-like gadget is a perfect binary tree in which all nodes that are at the same depth are connected via a path. It is known from previous work that this structure is \emph{locally checkable}, that is, there exist a set of labels $\Sigma^{\ltreelike}$ and a set of constraints $C^{\ltreelike}$ (checkable by inspecting the $O(1)$-radius neighborhood of each node) satisfying the following properties.
\begin{itemize}
    \item Any tree-like gadget can be labeled with labels from  $\Sigma^{\ltreelike}$  such that the constraints $C^{\ltreelike}$ are satisfied on all nodes.
    \item For any connected graph that is not a tree-like gadget, and for any labeling from $\Sigma^{\ltreelike}$, there exists at least one node for which the constraints $C^{\ltreelike}$ are not satisfied.
\end{itemize}

\begin{figure}[t]
	\centering
	\includegraphics[page=1,width=0.7\textwidth]{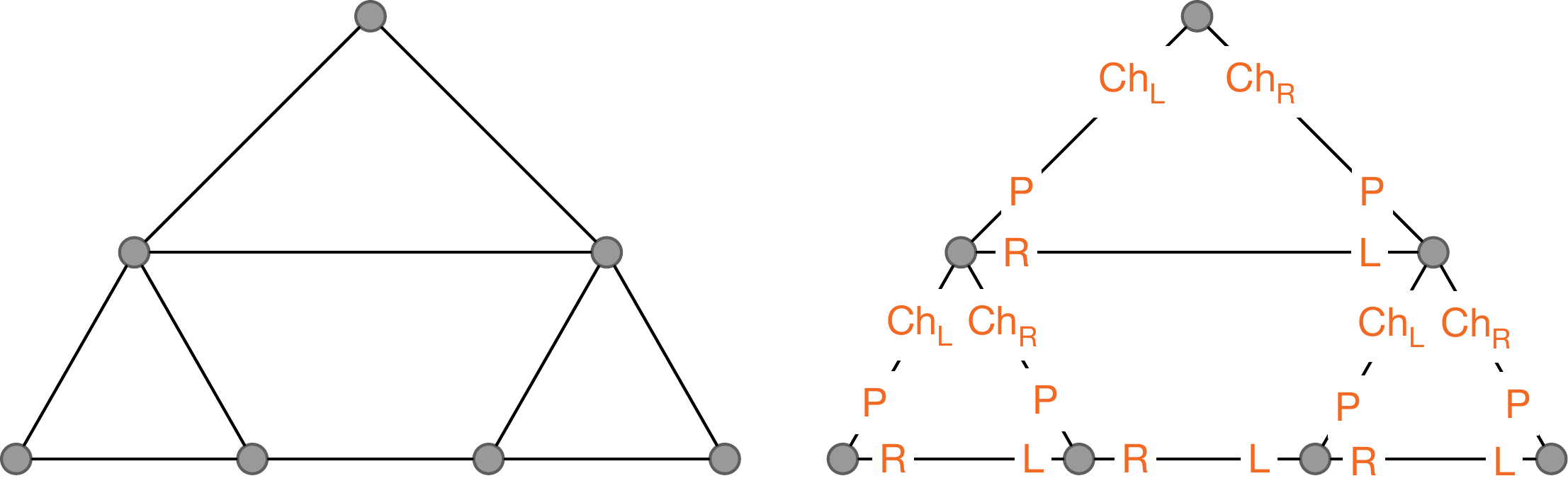}
	\caption{On the left, a tree-like gadget. On the right, a tree-like gadget labeled with labels from $\Sigma^{\ltreelike}$ such that the constraints $\CC^{\ltreelike}$ are satisfied.}
	\label{fig:treelike}
\end{figure}

\paragraph{The LCL problem \boldmath $\Pi^{\lbadtree}$.}
While local checkability is a very useful property, in order to define our problem we need an even stronger property: if there is an error \emph{somewhere in the graph} we want \emph{all nodes to be able to quickly detect the error}. In previous works, it is shown that tree-like gadgets indeed satisfy this stronger property. More in detail, there exists an LCL problem $\Pi^{\lbadtree}$ satisfying the following properties.
\begin{itemize}
    \item There is a special output label for $\Pi^{\lbadtree}$, called $\bot$.
    \item For any connected graph $G$, there exists a solution for $\Pi^{\lbadtree}$ satisfying that, if $G$ is not a tree-like gadget, then no node of $G$ is labeled $\bot$.
    \item Such a solution can be computed in $O(\log n)$ deterministic classical rounds.
\end{itemize}
On a high level, the problem $\Pi^{\lbadtree}$ is defined as follows.
\begin{itemize}
    \item Labels from $\Sigma^{\ltreelike}$ are an input for the problem $\Pi^{\lbadtree}$.
    \item Nodes that do not satisfy the constraints $C^{\ltreelike}$ can directly output an error.
    \item It is possible to prove that, if there is an error somewhere, all other nodes must see an error within distance $O(\log n)$, and $\Pi^{\lbadtree}$ allows outputting \emph{pointers} to prove that there is an error somewhere.
    \item The constraints on the output of $\Pi^{\lbadtree}$ are carefully defined so that, on tree-like gadgets that are properly labeled with labels from $\Sigma^{\ltreelike}$ (i.e., the constraints $C^{\ltreelike}$ are satisfied on all nodes), \emph{no cheating is allowed}, that is, it is not possible to output pointers in some carefully crafted way that gives a valid output for $\Pi^{\lbadtree}$.
    \item Nodes that do not output errors or pointers must output $\bot$.
\end{itemize}
For reasons that will become clear later, nodes will have an additional input for $\Pi^{\lbadtree}$, which is either $0$, i.e., unmarked nodes, or $1$, i.e., marked nodes, and $\Pi^{\lbadtree}$ is defined such that marked nodes are always allowed to output an error.
An example of valid solution of $\Pi^{\mathrm{badtree}}$ is depicted in \Cref{fig:badtree-solution}.
\begin{figure}[t]
	\centering
	\includegraphics[page=2, width=0.4\textwidth]{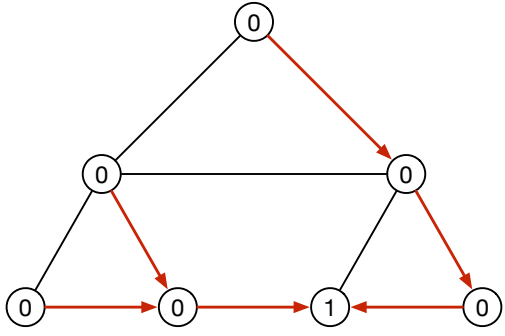}
	\caption{An example of solution of $\Pi^{\lbadtree}$ in which, since one node is marked (i.e., it has input $1$), all nodes output something different from $\bot$ (i.e., they all output \emph{pointers}). Informally, the problem $\Pi^{\lbadtree}$ allows outputting pointers, and in order to not allow cheating (i.e., by creating local cycles with pointers), some specific sequences of pointers are forbidden in the definition of $\Pi^{\lbadtree}$. For example, a pointer going up or down is not allowed to appear after a pointer going left or right, and pointing to the left child is always forbidden.}
	\label{fig:badtree-solution}
\end{figure}

\paragraph{How we will use \boldmath $\Pi^{\lbadtree}$.}
We will define a more complex gadget, called \emph{octopus}, which consists of many tree-like structures connected in a specific way. We will define a problem $\Pi^{\lbadgadget}$ that allows to mark invalid octopus gadgets. In the definition of $\Pi^{\lbadgadget}$, we will use $\Pi^{\lbadtree}$ as a subproblem. Informally, our final problem $\Pi$ will satisfy that, if the adversary really wants nodes to solve $\Pi^{\lpromise}$ (i.e., to make the problem non-trivial), then it needs to properly label tree-like gadgets and octopus gadgets, so that the output for the problems $\Pi^{\lbadtree}$ and $\Pi^{\lbadgadget}$ needs to be $\bot$ everywhere.

\subsection{The octopus gadget}
In \Cref{sec:octopus gadget}, we will formally define the notion of \emph{octopus gadget}, and we will prove some useful properties of this object.
Informally, an octopus gadget is a graph that can be obtained as follows. Let $\Delta$ be an integer power of $2$. Start with a tree-like gadget $G_0$ that has $\Delta$ leaves, and with additional $\Delta$ tree-like gadgets $G_1,\ldots, G_\Delta$. For each $1 \le i \le \Delta$, add an edge between the $i$-th leaf of the gadget $G_0$ and the root of the gadget $G_i$. The tree-like gadget $G_0$ is called \emph{head} gadget, while all the other tree-like gadgets are called \emph{port} gadgets. The formal definition of octopus gadget in \Cref{sec:octopus gadget} will be more general: to each leaf of the head gadget we will connect either $1$ or $2$ port gadgets. In this way, we will be able to handle also cases in which $\Delta$ is not a power of $2$.
An example of octopus gadget is depicted in \Cref{fig:octopus-labeled}.

\begin{figure}[t]
	\centering
	\includegraphics[page=3,width=0.5\textwidth]{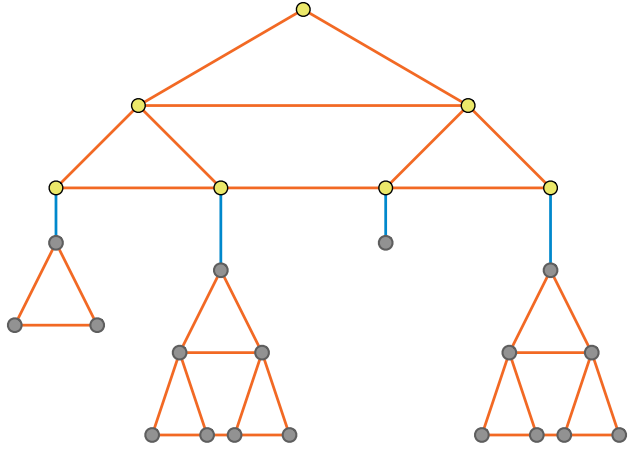}
	\caption{An example of octopus gadget. The connected component induced by yellow nodes and their incident orange edges is the head gadget. Each connected component induced by gray nodes and their incident orange edges is a port gadget. Blue edges properly connect these tree-like gadgets to create a proper octopus gadget. Nodes and edges will be labeled such that this structure is locally checkable. For example, blue edges will have special labels to denote that they do not belong to the tree-like gadgets.}
	\label{fig:octopus-labeled}
\end{figure}

\paragraph{Local checkability of octopus gadgets.}
We will prove that, similarly to the case of tree-like gadgets, also octopus gadgets are locally checkable. More in detail, we will prove that there exists a set of input labels $\Sigma^{\lgadget}$ and a set of constraints $C^{\lgadget}$ over these labels
satisfying the following properties.
\begin{itemize}
    \item Any octopus gadget can be labeled with labels from  $\Sigma^{\lgadget}$  such that the constraints $C^{\lgadget}$ are satisfied on all nodes.
    \item For any connected graph that is not an octopus gadget, and for any labeling from $\Sigma^{\lgadget}$, there exists at least one node for which the constraints $C^{\lgadget}$ are not satisfied.
\end{itemize}
In order to define $\Sigma^{\lgadget}$ and $C^{\lgadget}$, we will extend $\Sigma^{\ltreelike}$ and $C^{\ltreelike}$, as follows.
\begin{itemize}
    \item Each edge will be additionally labeled to denote whether the edge is part of a tree-like gadget, i.e., \emph{internal edge}, or if it is an edge connecting different tree-like gadgets, i.e., \emph{external edge}.
    \item Each node will be additionally labeled to denote whether it is part of the head gadget or if it is part of a port gadget. 
    \item $C^{\lgadget}$ will be defined such that, in the subgraph induced by internal edges, the constraints $C^{\ltreelike}$ must hold, and such that nodes belonging to the same tree-like gadget must either be all labeled head of all labeled port.
    \item $C^{\lgadget}$ will contain additional constraints to make sure that external edges are properly connected.
\end{itemize}

\paragraph{The LCL problem \boldmath $\Pi^{\lbadgadget}$.}
As in the case of tree-like gadgets, we need the stronger property that, if there is some error somewhere, then all nodes are able to prove this fact in $O(\log n)$ deterministic classical rounds, such that no cheating is allowed (i.e., in properly labeled octopus gadgets, the output must be $\bot$ everywhere). For this purpose, we define an LCL problem $\Pi^{\lbadgadget}$. On a high level, this problem is defined by requiring nodes to solve  $\Pi^{\lbadtree}$ \emph{multiple times}, each time having potentially more \emph{marked} nodes (and hence allowed to directly output \emph{error}). More in detail, we define $\Pi^{\lbadgadget}$ as follows.
\begin{itemize}
    \item Nodes that do not satisfy some of the constraints of $C^{\lgadget}$ are \emph{marked}.
    \item Nodes need to solve $\Pi^{\lbadtree}$ in the subgraph induced by internal edges. Observe that it could be the case that some node is in a valid tree-like gadget, but some constraint of $C^{\lgadget}$ is not satisfied on the node (e.g., because some external edge is wrongly connected). In this case, since the node is marked, it is allowed to produce an error when solving $\Pi^{\lbadtree}$. Let $O_1$ be the output labeling produced by the nodes at this point.
    \item Nodes that are labeled to be part of a \emph{head} gadget need to solve $\Pi^{\lbadtree}$ \emph{again}. This time, however, a node is marked also if it has an incident external edge connecting it to a node that did not output $\bot$ in $O_1$. Let $O_2$ be the output labeling produced by the nodes at this point.
    \item Nodes that are labeled to be part of a \emph{port} gadget need to solve $\Pi^{\lbadtree}$ \emph{again}. Similarly as before, this time, a node is marked also if it has an incident external edge connecting it to a node that did not output $\bot$ in $O_2$.
\end{itemize}
Thanks to our triple usage of $\Pi^{\lbadtree}$, we will be able to handle bad cases like the following. Assume that there is a broken port gadget connected to a valid head gadget, and all other port gadgets connected to the head gadget are valid. In this case, we want $\Pi^{\lbadgadget}$ to allow a solution in which no node outputs $\bot$. Indeed, $\Pi^{\lbadgadget}$  allows the following solution.
\begin{itemize}
    \item In the first instance of $\Pi^{\lbadtree}$, all nodes part of the broken port gadget can output something different from $\bot$, while all other nodes output $\bot$.
    \item In the second instance of $\Pi^{\lbadtree}$, the node of the head gadget that is connected to the broken gadget will be marked, and hence all nodes of the head gadget can output something different from $\bot$.
    \item In the third instance of  $\Pi^{\lbadtree}$, all the roots of the port gadgets will be marked, and hence all nodes can output something different from $\bot$.
\end{itemize}
In fact, we will prove that in \emph{all} graphs that are not a valid octopus gadget, nodes can compute a solution in $O(\log n)$ deterministic classical rounds such that no node outputs $\bot$, while in valid octopus gadgets, we still maintain the property that the only allowed output is $\bot$ everywhere. In \Cref{fig:invalid-octopus} is shown an example of a solution of $\Pi^{\lbadgadget}$ in a broken octopus gadget.

\begin{figure}[t]
	\centering
	\includegraphics[page=5,width=\textwidth]{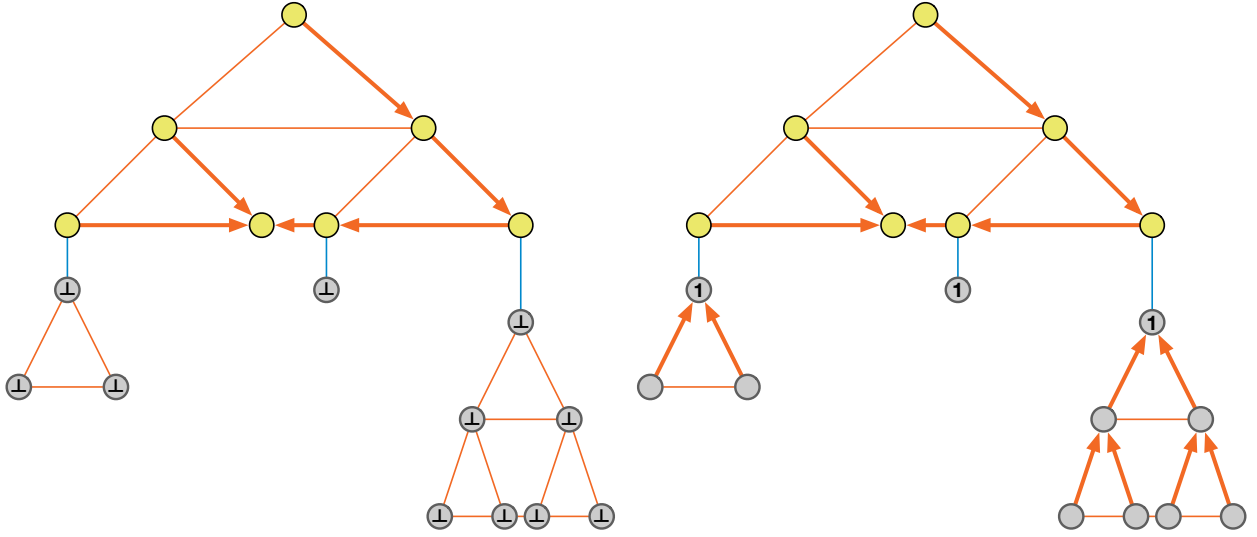}
	\caption{This graph contains a yellow node that has no children (i.e., no incident edges labeled $\llch$ and $\lrch$) and is missing a blue edge, i.e., this graph is not a valid octopus gadget. On the left it is shown the output of the nodes in the first instance of $\Pi^{\lbadtree}$, where port gadgets do not see any error and output $\bot$, but the nodes that are part of the broken head gadget prove that there is some error by using pointers. On the right it is shown the output of the nodes in the next instance of $\Pi^{\lbadtree}$, where now the roots of the port gadgets are marked, and hence all the nodes in the port gadgets can now prove that there is an error.}
	\label{fig:invalid-octopus}
\end{figure}

Similarly as in the case of $\Pi^{\lbadtree}$, also in the case of $\Pi^{\lbadgadget}$ nodes will have an additional input for $\Pi^{\lbadgadget}$, which is either $0$, i.e., unmarked nodes, or $1$, i.e., marked nodes, and $\Pi^{\lbadgadget}$ is defined such that nodes that are marked for $\Pi^{\lbadgadget}$ are also marked for $\Pi^{\lbadtree}$. We will use this fact later when defining $\Pi^{\lbadgraph}$.

\subsection{The family of proper instances}\label{sec:intro_proper_instances}
In \Cref{sec:graph family}, we will formally define the family of \emph{proper instances}. Informally, a proper instance is a graph $G'$ that can be obtained as follows. Let $G = (U,V,E)$ be a bipartite graph that is not necessarily connected, and such that it could contain parallel edges. Assume each node $u \in U$ has a degree that is an integer power of $2$ (we will get rid of this assumption, and only require $u$ to not have degree $0$, in \Cref{sec:graph family}). For each node $u \in U$, we put an octopus gadget $g(u)$ in $G'$, and for each node $v \in V$, we put a node $f(v)$ in $G'$. These latter type of nodes are called \emph{inter-octopus nodes}. Then, let $\{u,v\}$ be the $i$-th edge incident to $u$, for some given order. We add an edge, in $G'$, between the left-most leaf of the $i$-th port gadget of $g(u)$ and $f(v)$. We call these edges \emph{inter-octopus edges}. \Cref{fig:proper instance} shows an example of proper instance.

\begin{figure}[t]
	\centering
	\includegraphics[page=4,width=0.7\textwidth]{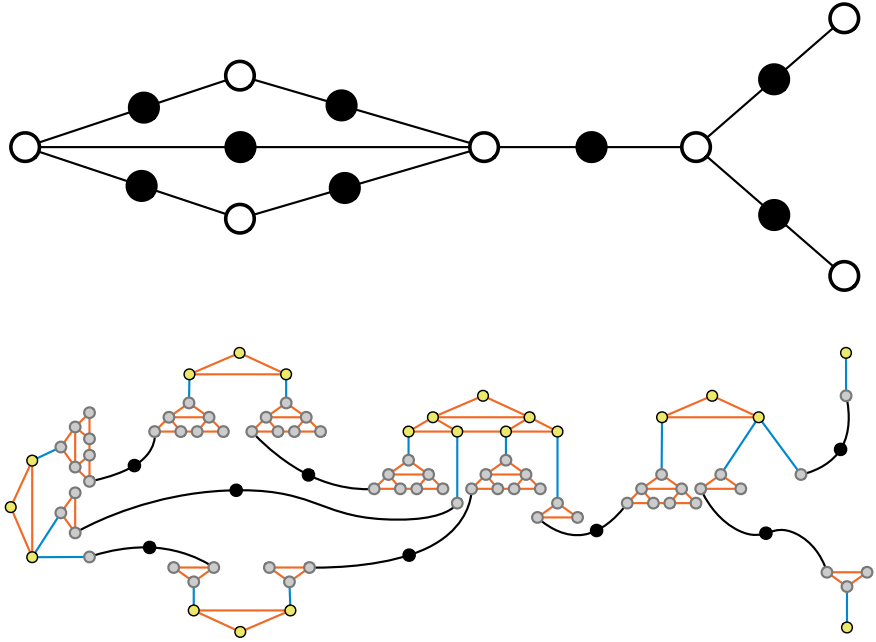}
	\caption{A graph $G$ (above) and a possible proper instance $G'$ (below) that can be constructed starting from $G$. Black nodes in $G'$ correspond to inter-octopus nodes, and black edges correspond to inter-octopus edges.}
	\label{fig:proper instance}
\end{figure}

\paragraph{Local checkability of proper instances.}
Similarly as in the case of tree-like gadgets and octopus gadgets, we will define a set of input labels $\Sigma^{\lproper}$ and a set of constraints $C^{\lproper}$ satisfying the following properties.
\begin{itemize}
    \item Any proper instance can be labeled with labels from  $\Sigma^{\lproper}$  such that the constraints $C^{\lproper}$ are satisfied on all nodes.
    \item For any connected graph that is not a proper instance, and for any labeling from $\Sigma^{\lproper}$, there exists at least one node for which the constraints $C^{\lproper}$ are not satisfied.
\end{itemize}
Similarly as in the case of octopus gadgets, in order to define $\Sigma^{\lproper}$ and $C^{\lproper}$, we will extend $\Sigma^{\lgadget}$ and $C^{\lgadget}$. For example, we will use additional labels for labeling inter-octopus edges. These edges will have additional constraints, to ensure that they are properly connected to octopus gadgets and to inter-octopus nodes.

\paragraph{The LCL problem \boldmath $\Pi^{\lbadgraph}$.}
Similarly as in the case of tree-like gadgets and octopus gadgets, we would like nodes to be able to quickly prove that the graph is not a proper instance, if that is the case. Unfortunately, this seems to be not possible. However, a weaker guarantee is sufficient. We will define a problem $\Pi^{\lbadgraph}$ satisfying the following properties.
\begin{itemize}
    \item There is a special output label for $\Pi^{\lbadgraph}$, called $\bot$.
    \item For any graph $G$, there exists a solution for $\Pi^{\lbadgraph}$ satisfying that the subgraph induced by nodes labeled $\bot$ is a proper instance.
    \item Such a solution can be computed in $O(\log n)$ deterministic classical rounds.
\end{itemize}
This problem will be defined by first marking nodes that do not satisfy $C^{\lproper}$, and then requiring the nodes to solve $\Pi^{\lbadgadget}$ on the subgraph induced by nodes that are not inter-octopus. Moreover, there will be a special output allowed for inter-octopus nodes that do not satisfy $C^{\lproper}$.

\subsection{The LCL problem \texorpdfstring{\boldmath $\Pi$}{Pi}}
We will formally define our LCL problem $\Pi$ in \Cref{sec:lcl-problem}. Informally, it is defined as follows.
\begin{itemize}
    \item The input given to $\Pi$ becomes an input for the problem $\Pi^{\lbadgraph}$.
    \item Nodes need to solve $\Pi^{\lbadgraph}$.
    \item Let $G'$ be the subgraph induced by the nodes outputting $\bot$. On $G'$, nodes need to solve $\Pi^{\lpromise}$.
\end{itemize}
In other words, on a high level, we use $\Pi^{\lbadgraph}$ to transform the promise-problem $\Pi^{\lpromise}$ into a promise-free LCL problem $\Pi$. That is, all the machinery that we have defined until this point are what we use to make $\Pi^{\lpromise}$ promise-free. More in detail, everything is defined so that on proper instances, the only valid solution for $\Pi^{\lbadgraph}$ is the one assigning $\bot$ to all nodes, and hence it is required to solve $\Pi^{\lpromise}$ on all nodes. Hence, a worst-case instance for $\Pi^{\lpromise}$ can be translated into a worst-case instance for $\Pi$. 

In order to make $\Pi^{\lproper}$ promise-free, we pay a cost: it takes $O(\log n)$ rounds to mark invalid parts of the graph, and hence to restrict to a subgraph that is a proper instance. 
Since this cost is still lower than the randomized classical complexity of $\Pi^{\lpromise}$, we are still able to prove a promise-free separation between quantum and classical randomized.

\subsection{The problem \texorpdfstring{\boldmath $\Pi^\lpromise$}{Pi promise}}\label{sec:higl_level-pi_promise}
Before giving an overview of the problem $\Pi^\lpromise$, we first discuss a family of problems that can be used, as a starting point, in order to define $\Pi^\lpromise$. In fact, our construction does not only work for a specific problem, but it works for an entire \emph{family} of problems that we call \emph{linearizable} problems. A linearizable problem $\Pi^{\linearizable} = (\Sigma, (F,L,P),B)$ is defined as follows.
\begin{definition}[Linearizable problem]\label{def:Pi_linearizable}
Let $H$ be a hypergraph, and let $G$ be its bipartite incidence graph. Let the nodes of $G$ corresponding to the nodes of $H$ be called \emph{white} nodes, and let the nodes of $G$ corresponding to the hyperedges of $H$ be called \emph{black} nodes. 
\begin{itemize}
    \item The task requires to label each edge of $G$ with a label from some finite set $\Sigma$.
    \item There is a list of allowed \emph{black node configurations} $B$, which is a list of multisets of labels from $\Sigma$ that describes valid labelings of edges incident on a black node. We say that a black node \emph{satisfies the black constraint} if its incident edges are labeled in a valid way. It is assumed that the rank of $H$, and hence the maximum degree of black nodes, is a constant. 
    \item Constraints on white nodes are described as a triple $(F,L,P)$, where $F$ (which stands for \emph{first}) and $L$ (which stands for \emph{last}) are finite sets of labels, and $P$ (which stands for \emph{pairs}) is a finite set of ordered pairs of labels. In this formalism, it is assumed that an ordering on the incident edges of a white node is given, and it is required that:
    \begin{itemize}
        \item The first edge is labeled with a label from $F$;
        \item The last edge is labeled with a label from $L$;
        \item Each pair of consecutive edges must be labeled with a pair of labels from $P$.
    \end{itemize}
    We say that a white node \emph{satisfies the white constraint} if its incident edges are labeled in a valid way.
\end{itemize}
When solving a linearizable problem in the distributed setting, it is assumed that each node knows whether it is white or black.
\end{definition}

\begin{figure}
	\centering
	\includegraphics[page=6,width=0.9\textwidth]{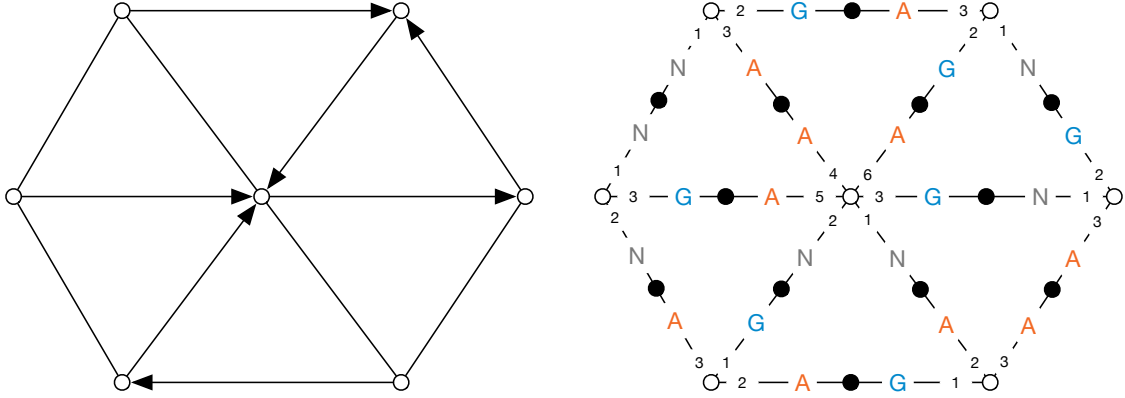}
	\caption{In this example we have hyperedges of rank 2, i.e., black nodes have degree exactly 2. We consider the problem of \emph{edge grabbing}, where we require to orient some edges such that each white node has exactly one outgoing (or grabbed) edge. On the left, it is depicted an example of a solution for this problem. On the right, we show the same solution encoded as a solution for an equivalent linearizable problem. The label $G$ stands for ``grabbed'', the label $N$ stands for ``non grabbed'', the label $A$ stands for ``already grabbed some previous edge''. The constraints on black nodes are $\{\{N, N\}, \{N, A\}, \{N, G\}, \{G, A\}, \{A, A\}\}$, that is, we do not allow white nodes to grab the same edge (i.e., we forbid $\{G,G\}$). The constraints on white nodes are determined by the triple $(F, L, P)$, where $F=\{N, G\}$, $L=\{G, A\}$, $P=\{(N, N), (N, G), (G, A), (A, A)\}$. For example, the configuration of the central white node, ordered according to the ordering of its incident edges, is $(N, N, G, A, A, A)$, where $N\in F$, $A\in L$, and each consecutive pair of labels is in $P$.}
	\label{fig:example-linearizable}
\end{figure}

An example of a definition of a problem using this formalism is shown in \Cref{fig:example-linearizable}.
The useful feature of this family of problems is that, even if the labels of a node are laid down on a (possibly very long) path, it is still possible to locally check for correctness. This property will be crucial for defining $\Pi^\lpromise$. Moreover, as discussed in \Cref{sec:introduction}, the \emph{family} of problems $\iterghz(\Delta)$ can be described as a \emph{single} linearizable problem. 

We are now ready to discuss $\Pi^\lpromise$. Let $\Pi^{\linearizable} = (\Sigma,(F,L,P),B)$ be a linearizable problem. The problem $\Pi^\lpromise$ is defined as a function of the chosen problem $\Pi^{\linearizable}$, as follows. While in \Cref{sec:lcl-problem-pipromise} we will handle the case in which each leaf of the head gadgets is connected to either $1$ or $2$ port gadgets, in the following we consider the simpler setting in which each leaf of the head gadgets is connected to a single port gadget.
\begin{itemize}
    \item Each port gadget of our construction corresponds to an edge of $G$. We require all nodes of each port gadget to output a label of $\Pi^{\linearizable}$, with the requirement that all nodes of a port gadget are labeled with the same label.
    \item In our construction, nodes corresponding to black nodes of $G$ are marked in a special way, and hence they know to be nodes corresponding to black nodes. Let $b$ be one such node, and let $M$ be the multiset containing the output labels of the neighboring nodes (which are nodes belonging to port gadgets) of $b$. We require $M$ to be a configuration allowed by the black constraint of $\Pi^{\linearizable}$.
    \item Thanks to the input of the problem $\Pi$, leaves of a head gadget know that they are leaves of a head gadget, and additionally, the left-most leaf $v$ and the right-most leaf $u$ know that they are, respectively, the left-most and the right-most leaves. Let $f$ be the output label of the port node connected to $v$, let $\ell$ be the output label of the port node connected to $u$, and let $o_i$ be the output label of the port node connected to the $i$-th leaf. We require that $f \in F$, that $\ell \in L$, and that each pair $(o_i,o_{i+1})$ is in $P$. 
\end{itemize}
Thanks to these constraints, we have that, in a valid labeling, each octopus gadget encodes a labeling that satisfies the white constraint of $\Pi^{\linearizable}$.
We will give all the details about $\Pi^{\lpromise}$ in \Cref{sec:lcl-problem-pipromise}.

\subsection{The complexity of \texorpdfstring{\boldmath $\Pi$}{Pi}}
We now provide an overview on how we will prove lower and upper bounds for the problem $\Pi$. Since our goal is to prove a separation between quantum-LOCAL and classical randomized-LOCAL, our upper bound will be a quantum algorithm, while the lower bound will hold for classical randomized algorithms.

\paragraph{Upper bound.}
In the case of upper bounds, we will assume to have given some quantum algorithm $\mathcal{A}$ for $\Pi^{\linearizable}$ with some complexity $T$ (for example, in the case of $\iterghz(\Delta)$, we know that $T = 1$), and we will show that, with $O(\log n)$ overhead, it can be converted into an algorithm $\mathcal{B}$ that solves $\Pi$. On a high level, algorithm  $\mathcal{B}$ works as follows.
\begin{itemize}
    \item First, nodes spend $O(\log n)$ deterministic classical rounds to solve $\Pi^{\lbadgraph}$. The result will be a labeling satisfying that nodes labeled $\bot$ induce a proper instance.
    \item Consider the graph $G'$ obtained by contracting each octopus gadget into a single node. The graph $G'$ is an instance of $\Pi^{\linearizable}$, and each quantum communication round on $G'$ can be simulated with $O(\log n)$ overhead in the original graph. Hence, it is possible to solve $\Pi^{\linearizable}$ in the original graph in $O(T \log n)$ quantum rounds.
    \item Once $\Pi^{\linearizable}$ is solved, nodes can spend $O(\log n)$ additional rounds to produce a labeling that is valid for $\Pi^{\lpromise}$ on the subgraph induced by nodes labeled $\bot$ for $\Pi^{\lbadgraph}$.
\end{itemize}
Therefore, algorithm $\mathcal{B}$ solves $\Pi$ in $O(T \log n)$ quantum rounds. The details are given in \Cref{sec:ub-quantum}.

\paragraph{Lower bound.}
Suppose we know that the problem $\Pi^{\linearizable}$ requires $\Omega(T)$ classical randomized rounds. We prove that, if $\Pi$ can be solved in $o(T \log n)$ classical randomized rounds, then $\Pi^{\linearizable}$ can be solved in $o(T)$ classical randomized rounds, violating the lower bound. This result will be achieved by showing the following. Suppose for a contradiction that there exists an algorithm $\mathcal{B}$ that solves $\Pi$ in $o(T \log n)$ classical randomized rounds, we construct an algorithm $\mathcal{A}$ for $\Pi^{\linearizable}$  as follows.
\begin{itemize}
    \item Let $G$ be an instance of $\Pi^{\linearizable}$. Nodes construct a virtual graph $G'$, which is an instance of $\Pi$, by imagining themselves to be a valid octopus gadget where each port gadget has height $\Theta(\log n)$, and imagining each (hyper)edge to be a black node connecting different gadgets.
    \item In this virtual graph, the only solution for $\Pi^{\lbadgraph}$ is the one assigning $\bot$ to all nodes.
    \item By assumption, the problem $\Pi^{\lpromise}$ can be solved in $T' = o(T\log n)$ rounds in $G'$, and the $T'$-radius neighborhood of each node in $G'$ is contained in $o(T)$-radius neighborhood of each node in $G$. Hence, nodes can solve $\Pi^{\lpromise}$ by communicating for only $o(T)$ rounds on $G$, and then obtain a solution for $\Pi^{\linearizable}$.
\end{itemize}
There are some details that we need to take care of, like the fact that the number of nodes in $G$ and $G'$ are different, and we will show how to handle all these details in \Cref{sec:lb-local}. %
\section{Upper bounding quantum advantage}\label{sec:quantum-advantage}
In this section, we prove \cref{thm:intro:sim}.
We start by defining \emph{$T$-dependent distributions} and the \emph{bounded-dependence model} using \emph{outcomes}.
In \cref{ssec:reduction:proof}, we follow the outline of \cref{ssec:intro-sim-ideas} and provide and prove lemmas for each of the intermediate steps.
We then combine these lemmas to prove \cref{thm:intro:sim}.

\paragraph{Preliminary definitions.}
We start by introducing some key definitions.
Let \(G\) and \(H\) be two graphs.
We define the \emph{union} of \(G\) and \(H\) to be the graph \(G \cup H = (V,E)\) where \(V = V(G)\cup V(H)\) and \(E = E(G) \cup E(H)\).
The \emph{intersection} is similarly defined, with \(G \cap H = (V,E)\) where \(V = V(G) \cap V(H)\) and \(E = E(G) \cap E(H)\).
The \emph{graph difference} is defined as \(G \setminus H = (V,E)\) where \(V = V(G)\) and \(E = E(G) \setminus E(H)\); note that here we only take the difference of the edge sets.
We define the \emph{ring neighborhood} between \(T_1\) and \(T_2\) of a node \(u \in V\) (for \(T_1 \le T_2\)) as \(\neighborhood_{T_2}^{T_1}[u] = \neighborhood_{T_2}[u] \setminus \neighborhood_{T_1}[u]\). 
We use an analogous notation for subsets of nodes.
Consider any node \(u \in V\) (or any subset \(S \subseteq V\)): with an abuse of notation, we define the \emph{open induced subgraph} as the graph \(\mathring{G}[\neighborhood_{T}[u]] = G[\neighborhood_{T}[u]] \setminus G[\neighborhood_{T}^{T-1}[u]]\) (or \(\mathring{G}[\neighborhood_{T}[S]] = G[\neighborhood_{T}[S]] \setminus G[\neighborhood_{T}^{T-1}[S]]\)).
In practice, in this definition we are removing from the classical notion of neighborhood the edges that connect nodes that are at distance \(T\) from \(u\) (or \(S\)) as the graph that \(u\) (or \(S\)) \emph{sees} by moving \(T\) hops away does not include them.

For any function \(f : A \to B\) and any subset \(A' \subseteq A\), we denote by \(f \restriction_{A'}\) the function \(g: A' \to B\) such that \(f(x) = g(x)\) for all \(x \in A'\).
For any input distributed network \((G,\localVar)\)  to any problem, where \(\localVar(v)\) encodes input data for all \(v \in V(G)\), and any subset of nodes \(S\subseteq V(G)\), the \emph{radius-\(T\) view} of \(S\) is \(\view_T(S) = (\mathring{G}[\neighborhood_{T}[S]], \localVar \restriction_{\neighborhood_{T}[S]})\).
Basically, \(\view_T(S)\) includes everything that can be \emph{seen} by the nodes in \(S\) with \(T\) rounds of communication, including input data (degree, ports, identifiers and input labels---if any, etc.).
Suppose \(G\) has \(n\) nodes, and fix any subset of nodes \(S \subseteq V(G)\). 
Given any two graphs \(G,H\) with inputs \(\localVar_G, \localVar_H\) and any two subset of nodes \(S_G \subseteq V(G)\) and \(S_H\subseteq V(H)\), it is natural to define the notion of \emph{isomorphism between views}.
We say that
\(\view_T(S_G)\) is isomorphic to \(\view_T(S_H)\) if there exists a function \(\varphi: V(G) \to V(H)\) such that the following holds: 
\begin{enumerate}[noitemsep]
    \item \(\varphi \restriction _{S_G}\) is an isomorphism between \(G[S_G]\) and \(H[S_H]\);
    \item \(\varphi \restriction _{\NN_T[S_G]}\) is an isomorphism between \(\mathring{G}[\NN_T[S_G]]\) and \(\mathring{H}[\NN_T[S_H]]\);
    \item \(\localVar_G(u) = \localVar_H(\varphi(u))\) for all \(u \in \neighborhood_T(S_G)\).
\end{enumerate}

Consider any \(T\)-round (deterministic or randomized) \local algorithm \(\algo\) run by the nodes of \(G\), and any subset of nodes \(S \subseteq V(G)\).
Run \(\algo\) on \(G\) for \(T\) rounds and look at the output distribution over \(S\).
Clearly, such distribution depends only on \(\view_T(S)\), while everything that is outside \(\view_T(S)\) cannot influence the output distribution, even if some adversary changes the graphs and the inputs outside \(\view_T(S)\) (apart from the number of nodes, as this is an input to the algorithm).
The \emph{cause} of outputs in \(S\) \emph{cannot be influenced} by any action of an adversary outside \(\view_T(S)\).

This property is the so-known \emph{no-signaling from the future} principle in physics, which states that no signals can be sent from the future to the past, and is equivalent to the \emph{causality principle} \cite{d2017quantum}. 
Such principle holds in \emph{every physical distributed network} running any kind of synchronous distributed algorithm, including quantum ones.

To see how this principle formally translates in our setting, let us define the notion of \emph{outcome}, which is some kind of generalization of an algorithm.
Here, we restrict to finite sets of input and output labels since we focus on LCL problems.
For a graph \(G\), let us denote by \(\HH(G) \subseteq V(G) \times V(G) \times E(G)\) the set containing all tuple of the form \((v,(v,e))\) where \((v,e) \in V(G) \times E(G)\) and \((v,e)\) is a half-edge incident to \(v\).

\begin{definition}[Outcome]
    Let \(\Vin,\Ein,\Vout,\Eout\) be finite sets of labels.
    Let \(\II\) be the family of \emph{all} input distributed networks \((G, \localVar)\), where \(G\) is a \((\Vin,\Ein)\)-labeled graph and \(\localVar\) encodes input data (identifiers, ports, possible random bits, input labels, etc.).
    An \emph{outcome} \(\outcome\) is a function that maps every input distributed network \((G, \localVar) \in \II\) to distribution over output labelings \(\{(\oupt_i, p_i)\}_{i \in J}\), where \(J\) is a finite set of indices, \(\oupt_i: \HH(G) \to \Vout \times \Eout \) is a function assigning to every node \(v \in V(G)\) a label from \(\Vout\) and to every half-edge \((v,e)\) a label from \(\Eout\), and \(p_i\) is the probability that \(\oupt_i\) occurs.
\end{definition}
This definition is easily generalizable to the case of infinite label sets, but we avoid it for the sake of simplicity.
Notice that all synchronous distributed algorithms yield outcomes: it is just the assignment of an output distribution (the one that is the result of the algorithmic procedure) to the input graph.
We remark we have defined the domain set of outcomes to include \emph{all possible graphs}. 
This is not restrictive: Even classical (or quantum) algorithms can run on every possible graph. 
One can just introduce some garbage output label so that whenever a node running an algorithm needs to do something that is not well-defined (given its local view), it can just output the garbage label.

We say that an outcome \(\outcome\) solves a problem \(\problem\) over a family of graphs \(\FF\) with probability \(q > 0\) if, for every \(G \in \FF\) and every input data \(\localVar\), it holds that
\[
    \sum_{\substack{i \in I : \\ \oupt_i \in \problem(G,\localVar)}} p_i \ge q.
\]
Let \(\outcome: (G, \localVar) \mapsto \{(\oupt_i, p_i)\}_{i \in I}\) be any outcome and fix an input \((G, \localVar)\).
Consider any subset of nodes \(S \subseteq V(G)\).
Let \(\HH(G)[S]\) be the subset of \(\HH(G)\) that contains all elements \((v,(v,e)) \in \HH(G)\) where \(v \in S\).
The restriction of the output distribution \(\outcome(G,\localVar) = \{(\oupt_i: \HH(G) \to \Vout \times \Eout, p_i)\}_{i \in I}\) to \(S\) is the distribution 
\(\{(\oupt_j: \HH(G)[S] \to \Vout \times \Eout, p_j')\}_{j \in J}\) such that 
\[
    p_j' = \sum_{\substack{i \in I : \\ \oupt_j = \oupt_i \restriction_{\HH(G)[S]}}} p_i,
\] 
and is denoted by \(\outcome(G,\localVar)[S]\) or \(\{(\oupt_i, p_i)\}_{i \in I}[S]\).
We also say that two labeling distributions \(\{(\oupt_i: \HH(G) \to \Vout \times \Eout, p_i)\}_{i \in I}, \{(\oupt_j: \HH(G') \to \Vout \times \Eout, p_j)\}_{j \in J}\) over two graphs \(G,G'\) are isomorphic if there is an isomorphism \(\varphi: V(G) \to V(G')\) between \(G\) and \(G'\) such that \(\{(\oupt_i: \HH(G) \to \Vout \times \Eout, p_i)\}_{i \in I} = \{(\oupt_j \circ \varphi: \HH(G) \to \Vout \times \Eout, p_j)\}_{j \in J}\).

We are now ready to define \emph{\nonsign outcomes}.
\begin{definition}[Non-signaling outcome]\label{def:non-signaling:non-signaling-outcome}
    Let \(\outcome\) be any outcome.
    Suppose there exists a non-negative integer \(T \ge 0\) with the following property: 
    For any two subsets \(S_G \subseteq V(G), S_H \subseteq V(H)\) such that \(\varphi : V(G) \to V(H)\) is an isomorphism between \(\view_T(S_G)\) and \(\view_T(S_H)\), then the restrictions \(\outcome(H, \localVar_H)[S_H]\) and \(\outcome(G, \localVar_G)[S_G]\) are isomorphic under \(\varphi\).
    In this case, we say that \(\outcome\) is \emph{\nonsign beyond distance \(T\)} or, alternatively, that \(\outcome\) has locality \(T\).
\end{definition}

Running \(T\)-rounds classical or \qlocal algorithms without shared resources, we obtain the following property on the output labeling distribution: for every two subset of nodes \(A,B\) such that \(\dist(A,B) > 2T\), then the output distributions restricted to \(A\) and \(B\) are independent.
Let us formalize this notion.

\begin{definition}[\(T\)-dependent distribution]
    Let \(\Vin,\Ein,\Vout,\Eout\) be finite sets of labels, and \(J\) a finite set of indices.
    Let \(\II\) be the family of \emph{all} input distributed networks \((G, \localVar)\), where \(G\) is a \((\Vin,\Ein)\)-labeled graph and \(\localVar\) contains input data (identifiers, ports, possible random bits, etc.).
    An output labeling distribution \(\{(\oupt_i : \HH(G) \to \Vout \times \Eout, p_i)\}_{i \in J}\) is  \(T\)-dependent if, for any two subsets of nodes \(A,B \subseteq V(G)\) such that \(\dist(A,B) > T\), we have that \(\{(\oupt_i, p_i)\}_{i \in J}[A]\) is independent of %
    \(\{(\oupt_i, p_i)\}_{i \in J}[B]\)
\end{definition}

We can think now of \nonsign outcomes that produce such distributions.

\begin{definition}[Bounded-dependent outcome]\label{def:bounded-dependence:bounded-dependent-outcome}
    Let \(\outcome\) be any outcome that is \nonsign beyond distance \(T\).
    We say that \(\outcome\) is \boundept with locality \(T\) if for any input \((G, \localVar)\) we have that \(\outcome(G,\localVar)\) is \(2T\)-dependent.
    Furthermore, when \(T = \myO{1}\) (i.e., it does not depend on the input graph), we say that \(\outcome(G,\localVar)\) is a finitely-dependent distribution.
    Moreover, if for all inputs \((G,\localVar)\) it holds that \(\outcome(G,\localVar)\) does not depend on identifiers and port numbers, we say that \(\outcome\) is \emph{invariant under subgraph isomorphism}.
\end{definition}

With the addition of this further property, we can define the \emph{\boundep model}, first formalized in \cite{akbari2024}.

\paragraph{The \boundep model.}
The \emph{\boundep model} is a computational model that produces bounded-dependent outcomes.
The required success probability when solving a problem must be at least \(1 - 1/\poly(n)\).

\subsection{Limits on quantum advantage}
\label{ssec:reduction:proof}

With these definitions, we are ready to prove \cref{thm:intro:sim}.
Formally, in this section, we prove the following theorem:
\begin{theorem}
    \label{thm:findep-sim}
    Let $\Pi$ be an LCL problem with checking radius~$r$.
    Let $\outcome$ be a bounded-dependent outcome with locality~$T(n)$ of labelings that solve problem~$\Pi$ with probability $p(n)$.
    Then there exists a \randlocal algorithm solving~$\Pi$ with locality~$O(\sqrt{n T(n)} \poly \log n)$ with probability at least~$p(n)$.
\end{theorem}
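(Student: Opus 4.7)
The plan is to implement the sketch from \cref{ssec:intro-sim-ideas} formally, converting the bounded-dependent outcome~$\outcome$ into a \randlocal algorithm in three phases: (i)~a clustering phase that isolates a small set of leftover nodes into well-separated super-components, (ii)~a partial-sampling phase that produces the marginal of $\outcome$ on these leftover nodes, and (iii)~a brute-force completion phase inside the clusters that exploits the local checkability of~$\Pi$.

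First I would invoke the clustering procedure of~\cite{coiteuxroy2023} with target parameter $\sqrt{nT(n)}$; in $\tilde O(\sqrt{nT(n)})$ rounds this yields a family of clusters of diameter at most $D=\tilde O(\sqrt{nT(n)})$ together with a set $L$ of $|L|=O(\sqrt{n/T(n)})$ unclustered leftover nodes. I would then group $L$ into super-components $L_1,\dots,L_k$ by placing two leftover nodes into the same group whenever they are within distance $2T(n)+1$ in $G$ and taking the transitive closure. This guarantees $\dist(L_i,L_j)>2T(n)$ for $i\ne j$, and since $|L|\cdot T(n) = O(\sqrt{nT(n)})$ each super-component has diameter $O(\sqrt{nT(n)})$ and can be computed within $\tilde O(\sqrt{nT(n)})$ additional rounds.

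Next comes the partial sampling. In each super-component $L_i$, a leader node gathers $\view_{T(n)+\diam(L_i)}(L_i)$, which is enough to determine the marginal $\outcome(G,\localVar)[L_i]$ by \cref{def:non-signaling:non-signaling-outcome}; the leader draws a sample from this marginal using its private random bits and broadcasts it inside $L_i$. Because $\dist(L_i,L_j)>2T(n)$, \cref{def:bounded-dependence:bounded-dependent-outcome} makes these marginals mutually independent, so the joint draw is distributed exactly as the restriction $\outcome(G,\localVar)[L]$. The gather dominates the cost at $\tilde O(\sqrt{nT(n)})$ rounds.

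Finally, each cluster $C_j$ gathers $\NN_{D+r}[C_j]$ along with the labels already fixed on $L\cap \NN_r[C_j]$, and brute-forces any half-edge labeling of $C_j$ satisfying the constraint set $\CC$ of~$\Pi$ on every node of $C_j$ while matching the fixed boundary. Local checkability with radius~$r$ implies that pasting together any such per-cluster completions with the committed labels on $L$ yields a globally $\Pi$-valid labeling. The required success bound follows from a coupling: imagine first drawing $\oupt^\star\sim\outcome(G,\localVar)$ and then discarding the values outside $L$; the marginal on $L$ agrees with what the algorithm actually produced, and whenever $\oupt^\star$ is $\Pi$-valid (probability $\ge p(n)$) its restriction to each cluster is a valid completion, so the brute force finds at least one. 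The main subtlety, and the step where the hypothesis is really used, is the second phase: it simultaneously needs the non-signaling property (so a local draw reproduces $\outcome[L_i]$) and $2T(n)$-dependence (so independent draws across far-apart $L_i$ reproduce $\outcome[L]$); both are packaged into \cref{def:bounded-dependence:bounded-dependent-outcome}, and together with local checkability they deliver the claimed locality $O(\sqrt{nT(n)}\poly\log n)$.
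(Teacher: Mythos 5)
Your plan follows the paper's proof very closely: cluster via the algorithm of Coiteux-Roy et al., group the leftover nodes into $(2T(n)+1)$-separated super-components, sample the marginal of $\outcome$ on each super-component independently using the $2T(n)$-dependence and non-signaling properties, and complete each cluster by brute force with the coupling argument for the success probability. All of these steps and even the cost accounting match the paper's \cref{cor:nt-clustering,lem:d_partition,lem:sampling_D,lem:cluster_completability,cor:bruteforce_completion}.

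There is, however, one genuine gap in the completion phase. You invoke the clustering of \cite{coiteuxroy2023} directly, which only guarantees that distinct clusters are mutually non-adjacent, i.e., at distance at least~$2$ in~$G$. Your final step then claims that brute-forcing a labeling of $C_j$ that satisfies $\CC$ on every node of $C_j$ while matching the fixed labels on $L \cap \NN_r[C_j]$ gives a globally valid labeling after pasting. For an LCL with checking radius $r \geq 2$, this fails: a boundary node $v \in C_j$ can have a node of another cluster $C_{j'}$ inside $\NN_r[v]$, so the constraint at $v$ depends on the labeling chosen by $C_{j'}$'s independent brute force, which is not part of the ``fixed boundary'' you gather; likewise a leftover node $u \in L$ near two clusters can see into both. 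The brute-force instances are therefore not well-defined local subproblems and their outputs need not be jointly consistent. The paper avoids this by running the clustering algorithm of \cref{thm:clustering} on the power graph $G^r$ (\cref{cor:nt-clustering}), which enforces that distinct clusters are separated by distance strictly greater than $r$ in~$G$, so that $\NN_r$ of any node in a cluster only touches that cluster and~$D$; this is exactly what makes ``pasting'' the independently completed clusters sound. Since $r=O(1)$, this fix does not affect the asymptotic locality and your argument otherwise goes through.
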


Like in the overview, we start by clustering the graph into small well-separated cluster.
Formally, we mean the following:
\begin{definition}[$(\varepsilon, d)$-clustering]
	Given a graph $G$, an $(\varepsilon, d)$-clustering is a partition $V(G) = D \cup S_1, \cup \ldots, S_k$ meeting the following conditions:
	\begin{itemize}[noitemsep]
		\item The distance between two nodes $u \in S_i, v \in S_j$, from different clusters $i \neq j$, is at least 2.
		\item For every $1 \le i \le k$, the diameter $\max_{u,v \in S_i} \dist_G(u,v)$ of $S_i$ is at most $\myO{d}$.
		\item $D$ contains at most $|D| \le \varepsilon |V(G)|$ vertices.
	\end{itemize}
\end{definition}

To actually compute such clustering, we use the following clustering algorithm by \textcite{coiteuxroy2023}:
\begin{theorem}\label{thm:clustering}
	For any $0 < \varepsilon \le 1$, an $(\varepsilon, \frac{\log n \cdot \log \log \log n}{\varepsilon})$-clustering can be computed with locality
	\[
	\myO{\frac{\log (n)^2 \cdot \log (1/\varepsilon)}{\varepsilon}}
	\]
	rounds in the deterministic \local model.
\end{theorem}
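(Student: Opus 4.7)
The plan is to derive the $(\varepsilon, d)$-clustering by composing a standard deterministic network decomposition with a derandomized ball-carving step that trades diameter for a controllable deletion fraction. Both phases are deterministic and local, so the result lives in deterministic \local.

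First I would invoke a deterministic network decomposition of Ghaffari--Grunau--Rozho\v{n} type: in $\tilde O(\log^2 n)$ rounds, partition $V(G)$ into $C = O(\log n)$ color classes $V_1, \ldots, V_C$ such that each $V_i$ is a disjoint union of \emph{pre-clusters}, pairwise at distance $\ge 2$, each of strong diameter $O(\log n \cdot \log \log \log n)$. This gives the $\log n \cdot \log \log \log n$ factor that appears in the diameter bound of the theorem.

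Next I would grow these pre-clusters into the final clusters by a color-by-color ball-carving pass, using growth radius $L = \Theta(1/\varepsilon)$. Process colors in the order $i = 1, \ldots, C$. When handling a pre-cluster $S$ in $V_i$, choose a cutoff $t_S \in \{1, \ldots, L\}$ and absorb into $S$ all still-unassigned nodes in $\NN_{t_S-1}[S]$, while placing the thin ring $\NN_{t_S}[S]\setminus\NN_{t_S-1}[S]$ into the deletion set $D$. This guarantees the final clusters remain pairwise at distance $\ge 2$, since clusters of the same color were already far apart and distinct colors are separated by their own deletion rings. Summed over all $C$ colors, the added diameter is $C \cdot L = O(\log n / \varepsilon)$, which together with the base pre-cluster diameter gives the target $O(\log n \cdot \log \log \log n / \varepsilon)$.

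The key quantitative step is bounding $|D|$. For each pre-cluster $S$, the $L$ candidate rings at distances $1,2,\ldots,L$ are disjoint, so by averaging there is at least one cutoff $t_S$ whose ring contains at most a $1/L = O(\varepsilon)$ fraction of $|\NN_L[S]|$. By choosing cutoffs that realize this guarantee cluster-by-cluster and color-by-color, a double-counting argument (each node lies in only $O(1)$ rings across the entire schedule, since pre-clusters within a color are well-separated) yields $|D| \le \varepsilon |V(G)|$. To select such cutoffs \emph{deterministically} I would apply the method of conditional expectations with a pessimistic estimator equal to the current size of $D$; the color structure provides the scheduling so that the optimization for each color can be carried out in parallel across pre-clusters in $O(L \cdot \log(1/\varepsilon)) = O(\log(1/\varepsilon)/\varepsilon)$ rounds. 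Multiplying by the $C = O(\log n)$ colors and adding the $\tilde O(\log^2 n)$ cost of the decomposition gives total locality $O(\log^2 n \cdot \log(1/\varepsilon)/\varepsilon)$.

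The main obstacle, in my view, is the joint bookkeeping in the derandomization: ensuring that cutoffs picked in an earlier color do not disable the pigeonhole argument for a later color, and that simultaneous absorption by neighbouring clusters in the same color does not force more nodes into $D$ than the per-color budget of $\varepsilon/C$ would allow. This is handled by keeping the pessimistic estimator global (the cumulative $|D|$ after all color classes processed so far) and by exploiting the distance-$\ge 2$ separation of same-color pre-clusters to reduce the per-color optimization to independent local problems of radius $L$, which is precisely why the Ghaffari--Grunau--Rozho\v{n} decomposition, rather than a weaker clustering, is the right starting point.
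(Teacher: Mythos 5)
You should first note that the paper does not prove \cref{thm:clustering} at all: it imports the statement directly from Coiteux-Roy et al.~\cite{coiteuxroy2023}, so your attempt must be judged on its own merits rather than against an in-paper argument. Your overall architecture --- a Ghaffari--Grunau--Rozho\v{n}-style network decomposition with $O(\log n)$ colors and cluster diameter $O(\log n \cdot \log\log\log n)$, followed by color-by-color ball growing in which a thin boundary ring is sacrificed to $D$ --- is the right family of ideas and is in the spirit of the cited construction. The problem lies in the two quantitative claims that carry your proof.

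The deletion bound is a genuine gap. Your pigeonhole gives, for each pre-cluster $S$, a ring of size at most $|\NN_L[S]|/L$, and you then sum these bounds using the claim that each node lies in only $O(1)$ rings because same-color pre-clusters are ``well-separated''. But the decomposition only guarantees pairwise distance at least $2$ between same-color pre-clusters (and nothing at all between different colors), not distance $\Omega(L)$; with $L = \Theta(1/\varepsilon)$ a single node can lie in $\NN_L[S]$ for arbitrarily many pre-clusters $S$, so $\sum_S |\NN_L[S]|$ can be $\omega(n)$ and the per-cluster bounds do not add up to $\varepsilon\, |V(G)|$. The standard repair is to charge the discarded ring to the nodes the cluster actually absorbs (these \emph{are} disjoint across clusters): grow until the first radius at which the next ring contains at most an $\varepsilon$-fraction of the nodes captured so far; the geometric-growth argument then forces a growth budget $L = \Theta(\log_{1+\varepsilon} n) = \Theta(\log n/\varepsilon)$ rather than $\Theta(1/\varepsilon)$. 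This still fits the stated diameter and locality, but it is a different argument from yours, and it cannot be rescued by your derandomization step: a pessimistic estimator equal to the current global size of $D$ is neither computable nor optimizable with locality $O(\log^2 n \cdot \log(1/\varepsilon)/\varepsilon)$, and in any case there is no randomness to derandomize --- the difficulty is that your local guarantees do not compose, not that the cutoffs are random. A second, related gap is separation: growing same-color pre-clusters simultaneously by up to $L$ destroys the distance-$\ge 2$ guarantee --- two grown clusters can become adjacent, and one cluster may try to absorb a node lying on another's deletion ring --- so you need explicit conflict resolution and must also discard a separating layer between grown clusters of the same color; ``already far apart'' only means non-adjacent, which does not survive growth by $L \gg 1$.
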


By the definition of clustering, we only get that nodes in different clusters are mutually non-adjacent.
However, to ensure that the labelings in each cluster are mutually independent given a labeling for the unclustered nodes, we require nodes of different clusters to be at distance at least~$r$, where $r$ denotes the checkability radius of our LCL problem~$\Pi$.
We can fix this by running the algorithm from \cref{thm:clustering} on the power graph $G^r$.
By doing so and choosing $\varepsilon = (1/ \sqrt{n T(n)})$ we obtain the following corollary.

\begin{corollary}
    \label{cor:nt-clustering}
	A $(1/ \sqrt{n T(n)}, \sqrt{n T(n)} \log^2 n)$-clustering can be computed with locality
	\[
		\myO{\sqrt{n T(n)} \log^3(n)}
	\]
	in the deterministic \local model.
	Additionally any two nodes $u,v$ from different clusters have distance at least $r$ in $G$.
\end{corollary}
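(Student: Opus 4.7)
The plan is to apply \cref{thm:clustering} to the power graph $G^r$ with parameter $\varepsilon = 1/\sqrt{n T(n)}$ and then translate the resulting guarantees back to~$G$. Since $V(G^r) = V(G)$, any partition of the former is automatically a partition of the latter, so the resulting sets $D, S_1, \ldots, S_k$ form a candidate clustering of~$G$. The rest of the proof is essentially bookkeeping: verify that each of the three clustering properties on $G^r$ yields a corresponding property on~$G$, and compute the locality.

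For the size of~$D$, \cref{thm:clustering} directly gives $|D| \le \varepsilon |V(G^r)| = |V(G)| / \sqrt{n T(n)}$, matching the required first parameter of the clustering. For the distance property, nodes $u, v$ lying in distinct clusters satisfy $\dist_{G^r}(u,v) \ge 2$, i.e., they are non-adjacent in~$G^r$, which is equivalent to $\dist_G(u,v) > r$. This simultaneously gives the standard ``non-adjacent in the clustering graph'' requirement (on $G^r$) and the stronger claim that any two nodes from distinct clusters are at distance at least~$r$ in~$G$. For the diameter: each cluster has $G^r$-diameter at most $O(\log n \cdot \log \log \log n / \varepsilon) = O(\sqrt{n T(n)} \log n \log \log \log n)$, which is absorbed into $O(\sqrt{n T(n)} \log^2 n)$. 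Since every edge of~$G^r$ corresponds to a path of length at most $r = O(1)$ in~$G$, the $G$-diameter of each cluster is at most $r$ times its $G^r$-diameter, hence still $O(\sqrt{n T(n)} \log^2 n)$.

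For the locality, \cref{thm:clustering} on $G^r$ runs in $O(\log^2 n \cdot \log(1/\varepsilon)/\varepsilon)$ rounds. Plugging in $\varepsilon = 1/\sqrt{n T(n)}$ and using $\log(1/\varepsilon) = \tfrac{1}{2}\log(n T(n)) = O(\log n)$ (which holds because we may assume $T(n) \le n$; otherwise any LCL is trivially solvable by brute force in $O(n)$ rounds), this becomes $O(\sqrt{n T(n)} \log^3 n)$ rounds on~$G^r$. Each round on~$G^r$ can be simulated by $r = O(1)$ rounds on~$G$, giving the claimed locality $O(\sqrt{n T(n)} \log^3 n)$ on~$G$.

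There is no serious obstacle: the only subtlety is ensuring that the $\log(1/\varepsilon)$ factor in \cref{thm:clustering} does not exceed $O(\log n)$, which is handled by the harmless $T(n) \le n$ assumption. All remaining steps are routine translations between $G$ and $G^r$, using only that $r$ is a constant fixed by the LCL problem~$\Pi$.
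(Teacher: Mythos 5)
Your proposal matches the paper's approach exactly: the paper also obtains the corollary by running the clustering algorithm from Theorem~\ref{thm:clustering} on the power graph $G^r$ with $\varepsilon = 1/\sqrt{n T(n)}$ and then translating the guarantees back to~$G$ using the constancy of~$r$. Your bookkeeping (absorbing $\log\log\log n$ into a $\log n$ factor, bounding $\log(1/\varepsilon) = O(\log n)$ via $T(n) \le n$, simulating each $G^r$-round by $r = O(1)$ rounds in~$G$, and noting that $G^r$-non-adjacency gives $G$-distance $\ge r+1 \ge r$) is correct and fills in the details the paper leaves implicit.
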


This clustering ensures that the clusters are well-separated enough such that their labelings can be completed independently given a labeling in the unclustered nodes~$D$.
Our next step is to produce a labeling on~$D$ by sampling for the bounded-dependence original distribution.
To be able to do this, we need to partition~$D$ into well-separated partitions whose outputs are independent.
Formally, we do the following:
\begin{lemma}
    \label{lem:d_partition}
    For any subset of nodes $D \subseteq V$ of size at most $O\bigl(\sqrt{n/ T(n)}\bigr)$, we can partition $D$ into $D_1, \ldots, D_h$ such that
    \begin{itemize}[noitemsep]
        \item for every $1\leq i\leq h$, the diameter of $D_i$ is $O\bigl(\sqrt{n T(n)}\bigr)$ in $G$, and
        \item for every $1\leq i<j\leq h$, if $u \in D_i$ and $v \in D_j$ then $\dist_G(u,v)\geq 2T(n)+1$.
    \end{itemize}
    Moreover, this partitioning can be computed in the \local model with locality~$O\bigl(\sqrt{n T(n)}\bigr)$.
\end{lemma}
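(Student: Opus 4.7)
The plan is to quotient $D$ via an auxiliary graph $H$ with vertex set $D$, in which two nodes $u,v \in D$ are joined by an edge exactly when $\dist_G(u,v) \leq 2T(n)$. The partition $D_1, \ldots, D_h$ is then defined to be the set of connected components of $H$. Informally, one collapses chains of $D$-nodes that are within distance $2T(n)$ of each other into one block, so that distinct blocks are automatically at distance strictly more than $2T(n)$ in $G$.

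\textbf{Verifying the two properties.} The separation property is immediate from the definition: if $u \in D_i$ and $v \in D_j$ with $i \neq j$, then $\{u,v\}$ is not an edge of $H$, which by definition means $\dist_G(u,v) > 2T(n)$, i.e., $\dist_G(u,v) \geq 2T(n)+1$. For the diameter bound, note that by \cref{cor:nt-clustering} we have $|D| = \myO{\sqrt{n/T(n)}}$, so $|D_i|\leq |D| = \myO{\sqrt{n/T(n)}}$ for every $i$. Any two nodes $u,v \in D_i$ are connected in $H$ by a path with at most $|D_i|-1$ edges, each contributing at most $2T(n)$ to the $G$-distance. The triangle inequality then yields $\dist_G(u,v) \leq (|D_i|-1)\cdot 2T(n) = \myO{\sqrt{nT(n)}}$, as required.

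\textbf{Computation in \local and obstacles.} Implementing this partition in the \local model is direct. Fix $R = \myO{\sqrt{nT(n)}}$ to be a constant factor larger than the component-diameter bound above, plus a safety margin of $2T(n)$. Each node collects its $R$-radius view. A node $v \in D$ can then list every $D$-node in its component, because the whole component lies inside the $R$-neighborhood of $v$, and it can detect every edge of $H$ incident to those nodes, because each such edge corresponds to a $G$-path of length $\leq 2T(n) \leq R$ that is visible in the view. Therefore $v$ can reconstruct its component in $H$ and output a canonical label, for instance the minimum identifier appearing in the component. Any two nodes of the same component reconstruct the same component and output the same index, so the partition is globally consistent. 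The only delicate point is choosing $R$ so that both the component and its incident $H$-edges fit inside the gathered view simultaneously, but the bound $|D| = \myO{\sqrt{n/T(n)}}$ leaves ample slack; I do not anticipate any substantive obstacle beyond this bookkeeping.
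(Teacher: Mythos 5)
Your proof is correct and takes essentially the same approach as the paper: the paper's iterative exploration procedure is exactly a distributed computation of the connected components of your auxiliary graph $H$, and both arguments bound component diameter and locality by $|D|\cdot 2T(n) = O(\sqrt{nT(n)})$. The only cosmetic difference is that you make the auxiliary-graph abstraction explicit, which makes the diameter bound slightly more transparent.
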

The latter condition allows us to sample the outputs for each partition independently as, by the definition of bounded-dependence distributions, their outputs are independent.
The former condition makes sure that each partition has a low diameter and hence the sampling can be done with low locality.
Intuitively, the partitioning is done by finding the connected components in the power graph~$G^{2T(n)}$:
\begin{proof}
    Each node in~$D$ iteratively runs the following exploration procedure:
    The node~$v$ looks at its radius-$2T(n)$ neighborhood in the original graph~$G$ to see if there are other nodes that belong to~$D$.
    If not, the node stops as it has found all nodes of~$D$ that are within distance~$2T(n)$ of it, and hence all other nodes must belong to different partitions.
    Otherwise the node adds newly-found nodes to its own partition and repeats this exploration procedure from these newly-found nodes.

    It is clear that this algorithm produces a partitioning that satisfies the latter condition.
    It remains to prove that the algorithm stops within~$O\bigl(\sqrt{n T(n)}\bigr)$ steps.
    Since nodes only start another iteration if they find a new node from the set $D$, this procedure stops after at most $|D|$ iterations.
    As each iteration has locality~$2T(n)$, the total running time of the algorithm is bounded by
    \[
        |D| \cdot 2T(n) = O\left(\sqrt{n T(n)}\right) .
    \]
    This also implies that the diameter of each cluster is~$O\bigl(\sqrt{n T(n)}\bigr)$.
\end{proof}

We now have the nodes of graph \(G\) clustered into \(D,S_1,\dots,S_k\), and nodes \(D\) partitioned into \(D_1,\dots,D_h\). Let \(\localVar\) be and input for \(G\) and consider a \(2T(n)\)-dependent labeling distribution \(\outcome(G,\localVar)\) over \((G,\localVar)\) that is given by a bounded-dependent outcome \(\outcome\) with locality \(T(n)\). We now show that a randomized \local algorithm can sample a labeling for \(D\) from \(\outcome(G,x)[D]\) if it is given an oracle to \(\outcome\).

\begin{lemma}
    \label{lem:sampling_D}
    Let \(\outcome\) be a bounded-dependent outcome with locality \(T(n)\). There exists a \randlocal algorithm that runs on \((G,x)\) with \(V(G)\) clustered as \(D,S_1,\dots,S_k\), \(D\) partitioned as \(D_1,\dots,D_h\) and an oracle to \(\outcome\), and outputs labelings for \(D\) that follow the distribution \(\outcome(G,x)[D]\). This algorithm has locality \(O(\max_i \diam_G(D_i))\).
\end{lemma}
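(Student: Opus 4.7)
The plan is to exploit the $2T(n)$-dependence of $\outcome(G,\localVar)$ together with the non-signaling property to decompose the sampling over $D$ into $h$ mutually independent local sampling tasks, one per partition $D_i$, each of which can be carried out inside that partition.

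First, since any $u \in D_i$ and $v \in D_j$ with $i \ne j$ satisfy $\dist_G(u,v) \ge 2T(n)+1$, an inductive application of $2T(n)$-dependence (peeling off one $D_i$ at a time with $A = D_i$ and $B = D_{i+1} \cup \cdots \cup D_h$) yields that the marginals $\outcome(G,\localVar)[D_1], \ldots, \outcome(G,\localVar)[D_h]$ are mutually independent, so
\[
    \outcome(G,\localVar)[D] \;=\; \outcome(G,\localVar)[D_1] \times \cdots \times \outcome(G,\localVar)[D_h].
\]
Hence it suffices to sample each $\outcome(G,\localVar)[D_i]$ separately and concatenate.

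For each partition $D_i$, the algorithm elects a leader $\ell_i \in D_i$ (say the node of smallest identifier, found by broadcasting inside $D_i$). The leader gathers $\view_{T(n)}(D_i)$ in its own memory, completes it arbitrarily outside $\neighborhood_{T(n)}[D_i]$ to form some input $(\hat G, \hat \localVar) \in \II$, invokes the oracle on $(\hat G, \hat \localVar)$, restricts the resulting labeling distribution to $D_i$, samples from it using its private random bits, and finally broadcasts the sampled labels to every node of $D_i$. By \cref{def:non-signaling:non-signaling-outcome}, $\outcome(\hat G, \hat \localVar)[D_i]$ is isomorphic to $\outcome(G,\localVar)[D_i]$, so this is a correct sample from the desired marginal. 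Because different leaders $\ell_i$ are distinct nodes with independent private random bits, and because the marginals factor as noted above, the joint distribution of the produced labeling on $D$ is exactly $\outcome(G,\localVar)[D]$.

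For the running time, the three primitives (leader election inside $D_i$, gathering the radius-$T(n)$ view of $D_i$ at $\ell_i$, and broadcasting the sampled labels back to $D_i$) each take $O(\diam_G(D_i) + T(n))$ rounds, which is $O(\max_i \diam_G(D_i))$ in the regime $T(n) \le \max_i \diam_G(D_i)$ of our intended application (where $\diam_G(D_i) = O(\sqrt{nT(n)}) \ge T(n)$). The main conceptual obstacle is that the leader must query the oracle on a \emph{global} input even though it only ever sees a local view; this is resolved precisely by the non-signaling property, which guarantees that any fictitious completion $(\hat G, \hat \localVar)$ agreeing with $\view_{T(n)}(D_i)$ yields the same marginal on $D_i$, so the arbitrary completion is harmless.
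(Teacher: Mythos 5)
Your proof is correct and follows essentially the same route as the paper's: you peel off the $D_i$ one at a time using the $2T(n)$-dependence to obtain the product factorization $\outcome(G,x)[D] = \outcome(G,x)[D_1] \times \cdots \times \outcome(G,x)[D_h]$, and then sample each factor independently inside its own partition. You are in fact slightly more careful than the paper's phrasing (``gather the subgraph induced by $D_i$ and sample from $\outcome(G,x)[D_i]$''), since you explicitly invoke the non-signaling part of the bounded-dependence definition to justify that a leader can query the oracle on a fictitious completion of $\view_{T(n)}(D_i)$, and you correctly observe that this forces locality $O\bigl(\max_i \diam_G(D_i) + T(n)\bigr)$, which collapses to the stated $O\bigl(\max_i \diam_G(D_i)\bigr)$ precisely because the partitioning from the preceding lemma guarantees $T(n) \le \max_i \diam_G(D_i)$ in the intended application.
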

\begin{proof}
    Since \(\dist_G(D_i,D_j) > 2T(n)\) if \(i \neq j\), for all \(1 \le \ell \le h\), for any sequence \(1 \le i_1 < \ldots < 1_\ell \le h\), the restrictions 
    \(\outcome(G,\localVar)[D_{i_1}], \ldots, \outcome(G,\localVar)[D_{i_\ell}]\) of \(\outcome(G,\localVar)\) to \(D_{i_1}, \ldots, D_{i_\ell}\) are mutually independent by definition of \(2T(n)\)-dependent distribution. 
    Let \(\oupt: \HH(G) \to \Vout \times \Eout\) be any output labeling function. 
    For a given subset of nodes \(S \subseteq V(G)\), 
    let us denote the probability that \(\outcome(G,\localVar)\) actually yields the output \(\oupt\) on \(\HH(G)[S]\) by \(P_{\oupt}(\HH(G)[S])\). 
    For all \(1 \le \ell \le h\), for any sequence \(1 \le i_1 < \ldots < i_\ell \le h\), we have
    \[
        P_{\oupt}(\HH(G)[D_{i_1}]) \cdot \ldots \cdot P_{\oupt}(\HH(G)[D_{i_\ell}]) = P_{\oupt}(\HH(G)[D_{i_1} \cup \ldots \cup D_{i_\ell}]).
    \]
    It follows that, for \(1\le i\le h\), a \randlocal algorithm can gather the subgraph induced by \(D_i\) and sample a labeling from distribution \(\outcome(G,x)[D_i]\) independently from the other \(D_j\), \(i\neq j\). The equality above ensures that the final labeling obtained in this way for \(D\) is sampled with the same probability as if we sampled it from \(\outcome(G,x)[D]\).
    As the algorithm samples the labeling for each~$D_i$ independently, it is obvious that the locality of the algorithm is bounded by $O(\max_i \diam_G(D_i))$.
\end{proof}

In other words, the output labels of nodes belonging to different partitions can be sampled independently while respecting the locality of the bounded-dependent distribution. 
This implies that, by sampling a function \(\oupt_D: \HH(G)[D] \to \Vout \times \Eout\) from the bounded-dependent outcome \(\outcome(G,x)[D]\), we can first fix the output labels \(\oupt_D(D)\), and then label the remaining nodes in \(G\) by brute force. We first show that, if the original outcome \(O\) has success probability \(p\), then there exists and extension of \(\oupt_D\) to a labeling for \(\HH(G)\) that is a solution for \(\problem\) with probability at least \(p\). Actually finding such a labeling for \(\HH(G)\) by brute force will follow as a corollary.

\begin{lemma}
    \label{lem:cluster_completability}
    Let \(\outcome\) be a bounded-dependent outcome that solves LCL problem \(\problem\) on \((G,x)\) with success probability \(p\), and let \(\oupt_D\) be a labeling for \(\HH(G)[D]\) sampled from \(\outcome(G,x)[D]\). Then, with probability at least \(p\), there exists at least one labeling \(\oupt_G\) for \(\HH(G)\) such that \(\oupt_D\) is a restriction of \(\oupt_G\) to \(\HH(G)[D]\), namely \(\oupt_D=\oupt_G\restriction_{\HH(G)[D]}\), and \(\oupt_G\) is a solution for \(\problem\) on \((G,x)\).
\end{lemma}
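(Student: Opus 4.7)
The plan is to argue by a coupling between the partial sampling of $\oupt_D$ and a hypothetical full sampling of a labeling $\oupt_G$ from $\outcome(G,x)$. The key observation is that the distribution $\outcome(G,x)[D]$ is by definition the marginal of $\outcome(G,x)$ on $\HH(G)[D]$, so sampling $\oupt_D$ according to $\outcome(G,x)[D]$ is distributionally equivalent to first sampling a full labeling $\oupt_G \sim \outcome(G,x)$ and then setting $\oupt_D := \oupt_G \restriction_{\HH(G)[D]}$.

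First, I would make this coupling explicit: for any choice of $\oupt_D$ in the support of $\outcome(G,x)[D]$, consider the conditional distribution of the full labeling $\oupt_G$ under $\outcome(G,x)$ given $\oupt_G \restriction_{\HH(G)[D]} = \oupt_D$. Drawing $\oupt_D$ from $\outcome(G,x)[D]$ and then $\oupt_G$ from this conditional distribution is exactly the same process as sampling $\oupt_G$ directly from $\outcome(G,x)$ and restricting, by the definition of conditional probability combined with the definition of marginal given just before \cref{def:non-signaling:non-signaling-outcome}.

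Second, I would invoke the assumption that $\outcome$ solves $\problem$ on $(G,x)$ with success probability at least $p$: by the definition of ``outcome solves a problem with probability $q$'' given in \cref{ssec:reduction:proof}, the probability that a labeling sampled from $\outcome(G,x)$ lies in $\problem(G,x)$ is at least $p$. Applying this to the coupled $\oupt_G$ above, with probability at least $p$ the sampled $\oupt_G$ is a valid solution to $\problem$ on $(G,x)$, and by construction satisfies $\oupt_D = \oupt_G \restriction_{\HH(G)[D]}$. This exhibits the required extension.

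I do not expect any real obstacle: the statement is essentially a reformulation of ``the probability that a marginal can be extended to a valid global labeling is at least the probability that a full sample is valid.'' The only care needed is making the coupling formally correct, i.e., spelling out that the joint law of $(\oupt_G, \oupt_D)$ obtained by first drawing $\oupt_D$ from the marginal and then $\oupt_G$ from the conditional agrees with the law obtained by drawing $\oupt_G$ from $\outcome(G,x)$ and setting $\oupt_D$ to be its restriction.
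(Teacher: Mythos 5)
Your proposal is correct and takes essentially the same approach as the paper: the paper's proof implicitly couples the sampled $\oupt_D$ with a full labeling $\oupt \sim \outcome(G,x)$ whose restriction equals $\oupt_D$, then observes that the case ``$\oupt$ is a valid solution'' occurs with probability at least $p$ and exhibits the required extension. Your version just makes the coupling and the use of the marginal definition explicit, which is a cleaner way to state the same argument.
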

\begin{proof}    
    If we sample labeling \(\oupt_D\) from \(O(G,x)[D]\), we can be in three possible cases.
    \begin{enumerate}
        \item Labeling \(\oupt_D\) for \(\HH(G)\) is the restriction of some labeling \(\oupt\) for \(\HH(G)\), and  \(\oupt\) solves \(\problem\) in \((G,x)\). In this case we are guaranteed that at least the extension \(\oupt_G=\oupt\) exists.
        \item  Labeling \(\oupt_D\) for \(\HH(G)\) is the restriction of some labeling \(\oupt\) for \(\HH(G)\), \(\oupt_D\) is a solution for \(\problem\) on \((\mathring{G}[D],x\restriction_{D})\), but \(\oupt\) is not a solution for \(\problem\) in \((G,x)\). Even if we cannot take the extension \(\oupt_G=\oupt\), it might still be possible to find a different \(\oupt'\) that does not belong to \(\outcome(G,x)\) but that is a solution \(\problem\) on \((G,x)\) nonetheless.
        \item Labeling \(\oupt_D\) for \(\HH(G)\) is the restriction of some labeling \(\oupt\) for \(\HH(G)\), and \(\oupt_D\) is not a solution for \(\problem\) on \((\mathring{G}[D],x\restriction_{D})\). Then, no correct extension is possible.
    \end{enumerate}
    The first case occurs with probability \(p\), while with probability \(1-p\) either the second or the third occur. Since a correct extension of \(\oupt_D\) could be possible even in the second case, it follows that the overall probability of finding a correct extension of \(\oupt_D\) is at least \(p\).
\end{proof}

\begin{corollary}
    \label{cor:bruteforce_completion}
    Let \(\oupt_D\) be a labeling for \(\HH(G)[D]\) such that there exists a non-empty set of labelings \(L=\{\oupt_j\}_{j \in J}\) for \(\HH(G)\) where each \(\oupt_j\) is a solution for \(\problem\) on \((G,x)\) and \(\oupt_D\) is a restriction of \(\oupt_j\) to \(\HH(G)[D]\), namely \(\oupt_D=\oupt_j\restriction_{\HH(G)[D]}\). There exists a \local algorithm that runs on \((G,x)\) with \(V(G)\) clustered as \(D,S_1,\dots,S_k\) and outputs a labeling \(\oupt_G \in L\). This algorithm has locality \(\myO{\max_{1\le i \le k}\{\diam_G(S_i)\}}\)
\end{corollary}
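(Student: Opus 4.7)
The plan is to let each cluster $S_i$ independently brute-force a local completion of $\oupt_D$, and then observe that these per-cluster completions automatically glue into a globally valid labeling. Two ingredients make this work: $\problem$ is locally checkable with a constant radius~$r$, and distinct clusters are pairwise separated in $G$ by enough that the radius-$r$ checking ball of any node touches at most one cluster (as in the clustering of \cref{cor:nt-clustering}, possibly after increasing separation by a constant factor). The hypothesis $L \neq \emptyset$ then guarantees that a valid local completion exists in every cluster.

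Concretely, each node $v \in S_i$ gathers, in $\myO{\diam_G(S_i) + r}$ rounds, the induced subgraph on $S_i$ together with its radius-$r$ boundary inside $D$ and the values of $\oupt_D$ on that boundary. Because $\Vout \times \Eout$ is finite, $v$ enumerates all possible labelings of $\HH(G)[S_i]$ and, using a deterministic rule such as lexicographic minimality on a canonical ordering of such labelings, picks the first labeling $\oupt_i$ that (a)~coincides with $\oupt_D$ on every half-edge joining $S_i$ to $D$, and (b)~together with $\oupt_D$ satisfies the constraint of $\problem$ at every node of $\neighborhood_r[S_i]$. Every node of $S_i$ sees the same local data and runs the same rule, so they all agree on the same $\oupt_i$. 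The algorithm then outputs $\oupt_G$ defined to equal $\oupt_D$ on $\HH(G)[D]$ and $\oupt_i$ on $\HH(G)[S_i]$ for each $i$.

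For correctness, fix any $\oupt^* \in L$; its restriction to $\HH(G)[S_i]$ satisfies both (a) and (b), so the brute-force search succeeds in every cluster and $\oupt_G$ is well defined. Global validity then follows by case analysis on a node $v \in V(G)$: if $v$'s radius-$r$ ball meets some cluster $S_i$, then by the separation guarantee it meets only $S_i$, and the constraint at $v$ is enforced by condition~(b); if $v$'s radius-$r$ ball lies entirely in $D$, the constraint at $v$ is inherited from $\oupt^*$, since $\oupt_G$ and $\oupt^*$ agree on that ball. Because $\oupt_G$ equals $\oupt_D$ on $\HH(G)[D]$ and solves $\problem$, we obtain $\oupt_G \in L$, and the locality is $\myO{\max_{1 \le i \le k}\{\diam_G(S_i)\}}$ since $r$ is a constant of $\problem$. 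The main subtlety in the plan is the geometric one just mentioned---ensuring that no radius-$r$ ball is split between two different clusters, so that independent per-cluster brute-forces cannot clash---and it is handled purely at the level of the input clustering.
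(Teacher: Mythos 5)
Your proof is correct and takes essentially the same approach as the paper's: each cluster gathers its radius-$\myO{\diam_G(S_i)+r}$ neighborhood, brute-forces a locally valid completion consistent with $\oupt_D$ on its frontier, and the separation between clusters guarantees the per-cluster completions glue globally. Your write-up is somewhat more explicit than the paper's on two points---the deterministic tie-break so that all nodes of $S_i$ agree on the same $\oupt_i$, and the fact that one wants separation greater than $2r$ (not merely $r$) so that a node of $D$ cannot have its radius-$r$ ball split across two clusters, which you correctly note can be ensured by boosting the power-graph parameter by a constant factor---but the underlying argument is the same.
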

\begin{proof}
    Every node in a given \(S_i\) gathers its radius-\(\myO{\diam_G(S_i)}\) neighborhood, making sure that this radius is not smaller than the diameter of \(S_i\) plus \(r\). Then we know that every node in \(S_i\) sees every other node in the same \(S_i\), plus a frontier of depth \(r\) in \(D\). We can then compute by brute force a valid solution to \(\problem\) restricted to \(S_i\) by testing every possible labeling \(\oupt_j:\HH(G)[S_i] \to \Vout \times \Eout\).
    
    The procedure just described clearly has locality \(\myO{\diam_G(S_i)}\), and it is always correct because \(\problem\) has checkability radius \(r\), and \cref{cor:nt-clustering} guarantees that nodes in different clusters have distance at least \(r\) in \(G\). Thus, the radius-\(r\) neighborhood of any node in \(S_i\) can intersect only the same \(S_i\) and \(D\). Since the labeling for \(D\) is already fixed by \(\oupt_D\), we can safely check whether an output labeling solves \(\problem\) in \(S_i\).
\end{proof}

We can now prove \cref{thm:findep-sim}:
\begin{proof}[Proof of \Cref{thm:findep-sim}:]
    Let $\Pi$ be an LCL problem with checking radius~$r$ and that admits a bounded-dependence distribution with locality~$T(n)$ and success probability~$p(n)$.
    We construct a \randlocal algorithm that solves problem~$\Pi$ with locality~$O\bigl(\sqrt{n T(n)} \poly \log n\bigr)$.
    The algorithm works in four steps:
    \begin{enumerate}
        \item It first computes a $\bigl(1/ \sqrt{n T(n)}, \sqrt{n T(n)} \log^2 n\bigr)$-clustering~$D \cup S_1, \cup \ldots, S_k$.
        By \cref{cor:nt-clustering}, this can be done with locality~$O\bigl(\sqrt{n T(n)} \log^3(n)\bigr)$.
        
        \item The algorithm then partitions the unclustered nodes~$D$ into partitions~$D_1, \ldots, D_p$ such that each partition has distance at least $2T(n)$ to every other partition.
        By \cref{lem:d_partition}, this can be done with locality~$O\bigl(\sqrt{n T(n)}\bigr)$.

        \item\label[step]{step:sim:sample} The algorithm then samples a labeling for each partition~$D_i$ independently.
        This too can be done with locality~$O\bigl(\sqrt{n T(n)}\bigr)$ thanks to \cref{lem:sampling_D} as each partition has diameter~$O\bigl(\sqrt{n T(n)}\bigr)$ and partitions are at distance at least $2T(n)$ from each other, which means that their distributions are independent.
        
        \item Finally, the algorithm completes the labeling for each cluster independently.
        This succeeds with probability at least \(p(n)\) by \cref{lem:cluster_completability}, and the brute force completion of \cref{cor:bruteforce_completion} takes locality~$O\bigl(\sqrt{n T(n)}\log^2 n\bigr)$ as this is the bound on the diameter of each cluster.
    \end{enumerate}

    The locality of the algorithm is dominated by the first step.
    In total, the algorithm has locality~$O\bigl(\sqrt{n T(n)} \log^3n\bigr)$.

    The only part in this algorithm where we use randomness is in \cref{step:sim:sample} when sampling from the bounded-dependence distribution.
    By \cref{lem:cluster_completability}, we sample a labeling that is extendable to a labeling for the whole graph with probability at least~$p(n)$.
    As the rest of the algorithm is deterministic, the algorithm succeeds with probability at least~$p(n)$.
\end{proof}

Finally, we note that \cref{thm:findep-sim} can also be derandomized e.g.\ by using techniques by \textcite{ghaffari2018derandomizing} combined with a modern network decomposition~\cite{GGHIR23}, assuming that the original bounded-dependence distribution succeeds with high enough probability.
This gives us the following result:
\begin{corollary}
    Let $\Pi$ be an LCL problem with checking radius~$r$.
    Let $\outcome$ be a bounded-dependent outcome with locality~$T(n)$ of labelings that solve problem~$\Pi$ with probability strictly larger than $1 - \frac{1}{n}$.
    Then there exists a \detlocal algorithm solving~$\Pi$ with locality~$O\bigl(\sqrt{n T(n)} \poly \log n\bigr)$.
\end{corollary}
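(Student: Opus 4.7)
The plan is to derandomize the \randlocal algorithm from \Cref{thm:findep-sim} using the derandomization framework of \textcite{ghaffari2018derandomizing}, instantiated with the state-of-the-art deterministic network decomposition of \cite{GGHIR23}. This is the standard recipe for converting \randlocal algorithms with inverse-polynomial failure probability into \detlocal algorithms at the cost of a polylogarithmic multiplicative overhead.

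First, I would invoke \Cref{thm:findep-sim} to obtain a \randlocal algorithm $\mathcal{B}$ solving $\problem$ with locality $T_r(n) = \myO{\sqrt{nT(n)}\poly\log n}$ and success probability strictly larger than $1 - 1/n$. By a union bound over the $n$ nodes, the probability that any single fixed node witnesses a constraint violation is at most $1/n$, so the \emph{local} failure probability at each node is inverse-polynomial. This is exactly the regime in which the machinery of \cite{ghaffari2018derandomizing} applies: one defines a pessimistic estimator (for instance, the sum over all nodes of the conditional probability that the LCL constraint at that node is violated, given the random bits fixed so far), and nodes then fix their random bits one group at a time so that the estimator never increases. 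Since the estimator starts strictly below $1$ and is a lower bound on the probability of producing a valid solution, it remains below $1$ at the end, guaranteeing a globally valid labeling.

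To implement the method of conditional expectations in the \detlocal model, I would compute a deterministic network decomposition of the power graph $G^{cT_r(n)}$ for a suitable constant $c$ depending on the checking radius $r$, using \cite{GGHIR23}. Such a decomposition has $\poly\log n$ color classes and cluster diameter $\poly\log n$ in the power graph, and can be computed in $\poly\log n$ rounds in the power graph, i.e., in $T_r(n)\cdot\poly\log n$ rounds in $G$. Derandomization then proceeds color class by color class: within each cluster, a single node gathers all relevant information (the random bits, the induced subgraph, and the contributions of boundary nodes to the estimator) and fixes the random bits of the cluster locally via conditional expectation. Because nodes in different clusters of the same color class are far apart in $G$, these local derandomization steps can be performed in parallel without interference, and the total overhead is $\poly\log n$ rounds per color class.

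The main obstacle I expect is bookkeeping around locality rather than any conceptual difficulty: one has to verify carefully that the pessimistic estimator can be evaluated and updated within a radius that is a constant-factor blowup of $T_r(n)$, and that the choice of the power-graph exponent $c$ correctly captures both the algorithm's locality and the LCL checking radius~$r$. All induced overheads are polylogarithmic in~$n$ and therefore get absorbed into the $\poly\log n$ factor, giving a \detlocal algorithm with locality $\myO{\sqrt{nT(n)}\poly\log n}$, as claimed.
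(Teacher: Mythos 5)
Your proposal takes essentially the same approach as the paper: the paper proves this corollary only via a one-sentence remark that invokes the derandomization technique of Ghaffari, Harris, and Kuhn \cite{ghaffari2018derandomizing} together with the deterministic network decomposition of \cite{GGHIR23}, which is precisely what you instantiate. Your elaboration — pessimistic estimator summed over local constraint violations, starting strictly below $1$ because the failure probability is $<1/n$, fixed color class by color class over a decomposition of the $\myO{\sqrt{n T(n)}\poly\log n}$-th power graph — is a reasonable and correct unpacking of that remark, with the acknowledged caveat that the bookkeeping (in particular treating each partition $D_i$, rather than each node, as the unit of randomness to be fixed) must be carried out carefully.
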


\section{Quantum advantage: technical details}\label{sec:lcl-details}

\paragraph{Preliminary definitions.}
For any \((\VV,\EE)\)-labeled graph \(G = (V,E)\), and for any node \(v \in V\) and any edge \(e \in E\), we denote by \(L_v(e)\) the label of the half-edge \((v,e)\).
We also want to be able to refer to the endpoint of a path that starts from some node \(v\) and follows a given sequence \(L_1, \dots, L_k\) of labels over half-edges. 
We define the function \(f\) as follows:
Consider any node \(v\) and any sequence of edge labels \(L_1, \dots, L_k \in \EE\).
Let \(P = (v_1, \dots, v_{k+1})\) be a path in \(G\) such that \(v_1 = v\) and, for any edge \(e = \{v_i,v_{i+1}\}\), the half-edge \((v_i,e)\) is labeled with \(L_i\).
Then, we set \(f(v,L_1, \dots, L_k) = v_{k+1}\) if the path \(P\) exists and is unique, and \(\bot\) otherwise.

\subsection{Tree-like gadget}\label{sec:tree-like gadget}
In order to define our LCL problem, we will use, as a basic building block, an object called \emph{tree-like} gadget, which has already been used in different works related to LCLs \cite{balliu20lcl-randomness,balliu2024shared-randomness,balliu21lcl-congest}. An example of tree-like gadget is shown in \Cref{fig:treelike}.
We report here the definition provided in \cite{balliu20lcl-randomness}.
\begin{definition}[Tree-like gadget \cite{balliu20lcl-randomness}]\label{def:tree-like-gadget}
	A graph $G$ is a \emph{tree-like} gadget of height $\ell$ if it is possible to assign coordinates $(l_u, k_u)$ to each node $u\in G$, where
	\begin{itemize}[noitemsep]
		\item $0\le l_u < \ell$ denotes the depth of $u$ in the tree, and
		\item $0\le k_u < 2^{l_u}$ denotes the position of $u$ (according to some order) in layer $l_u$,
	\end{itemize}
	such that there is an edge connecting two nodes $u,v\in G$ with coordinates $(l_u, k_u)$ and $(l_v, k_v)$ if and only if:
	\begin{itemize}[noitemsep]
		\item $l_u = l_v$ and $|k_u - k_v | = 1$, or
		\item $l_v = l_u-1$ and $k_v = \lfloor \frac{k_u}{2} \rfloor$, or
		\item $l_u = l_v-1$ and $k_u = \lfloor \frac{k_v}{2} \rfloor$.
	\end{itemize}
\end{definition}

A useful property of this gadget is that it can be made \emph{locally checkable}, in the sense that there exists a set of labels $\EE^{\ltreelike}$ that can be assigned to each node-edge pair, and a constraint $C^{\ltreelike}$ over the labels $\EE^{\ltreelike}$, satisfying the following.
\begin{lemma}[\cite{balliu20lcl-randomness}]\label{lemma:constraints-to-treelike}
    Let $G$ be a graph that is labeled with labels in $\EE^{\ltreelike}$ such that $C^{\ltreelike}$ is satisfied for all nodes in $G$. Then, each connected component of $G$ is a tree-like gadget.
\end{lemma}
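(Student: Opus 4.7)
The plan is to follow the strategy used in the original construction from \cite{balliu20lcl-randomness}: use the labels in $\EE^{\ltreelike}$ to guide a canonical coordinate assignment on every connected component. I expect $\EE^{\ltreelike}$ to consist of half-edge labels including $\lparent$, $\llch$, $\lrch$ (edges to the parent, left child, and right child) and $\lleft$, $\lright$ (edges to the left and right sibling in the same layer), together with node-side flags marking roots, leaves, and leftmost/rightmost positions in a layer. The constraints $C^{\ltreelike}$ force reciprocal labels on the two half-edges of every edge, uniqueness of each label type at every node, and, crucially, \emph{local commutativity} between vertical and horizontal moves (e.g., the right sibling of a left child equals the left sibling of the right child of the same parent).

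Fix a connected component $H$. Starting from an arbitrary node and iterating $\lparent$ half-edges, the constraints force the process to terminate at a node $r$ flagged as a root; uniqueness of $r$ inside $H$ will follow from the fact that $\lparent$-edges form a forest and the commutativity constraints identify the roots of adjacent $\lparent$-trees. Set $\phi(r) = (0,0)$. I would then extend $\phi$ inductively: if $\phi(u) = (\ell, k)$, assign $\phi(f(u,\llch)) = (\ell+1, 2k)$, $\phi(f(u,\lrch)) = (\ell+1, 2k+1)$, and $\phi(f(u,\lright)) = (\ell, k+1)$, skipping undefined cases (leaves or rightmost nodes in a layer, as flagged). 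Well-definedness is the heart of the argument: whenever two distinct label sequences lead from $r$ to the same node $v$, they must yield the same coordinate. This is enforced by the local commutativity in $C^{\ltreelike}$, which can be bootstrapped into a layer-by-layer induction showing that all paths from $r$ to $v$ within $H$ produce identical images under $\phi$.

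Once $\phi$ is well-defined, I would verify injectivity and the range conditions of \Cref{def:tree-like-gadget}. Injectivity across layers is immediate because distinct layers have distinct first coordinates; injectivity within a layer follows from the fact that $\lleft/\lright$ half-edges form a simple path bounded by the leftmost and rightmost flags, so the second coordinate increases along this path. Conversely, each edge of $H$ is, by construction of the labels, of one of the three types in \Cref{def:tree-like-gadget}, as a direct case analysis on its half-edge labels shows. Together these facts certify $H$ as a tree-like gadget whose height equals one plus the maximum depth reached.

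The main obstacle is the well-definedness step: a priori, the local constraints only forbid length-$O(1)$ misconfigurations, so one must argue that they globally exclude spurious cycles mixing vertical and horizontal moves. This is exactly what the commutativity constraints of \cite{balliu20lcl-randomness} are designed for, and the argument reduces to showing, by induction on the layer index, that the map $\phi$ restricted to each BFS layer is a bijection between that layer and the corresponding row of the tree-like coordinate grid. Everything else is then a routine case analysis at single nodes, which succeeds because $C^{\ltreelike}$ has constant checking radius.
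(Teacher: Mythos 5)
This lemma is cited from prior work (\cite{balliu20lcl-randomness}, with the concrete constraints $\CC^{\ltreelike}$ reproduced from \cite{balliu21lcl-congest}); the paper you are reading does not give a proof of its own, so there is no internal proof to compare your proposal against. That said, your high-level plan — build a coordinate map $\phi$ from the half-edge labels and verify that the resulting structure matches \Cref{def:tree-like-gadget} — is the natural one and matches what the cited works do.

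Two points of divergence from the actual setup and one substantive gap. First, a minor mischaracterization: $\EE^{\ltreelike}$ as defined here consists \emph{only} of the five half-edge labels $\{\lleft,\lright,\lparent,\llch,\lrch\}$; there are no node-side flags for root/leaf/leftmost/rightmost. Those roles are instead derived from the \emph{absence} of certain half-edge labels (constraint~7: no $\lparent$ iff no $\lleft,\lright$; a leaf is a node without $\llch,\lrch$; etc.). This does not change the shape of the argument, but your references to flags should be replaced by these derived conditions.

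The substantive gap is in the very first step. You begin by ``iterating $\lparent$ half-edges'' and assert that ``the constraints force the process to terminate'' and that ``$\lparent$-edges form a forest.'' Neither is immediate: each node has at most one $\lparent$ half-edge, so the $\lparent$-edges form a functional graph, and a functional graph may contain cycles. The constraints of $\CC^{\ltreelike}$ have constant checking radius, so they cannot directly forbid a long $\lparent$-cycle, and similarly they cannot directly forbid a long $\lleft/\lright$-cycle in a row. Ruling these out is precisely where the global content of the lemma lives. The standard way to do it is an infinite-descent/parity argument: commutativity (constraints 4 and 5) forces strict alternation of $\llch$- and $\lrch$-children along any $\lleft/\lright$-path, so a $\lleft/\lright$-cycle of length $m$ at one layer forces a $\lleft/\lright$-cycle of length $m/2$ at the parent layer, which eventually contradicts simplicity of the graph (no parallel edges or self-loops); once $\lleft/\lright$-cycles are excluded, rows are simple paths, and one can then rule out $\lparent$-cycles and carry out your layer-by-layer induction. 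Your plan identifies the well-definedness of $\phi$ as the ``main obstacle'' but does not notice that the obstacle already bites before $\phi$ can be defined, since your induction needs a root to anchor it and a root can only be found once termination of the $\lparent$-walk is established. You should either prove the descent argument first, or restructure the proof to assign coordinates relative to an arbitrary base node and shift to a root only at the end.
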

\begin{lemma}[\cite{balliu20lcl-randomness}]\label{lem:treelike-are-labellable}
    Each tree-like gadget graph $G$ can be labeled with labels in  $\EE^{\ltreelike}$ such that $C^{\ltreelike}$ is satisfied for all nodes in $G$.
\end{lemma}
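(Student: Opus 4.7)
The plan is to construct the natural labeling directly from the coordinate assignment guaranteed by \cref{def:tree-like-gadget} and then verify that each node's incident half-edges match one of the allowed local configurations in $C^{\ltreelike}$. Given a tree-like gadget $G$ of height $\ell$, fix coordinates $(l_u,k_u)$ for every node $u \in V(G)$ as in the definition. For each edge $e = \{u,v\}$ of $G$, I would assign labels to the two half-edges $(u,e)$ and $(v,e)$ according to the unique relationship between the coordinates of $u$ and $v$:
\begin{itemize}[noitemsep]
    \item If $l_u = l_v$ and $k_v = k_u+1$, label $(u,e)$ with $\lright$ and $(v,e)$ with $\lleft$.
    \item If $l_v = l_u - 1$ and $k_v = \lfloor k_u/2 \rfloor$, label $(u,e)$ with $\lup$ and label $(v,e)$ with $\llch$ if $k_u$ is even and with $\lrch$ if $k_u$ is odd.
\end{itemize}
The three cases in \cref{def:tree-like-gadget} are exhaustive and mutually exclusive, so this defines a labeling of every half-edge of $G$ using symbols from $\EE^{\ltreelike}$.

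Next I would verify that this labeling satisfies $C^{\ltreelike}$ at every node. Fix a node $u$ with coordinates $(l_u,k_u)$. Its incident edges are exactly those dictated by \cref{def:tree-like-gadget}, and the set of outgoing half-edge labels is determined by $(l_u,k_u)$ in a simple way: $u$ has an $\lup$ edge iff $l_u > 0$; it has exactly one of $\llch$ or $\lrch$ incident from its parent (depending on the parity of $k_u$), which is irrelevant to $u$'s own half-edge labels; it has $\llch$ and $\lrch$ outgoing edges iff $l_u < \ell-1$; and it has $\lleft$/$\lright$ edges depending on whether $k_u = 0$, $k_u = 2^{l_u}-1$, or is interior. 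These local configurations are exactly the finite list encoded by the constraint $C^{\ltreelike}$ from \cite{balliu20lcl-randomness}; indeed, the constraint was designed precisely so that \cref{lemma:constraints-to-treelike} forces a valid coordinate assignment, and the configurations it allows are the ones arising from a genuine coordinate assignment. So reading off the constraints corresponds to reading off exactly the set of neighborhood types that I just enumerated.

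The only step that requires care is handling the boundary cases consistently, that is, the root (which has no $\lup$), the leaves (which have no $\llch$, $\lrch$), and the leftmost/rightmost node at each level (which are missing one of $\lleft$, $\lright$). I would argue that $C^{\ltreelike}$ as defined in \cite{balliu20lcl-randomness} already includes the special configurations for these boundary nodes, since otherwise \cref{lemma:constraints-to-treelike} would be false (a single tree-like gadget with no configuration for its root could never be labeled). Thus the enumeration above covers every allowed local view and the labeling is globally consistent. I expect the main obstacle to be purely bookkeeping: matching the exact naming conventions and the parity/index conventions used in $C^{\ltreelike}$ in the source paper, rather than any conceptual difficulty, since the construction is forced by the coordinate structure.
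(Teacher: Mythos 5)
The paper cites this lemma from prior work and does not reprove it, so the only thing to assess is whether your argument stands on its own. Your construction is the right one (the standard coordinate-induced labeling), and it does satisfy $\CC^{\ltreelike}$, but the way you argue for that last step has a genuine problem.

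Two issues. First, a small one: the label set is $\EE^{\ltreelike} = \{\lleft,\lright,\lparent,\llch,\lrch\}$; you assign $\lup$ to the child-to-parent half-edge, but the intended label is $\lparent$. Second, and more importantly, your verification is circular. You claim that ``$C^{\ltreelike}$ as defined in \cite{balliu20lcl-randomness} already includes the special configurations for these boundary nodes, since otherwise \cref{lemma:constraints-to-treelike} would be false.'' That inference is backwards: if $\CC^{\ltreelike}$ forbade the root configuration, \cref{lemma:constraints-to-treelike} (constraints imply tree-like gadget) would remain vacuously or trivially true --- the lemma that would fail is precisely \cref{lem:treelike-are-labellable}, the one you are trying to prove. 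So you are using the statement to be proved as evidence for itself. You also treat $\CC^{\ltreelike}$ as if it were a whitelist of neighborhood types, whereas the paper gives it as nine explicit conditions, several of which (items~4, 5, and~9) are path constraints using the navigation function $f(\cdot)$ and depend on the label on the \emph{parent's} side of the edge --- the very thing you dismiss as ``irrelevant to $u$'s own half-edge labels.'' Those must be checked.

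The fix is mechanical but necessary: run through the nine constraints one by one for the coordinate labeling. For example, constraint~4 says that if $u$ is a left child then $f(u,\lparent,\lrch,\lleft)=u$; with coordinates, $f(u,\lparent)$ is at $(l_u-1,k_u/2)$, its $\lrch$ neighbor is at $(l_u,k_u+1)$, and its $\lleft$ neighbor is at $(l_u,k_u)=u$, as required. Constraints~5, 8, and~9 follow by similar coordinate bookkeeping (using that for a right child at $(l_u,k_u)$ with $k_u$ odd, the parent is at $(l_u-1,(k_u-1)/2)$, and existence of left/right neighbors at successive levels is governed by $k_u>0$ and $k_u<2^{l_u}-1$). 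Constraints~1, 2, 3, 6, 7 are immediate from the construction. Once that verification is written out, the proof is complete; as stated, it has a gap.
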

The set of labels $\EE^{\ltreelike}$ is defined in \cite{balliu21lcl-congest} as $\EE^{\ltreelike} = \{\lleft,\lright, \lparent,\llch,\lrch\}$ (the labels stand for ``left'', ``right'', ``parent'',  ``left child'', and ``right child'', respectively). 
The set of local constraints $\CC^{\ltreelike}$ is defined in \cite{balliu21lcl-congest} as follows.
\begin{myframe}{The constraints $\CC^{\ltreelike}$ of \cite{balliu21lcl-congest}}
	\begin{enumerate}
		\item For any two edges $e,e'$ incident to a node $u$, it must hold that $L_u(e)\neq L_u(e')$;\label{cons-tree:differentEdgeLabels}
		
		\item For each edge $e=\{u,v\}$, if $L_u(e)=\lleft$, then $L_v(e)=\lright$, and vice versa; \label{cons-tree:left-right}
		
		\item For each edge $e=\{u,v\}$, if $L_u(e)=\lparent$, then $L_v(e)\in\{\llch,\lrch\}$, and vice versa;\label{cons-tree:parent-child}
		
		\item If a node $u$ has an incident edge $e=\{u,v\}$ with label $L_u(e)=\lparent$ such that $L_v(e)=\llch$, then $f(u, \lparent,\lrch,\lleft)=u$;\label{cons-tree:triangle}
		
		\item If a node $u$ has an incident edge $e=\{u,v\}$ with label $L_u(e)=\lparent$ such that $L_v(e)=\lrch$, if $u$ has an incident edge labeled $\lright$, then $f(u, \lparent,\lright,\llch,\lleft)=u$.\label{cons-tree:square}
		
		\item If a node has an incident half-edge labeled $\llch$, then it must also have an incident half-edge labeled $\lrch$, and vice versa;\label{cons-tree:2children}
		
		\item A node does not have an incident half-edge labeled $\lparent$ if and only if it has no incident half-edges labeled $\lleft$ or $\lright$; \label{cons-tree:root}
		
		\item If a node $u$ does not have an incident edge $e$ with label $L_u(e)\in\{\llch, \lrch\}$, then neither do nodes $f(u,\lleft)$ and $f(u,\lright)$ (if they exist);\label{cons-tree:boundarychildren}
		
		\item If a node $u$ has an incident edge $e=\{u,v\}$ with label $L_u(e)=\lparent$ such that $L_v(e)=\lrch$ (resp. $L_v(e)=\llch$), then $u$ has an incident edge labeled $\lright$ (resp. $\lleft$) if and only if $f(u,\lparent)$ has an incident edge labeled $\lright$ (resp. $\lleft$).\label{cons-tree:boundarylr}
		
	\end{enumerate}
\end{myframe}
An example of tree-like gadget labeled from labels in $\EE^{\ltreelike}$ such that $C^{\ltreelike}$ is satisfied for all nodes is shown in \Cref{fig:treelike}.

Moreover, in \cite{balliu2024shared-randomness}, it is shown that it is easy for the nodes to \emph{prove} that the graph in which they are is an invalid tree-like gadget. More formally, in \cite{balliu2024shared-randomness}, it is defined an LCL problem $\problem^{\mathrm{badtree}}$ satisfying the following.
Nodes have input $0$ or $1$, and nodes with input $1$ are called \emph{marked}. Node-edge pairs have an input label from $\EE^{\ltreelike}$. Recall that a graph with such an input is called $(\{0,1\},\EE^{\ltreelike})$-labeled graph. The set $\VV^{\lbadtree}$ of outputs labels for $\problem^{\mathrm{badtree}}$ contains a special label called $\bot$, and in this problem there are only node output labels (hence, there are no half-edge output labels). 
\begin{lemma}[\cite{balliu2024shared-randomness}]\label{lem:badtree}
    There exists an LCL problem $\problem^{\mathrm{badtree}}$ satisfying the following properties.
    \begin{itemize}
        \item Let $G$ be a $(\{0,1\},\EE^{\ltreelike})$-labeled graph where $C^{\ltreelike}$ is satisfied on all nodes and all nodes are not marked. Then, the only valid solution for $\problem^{\mathrm{badtree}}$ is the one assigning $\bot$ to all nodes.
        \item Let $G$ be a connected $(\{0,1\},\EE^{\ltreelike})$-labeled graph where either $C^{\ltreelike}$ is not satisfied on at least one node, or there is at least one marked node. Then, there exists a solution for $\problem^{\mathrm{badtree}}$ where all nodes produce an output different from $\bot$. Moreover, such a solution can be computed in $O(\log n)$ deterministic rounds in the LOCAL model.
    \end{itemize}
\end{lemma}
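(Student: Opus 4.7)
The plan is to adapt the construction from \cite{balliu2024shared-randomness}, modifying it only to incorporate the new marked/unmarked input bit. I would take the output label set to be $\VV^{\lbadtree} = \{\bot, \lerror, \lup, \ldown, \lleft, \lright\}$: a ``clean'' label, an explicit error witness, and four pointer labels indicating the direction in which the node claims an error lies. The local constraints would then encode: (i) a node may output $\lerror$ exactly when its radius-$1$ view already certifies a defect, meaning either its input bit is $1$ or at least one constraint of $\CC^{\ltreelike}$ fails at it; (ii) a node emitting a pointer label $d \in \{\lup,\ldown,\lleft,\lright\}$ must have a neighbor in the corresponding $\EE^{\ltreelike}$-direction whose output is not $\bot$; and (iii) a ``grammar'' on consecutive pointers that forbids pointing to a left child and forbids a $\lup$/$\ldown$ pointer immediately after an $\lleft$/$\lright$ pointer, as illustrated in \Cref{fig:badtree-solution}.

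To verify the first bullet of the lemma (no cheating on properly labeled unmarked gadgets), I would argue that under the grammar, any maximal directed chain of pointers has the shape ``a (possibly empty) vertical segment of $\lup$/$\ldown$ moves avoiding left-child edges, followed by a (possibly empty) horizontal segment of $\lleft$/$\lright$ moves,'' and must terminate at a node outputting $\lerror$. On a correctly $C^{\ltreelike}$-labeled tree-like gadget whose input bits are all $0$, the condition (i) forbids any $\lerror$ label, so no such terminating node exists; hence every non-$\bot$ output would induce an infinite pointer chain in a finite graph, contradicting the acyclicity forced by the grammar (the forbidden ``left-child'' move is exactly what rules out descending into the same subtree one just came from).

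For the second bullet, given a connected instance that is either marked or has a $\CC^{\ltreelike}$-violation, I would pick, at each node $v$, the violator or marked node $w$ that minimizes $\dist(v,w)$ and route along the unique grammar-respecting path from $v$ to $w$. The key geometric fact to verify is that any two nodes of a tree-like gadget of height $\ell$ are connected by a grammar-respecting path of length $O(\ell)=O(\log n)$: go up to the lowest common ancestor, then traverse the horizontal same-level path to the target (or to its unique ancestor on the vertical drop into the target's subtree, descending through right-child moves only). Since the tree-like gadget has $n = 2^{\Theta(\ell)}$ nodes, this gives an $O(\log n)$-round deterministic algorithm: each node gathers its $O(\log n)$-ball, locates a nearest violator, and emits the first pointer on the computed route.

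The main obstacle, as in \cite{balliu2024shared-randomness}, is calibrating the grammar so that it is simultaneously \emph{tight} (admits no legal non-$\bot$ labeling on any clean gadget, including pathological partial structures that could otherwise form spurious cycles) and \emph{permissive} (every potential violator in any non-clean connected graph is reachable from every other node by a grammar-respecting path of length $O(\log n)$). Since the grammar and constraints from \cite{balliu2024shared-randomness} have already been verified to satisfy both properties in the unmarked case, and the modification only strengthens clause~(i) by permitting $\lerror$ on any input-$1$ node, both properties transfer directly: the tightness argument still works because we only consider inputs with all zero bits, and the permissiveness argument is unchanged as marked nodes behave identically to constraint-violating nodes from the point of view of the routing.
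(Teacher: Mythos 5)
The lemma is cited verbatim from \cite{balliu2024shared-randomness}; this paper gives no proof of its own, so you are effectively re-deriving that result. Unfortunately, the grammar you specify --- forbid pointing to a left child, and forbid a vertical ($\lup$/$\ldown$) move immediately after a horizontal ($\lleft$/$\lright$) move --- is not enough to rule out cheating, which kills the first bullet. Concretely: take a properly $C^{\ltreelike}$-labeled, unmarked tree-like gadget of height at least $2$, pick any node $v$ that is the right child of its parent $u$, set $v$'s output to ``point up to $u$'' ($\lparent$), set $u$'s output to ``point down to its right child'' ($\lrch$), and set every other node's output to $\bot$. Both moves are vertical (so your no-vertical-after-horizontal rule is vacuous), neither one targets a left child, and each pointee is non-$\bot$, so all of your constraints (i)--(iii) are satisfied; yet not every node outputs $\bot$. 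Similar two-cycles arise via the pairs $(\lleft,\lright)$ and $(\lrch,\lparent)$. The figure caption in the paper prefaces its rules with ``for example,'' signalling that the displayed rules are not the full grammar; your acyclicity argument (``the forbidden left-child move rules out descending into the subtree one just came from'') leaves the right-child case completely uncovered.

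There are two further, secondary issues. The routing you propose for the second bullet --- up to the LCA, then along a same-level path, then a vertical drop through right-child moves --- contains a horizontal-then-vertical transition, which is forbidden by your own grammar; so even if the grammar were correct your existence argument would not go through as written. And the $O(\log n)$ distance bound to a violator is argued only under the assumption that $G$ is a tree-like gadget (``the tree-like gadget has $n = 2^{\Theta(\ell)}$ nodes''), whereas the lemma lets $G$ be an arbitrary connected $(\{0,1\},\EE^{\ltreelike})$-labeled graph. What you actually need there is a volume argument: if some node saw no $C^{\ltreelike}$-violation and no marked node within radius $c\log n$, the constraints $C^{\ltreelike}$ would force its $c\log n$-ball to contain more than $n$ nodes, a contradiction. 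That is a different statement from the geometric fact you invoke.
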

On a high level, we will later use $\problem^{\mathrm{badtree}}$ as a black box, and we will mark nodes that witness some local errors that are not related to the tree-like gadget itself. In this way, we will be able to use $\problem^{\mathrm{badtree}}$ to also prove errors that are unrelated to the tree-like gadget itself. An example of valid solution of $\problem^{\mathrm{badtree}}$ is depicted in \Cref{fig:badtree-solution}.

\subsection{Octopus gadget}\label{sec:octopus gadget}
In this section, we formally define the notion of \emph{octopus gadget}, then we prove that an octopus gadget is locally checkable, and finally we define the LCL problem $\problem^{\lbadgadget}$ and prove some properties about it.

\begin{definition}[Octopus gadget]\label{def:quantum-advantage:octopus-gadget}
	Let $x \ge 1$ be a natural number, and \(\eta = (\eta_0, \dots, \eta_{2^{x-1}-1}) \) a vector of \(2^{x-1}\) entries in \(\{1,2\}\).
	Let \(W = \{w_{(i,j)}\}_{(i,j) \in I}\) be a family of positive integer weights, where $I$ is the set containing all pairs $(i,j)$ satisfying \((i,j) \in \{0,1,\dots,2^{x-1}-1\}\times\{1,2\}\)  and \(j \le \eta_{i}\).

	A graph \(G = (V,E)\) is an \((x,\eta,W)\)-\emph{octopus gadget} if there exists a labeling \(\lambda: V \to \LL = I \cup \{\text{root}\}\) of the nodes of \(G\) such that the following holds.
	\begin{enumerate}
		\item For each element $y \in \LL$, let \(G_y\) be the subgraph of \(G\) induced by nodes labeled with \(y\). Then, for all \(y \in \LL\), \(G_y\) must be a tree-like gadget according to \cref{def:tree-like-gadget}.
		\item For all \(y,z \in \LL\) such that \(y \neq z\), \(G_y\) and \(G_z\) must be disjoint.
		\item \(G_{\text{root}}\) has height \(x\) and, for all \((i,j) \in I\), \(G_{(i,j)}\) has height \(w_{(i,j)} \in W\).
		\item For all \((i,j) \in I\), there is an edge connecting the node of \(G_{(i,j)}\) that has coordinates \((0,0)\) with the node of \(G_{\text{root}}\) that has coordinates \((x-1,i)\).
	\end{enumerate}
	\(G_{\text{root}}\) is called the \emph{head-gadget} and, for all \((i,j) \in I\), \(G_{(i,j)}\) is called a \emph{port-gadget}.
\end{definition}

\paragraph{Local checkability.}
While \Cref{def:quantum-advantage:octopus-gadget} gives the definition of octopus gadget from a \emph{global} perspective, we now define a finite set of labels and a set of local constraints satisfying that a connected graph is a properly labeled octopus gadget if and only if these \emph{local} constraints are satisfied by all nodes.

We first define the sets of labels $\VV^{\lgadget}$ and $\EE^{\lgadget}$,  and then we define a set of local constraints $\CC^{\lgadget}$ over these labels. We will prove that a connected graph $G$ can be ($\VV^{\lgadget}$, $\EE^{\lgadget}$)-labeled such that the constraints $\CC^{\lgadget}$ are satisfied on all nodes if and only if $G$ is an ($x$, $\eta$, $W$)-octopus gadget, for some $x$, $\eta$, and $W$.

The sets of labels are defined as follows:
\begin{align*}
\VV^{\lgadget} &= \{\gadgetN, \gadgetP\}, \\
\EE^{\lgadget} &= \{\nplink_1, \nplink_2, \pnlink\} \cup \EE^{\ltreelike},
\end{align*}
where $\nplink$ stands for ``head-port link'' and $\pnlink$ stands for ``port-head link'', respectively. Recall that the labels in $\VV^{\lgadget}$ are \emph{node} labels, while $\EE^{\lgadget}$ are labels for \emph{node-edge pairs}. 

We now define the set of constraints $\CC^{\lgadget}$. An edge $e = \{u,v\}$ is called \emph{internal} if and only if $\{ L_u(e), L_v(e) \} \cap \{ \nplink_1, \nplink_2, \pnlink \} = \emptyset$, and \emph{external} otherwise. 
\begin{myframe}{The constraints $\CC^{\lgadget}$}
	\begin{itemize}
		\item[0.] In the subgraph induced by internal edges, all the constraints in $\CC^{\ltreelike}$ must be satisfied.\label{cons-gadget:tree}
		
		\item[1.] For each edge $e = \{u,v\}$, it must hold that nodes $u$ and $v$ are both labeled \(\gadgetN\), or both labeled \(\gadgetP\), if and only if $e$ is an internal edge.\label{cons-gadget:fixed-point}
		
		\item[2.] For each edge $e = \{u,v\}$, if $L_u(e) = \nplink_j$ for some \(j \in \{1,2\}\), then $L_v(e) = \pnlink$, and vice versa.\label{cons-gadget:edge}
		
		\item[3.] If a node $v$ is labeled with $\gadgetN$ and $e$ is an edge incident to $v$, then $L_v(e) \neq \pnlink$.\label{cons-gadget:node-edge}
		
		\item[4.] If a node $v$ is labeled with $\gadgetP$ and $e$ is an edge incident to $v$, then $L_v(e) \neq \nplink_j$ for all \(j \in \{1,2\}\).\label{cons-gadget:port-edge}
		
		\item[5.] A node $v$ has an incident edge labeled $\nplink_j$ for some \(j \in \{1,2\}\) if and only if $v$ is labeled with $\gadgetN$ and it has no incident edge $e$ satisfying $L_v(e) \in \{\llch, \lrch\}$.\label{cons-gadget:child-link}
		
		\item[6.] A node $v$ has an incident edge labeled $\pnlink$ if and only if $v$ is labeled with $\gadgetP$ and it has no incident edge $e$ satisfying $L_v(e) = \lparent$.\label{cons-gadget:parent-link}
		
		\item[7.] A node cannot have more than one incident edge labeled with $\pnlink$, \(\nplink_1\), and \(\nplink_2\).\label{cons-gadget:no-multiple}
		
		\item[8.] If a node \(v\) has an incident edge \(e\) such that \(L_v(e) = \nplink_2\), then it has another incident edge \(e'\) such that \(L_v(e') = \nplink_1\).\label{cons-gadget:at_least}
		\end{itemize}

\end{myframe}

\begin{lemma}\label{lem:octopuslabeling}
    Let $G$ be a connected graph that is ($\VV^{\lgadget}$, $\EE^{\lgadget}$)-labeled such that $\CC^{\lgadget}$ is satisfied at all nodes. Then, $G$ is an octopus gadget.
\end{lemma}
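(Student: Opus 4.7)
The strategy is to decompose $G$ into an ``internal'' and an ``external'' part, recognise each internal connected component as a tree-like gadget whose role (head vs.\ port) is fixed by its $\VV^{\lgadget}$-label, and then check that the external edges wire these components together in exactly the way that \cref{def:quantum-advantage:octopus-gadget} prescribes. The only place where global connectedness really bites is in ruling out several disjoint ``heads''; everything else is a direct translation of the local constraints.

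First I would let $G^{\lintra}$ be the subgraph of $G$ consisting of the internal edges. By constraint~1, every internal edge has both endpoints labeled with the same element of $\VV^{\lgadget}$, so every connected component of $G^{\lintra}$ is monochromatic, either entirely $\gadgetN$ or entirely $\gadgetP$. Since constraint~0 enforces $\CC^{\ltreelike}$ on $G^{\lintra}$, \cref{lemma:constraints-to-treelike} implies that each such component is a tree-like gadget; I will call these the \emph{head components} and the \emph{port components}, according to their node label. Next, by constraints~2--4 every external edge carries the label pair $\{\nplink_j,\pnlink\}$ on its two half-edges, with the $\nplink_j$-side in a head component and the $\pnlink$-side in a port component. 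Constraint~5 then forces the $\nplink_j$-endpoint to be a node with no $\llch$ or $\lrch$ half-edge, i.e.\ a leaf of its head component, while constraint~6 forces the $\pnlink$-endpoint to be the root of its port component (the unique node without a $\lparent$ half-edge). Constraint~7 caps the external incidences per node, and constraint~8 guarantees that $\nplink_2$ can appear only together with $\nplink_1$.

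The delicate step, and what I expect to be the main obstacle, is showing that $G$ contains exactly one head component. Each port component has a unique root, and by constraints~6 and~7 this root is the only node of the component carrying an external incidence, and it carries exactly one such incidence. Consequently, in the bipartite ``component incidence graph'' whose vertices are the head and port components and whose edges are the external edges of $G$, each port component has degree exactly one; hence this incidence graph is a forest in which every port is a leaf. Connectedness of $G$ then forces this forest to be a single star centred at one head component $H$; let $x$ denote the height of $H$.

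Finally I would read off the octopus parameters. Index the leaves of $H$ by $i\in\{0,\dots,2^{x-1}-1\}$ via the coordinate $(x-1,i)$ inherited from the tree-like labeling, set $\eta_i$ to the leaf's number of external incidences (which lies in $\{1,2\}$ by constraints~5, 7 and~8), and for $j\le\eta_i$ let $G_{(i,j)}$ be the port component reached by the $\nplink_j$-edge and $w_{(i,j)}$ its height. Defining $\lambda$ by sending every node of $H$ to $\mathrm{root}$ and every node of $G_{(i,j)}$ to $(i,j)$ realises all four conditions of \cref{def:quantum-advantage:octopus-gadget}: conditions~1 and~3 come from the tree-like decomposition of Step~1, condition~2 is disjointness of distinct internal components, and condition~4 is precisely what was verified about external edges. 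Hence $G$ is an $(x,\eta,W)$-octopus gadget, as required.
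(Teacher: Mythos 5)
Your proof is correct and follows essentially the same route as the paper's: decompose $G$ along internal edges into monochromatic tree-like gadgets, use constraints 2--8 to pin down that external edges run exactly from head-leaves to port-roots, and then read off the octopus parameters. The one organizational difference is that you package the ``exactly one head component'' step via the bipartite component-incidence graph (ports are leaves, so a connected tree must be a star around a single head), whereas the paper argues it more directly by picking an arbitrary node, walking to a head gadget, and then invoking connectedness at the end; both arguments are fine and rely on the same constraints. One small point worth making explicit in your write-up: you should observe that the incidence graph cannot consist of a lone port component (constraint~6 forces the port-root to have a $\pnlink$ half-edge, whose other endpoint must be a head node by constraints 2 and 4), so at least one head exists---this is stated explicitly in the paper's proof but is only implicit in yours.
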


\begin{proof}
	In the following, by labels of an edge $e = \{u,v\}$ we denote the labels $L_v(e)$ and $L_u(e)$.
	By constraint 2, for each edge $e$ it holds that the labels of $e$ are either both in $\{\nplink_1, \nplink_2, \pnlink\}$, or both in $\EE^{\ltreelike}$. Hence, internal edges have only labels from  $\EE^{\ltreelike}$ and external edges have only labels from $\{\nplink_1, \nplink_2, \pnlink\}$. Moreover, by constraint 2, an external edge has exactly two labels from $\{\nplink_1, \nplink_2, \pnlink\}$, and these two labels cannot be $\{\nplink_1, \nplink_2\}$. 
	
	By constraint 0 and \Cref{lemma:constraints-to-treelike}, each connected component in the subgraph induced by internal edges is a tree-like gadget (possibly, a graph without edges). Moreover, for each connected component, by constraint 1, it must hold that all nodes in the component are either all labeled \(\gadgetN\), or all labeled \(\gadgetP\).

	Summarizing what we have showed so far, each connected component in the subgraph induced by internal edges is either a tree-like gadget where all nodes are labeled \(\gadgetN\) or a tree-like gadget where all nodes are labeled \(\gadgetP\), and external edges have labels $\{\nplink_1, \pnlink\}$ or labels $\{\nplink_2, \pnlink\}$. Hence, we only need to prove that external edges are properly connected.
	
	Suppose $G$ is non-empty. Let $v$ be a node in $G$. We prove that the connected component $G'$ (in $G$, including external edges) containing $v$ is an octopus gadget. 
	
	Let $H$ be the connected component containing $v$ in the graph induced by internal edges. As discussed, $H$ is a tree-like gadget. 
	We start by proving that, in $G'$, there must be a tree-like gadget where all nodes are labeled \(\gadgetN\). If $v$ itself is labeled \(\gadgetN\), we are done. If $v$ is marked \(\gadgetP\), then by constraint 6 the node $u$ in the tree-like gadget $H$ that has no incident half-edge labeled $\lparent$ (i.e., the node with coordinates $(0,0)$) must have an edge $e = \{u,z\}$ satisfying $L_u(e) = \pnlink$. By constraint 2, $L_z(e) = \nplink_j$ for some \(j \in \{1,2\}\), and by constraint 4 no node labeled \(\gadgetP\) can have incident edges with label $\nplink_j$ for any \(j \in \{1,2\}\). Hence, $z$ must be a node labeled \(\gadgetN\).

	Let $H_h$ be the connected component containing $z$ in the graph induced by internal edges. $H_h$ must be a tree-like gadget.
	By constraint 5, only nodes in the last layer of $H_h$ can have an incident edge labeled $\nplink_j$ for some \(j \in \{1,2\}\), and such nodes must have at least one such edge. 
	By constraint 7, each of these nodes cannot have other edges labeled with \(\nplink_j\).
	Let $e = \{w_h,w_p\}$ be an arbitrary such edge, where $L_{w_h}(e) = \nplink_j$ for some \(j \in \{1,2\}\). 
	By constraint 2, it must hold that  $L_{w_p}(e) = \pnlink$. Moreover, by constraint 7, node $w_p$ cannot have additional incident edges labeled $\pnlink$. 
	By constraint 3, node $w_p$ must be labeled \(\gadgetP\) and by constraints 6, 0, and 1, node $w_p$ must be the root of a tree-like gadget $H_p$ where all nodes are labeled \(\gadgetP\). 
	By constraint 6, no other node of $H_p$ can have edges labeled $\pnlink$. 
	Furthermore, by constraint 8, each node in the last layer of $H_p$ must have an incident edge labeled $\nplink_1$ if it has an incident edge labeled $\nplink_2$.

	We thus get that, for each head-gadget $H_h$, there is at least one port-gadget $H_p$ connected to each leaf of $H_h$ and at most two of them, and that $H_p$ is connected to exactly one head-gadget via its root. Hence, $G'$ is an octopus gadget. 
	Let \(x\) be the height of the tree-like gadget \(H_h\).
	Now, we assign the label \(\texttt{root}\) to all nodes in \(H_h\), and the label \((i,j)\) to all nodes in the port-gadget \(H_p\) if \(H_p\) is connected to \(H_h\) via the leaf \(v\) of \(H_h\) having coordinates \((x-1,i)\) through the edge \(e\) such that \(L_v(e) = \nplink_j\).
	By this assignment, we have shown that \(G'\) satisfies \cref{def:quantum-advantage:octopus-gadget}.

	Finally, since $G$ is connected, then $G' = G$.
\end{proof}

\begin{lemma}\label{lemma:label:octopus}
	Let $G$ be an octopus gadget. Then, $G$ can be ($\VV^{\lgadget}$, $\EE^{\lgadget}$)-labeled such that $\CC^{\lgadget}$ is satisfied at all nodes.
\end{lemma}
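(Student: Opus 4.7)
The plan is to construct an explicit labeling by leveraging the decomposition of $G$ guaranteed by \cref{def:quantum-advantage:octopus-gadget} and the labeling result for tree-like gadgets (\cref{lem:treelike-are-labellable}). Specifically, since $G$ is an $(x, \eta, W)$-octopus gadget, it comes with a partition labeling $\lambda: V \to \LL$ whose classes $G_y$ are tree-like gadgets, together with the set of external edges connecting, for each $(i,j) \in I$, the coordinate-$(0,0)$ node of $G_{(i,j)}$ to the coordinate-$(x-1,i)$ node of $G_{\text{root}}$. I would use these as the scaffolding for the labeling.

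First I would assign the node labels: give every node in $G_{\text{root}}$ the label $\gadgetN$, and every node in each $G_{(i,j)}$ the label $\gadgetP$. Next I would label the internal edges: for each $y \in \LL$, invoke \cref{lem:treelike-are-labellable} on $G_y$ to obtain a half-edge labeling from $\EE^{\ltreelike}$ satisfying $\CC^{\ltreelike}$. Since the $G_y$ are pairwise disjoint, these labelings do not interfere. Finally I would label the external edges: for each $(i,j) \in I$, on the edge connecting the coordinate-$(x-1,i)$ node $v_h \in G_{\text{root}}$ to the coordinate-$(0,0)$ node $v_p \in G_{(i,j)}$, set the half-edge at $v_h$ to $\nplink_j$ and the half-edge at $v_p$ to $\pnlink$.

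Having defined the labeling, I would verify each of the nine constraints of $\CC^{\lgadget}$ in turn. Constraint 0 is immediate from the application of \cref{lem:treelike-are-labellable} to each $G_y$, since internal edges are exactly those lying inside some $G_y$. Constraints 1--4 follow from the construction: every edge inside a $G_y$ has both endpoints labeled identically ($\gadgetN$ or $\gadgetP$) and receives labels from $\EE^{\ltreelike}$; every external edge links a $\gadgetN$ node to a $\gadgetP$ node with half-edge labels $\{\nplink_j, \pnlink\}$. For constraints 5 and 6, I would use the fact that the coordinate-$(x-1, i)$ nodes are exactly the leaves of $G_{\text{root}}$ (i.e., those with no incident $\llch$ or $\lrch$ half-edge in the tree-like labeling), and similarly the coordinate-$(0,0)$ node of each $G_{(i,j)}$ is the unique node of that tree-like gadget without an incident $\lparent$ half-edge; the ``only if'' directions follow because by \cref{def:quantum-advantage:octopus-gadget} external edges are attached precisely to these nodes. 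Constraint 7 follows because a leaf of $G_{\text{root}}$ at position $i$ has at most two external edges and we assigned distinct indices $j \in \{1, \dots, \eta_i\} \subseteq \{1,2\}$; and each port-gadget root is attached to exactly one external edge. Finally, constraint 8 holds because whenever a leaf is incident to an $\nplink_2$ edge, we must have $\eta_i = 2$, and so the same leaf is also incident to an $\nplink_1$ edge.

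I do not expect a serious obstacle here: the lemma is essentially the ``existence'' direction complementing \cref{lem:octopuslabeling}, and all of the work is bookkeeping against the nine constraints. The only mildly delicate point is handling constraints 5--8 simultaneously, namely checking that the correspondence between leaves of the head-gadget and roots of port-gadgets, together with the choice of indices $j \in \{1, \dots, \eta_i\}$, matches exactly what the local rules prescribe. This is where the decision to use $\nplink_j$ with the same index $j$ that appears in the octopus-gadget label $(i,j)$ pays off, making constraint 8 an immediate consequence of $\eta_i \in \{1,2\}$.
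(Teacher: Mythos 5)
Your proposal matches the paper's proof in every essential respect: assign $\gadgetN$ to head-gadget nodes and $\gadgetP$ to port-gadget nodes, label internal edges via \cref{lem:treelike-are-labellable} applied to each tree-like gadget, and label each external edge with $\nplink_j$ on the head side (using the $j$ from the label $(i,j)$) and $\pnlink$ on the port side. The paper simply states ``by construction, all constraints of $\CC^{\lgadget}$ are satisfied'' without itemizing, whereas you carry out the constraint-by-constraint verification explicitly and correctly.
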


\begin{proof}
	Let $G=(V,E)$ be an octopus gadget. We can assign the labels in this way.
	\begin{itemize}
		\item We assign the label $\gadgetN$ to all the nodes of the head-gadget.
		\item We assign the label $\gadgetP$ to all the nodes of the port-gadgets.
		\item By \Cref{lem:treelike-are-labellable}, we can assign labels to each tree-like gadget such that the constraints of $\CC^\ltreelike$ are satisfied. 
		\item Let $e = \{u,v\}$ be an edge connecting a head-gadget to a port-gadget (i.e., the edges of point 4 of \cref{def:quantum-advantage:octopus-gadget}), where $u$ is the node in the head-gadget.
		If \(v\) has the label \((i,j) \in I\) in \cref{def:quantum-advantage:octopus-gadget}, we set $L_u(e) = \nplink_j$ and $L_v(e) = \pnlink$.
	\end{itemize}
	By construction, all the constraints of $\CC^\lgadget$ are satisfied.
\end{proof}

\paragraph{The LCL problem \boldmath $\problem^{\lbadgadget}$.}

We now define an LCL problem $\problem^{\lbadgadget}$ that, on a high level, allows the nodes to prove that the graph in which they are is not an octopus gadget. Similarly as in the case of $\problem^{\lbadtree}$, nodes also receive a binary input, where nodes that receive $1$ are called \emph{marked}. Again, we will use this input later, to mark nodes that witness errors that may be unrelated to the octopus gadget itself.
On a high level, the problem will satisfy the following properties.
\begin{itemize}
	\item Nodes receive as input weather they are marked or not.
	
	\item There are two possible types of output: a node can either produce an empty output ($\bot$), or it can output an error. In the latter case, a node needs to also output a locally checkable proof of the fact that the graph is not an octopus gadget or that it contains at least one marked node.
	
	\item If the octopus gadget is valid and it does not contain any marked node, the constraint of the problem are defined such that the only valid solution to the problem $\problem^{\lbadgadget}$ is the one where all nodes output $\bot$.
	
	\item Otherwise, if the graph is not an octopus gadget or it contains at least one marked node, \emph{all} nodes are able to spend $O(\log n)$ round in the \local model to produce a proof of this fact.
\end{itemize}
We now give a formal definition of the LCL problem $\problem^\lbadgadget$.
The inputs are defined as follows.
\begin{itemize}
	\item Each node $v$ receives an input pair $(m_v, g_v)$ from the set $\{0,1\} \times \VV^{\lgadget}$, that is, each node receives a pair, where the first element $m_v$ denotes whether the node is marked or not, while the second element $g_v$ of the pair is an element from the previously described set $\VV^{\lgadget}$.
	\item  Each half-edge receives an input from the previously described set $\EE^{\lgadget}$. 
\end{itemize}
Let $\VV^{\lbadtree}$ be the node output labels of the LCL problem $\problem^{\lbadtree}$. The possible output labels $\VV^{\lbadtree}$ are the following. 
\begin{itemize}
	\item The label $\bot$, which represent an empty output.
	\item A pair $(\lerror,\mathcal{M})$, where $\mathcal{M}$ is a triple, and each element of the triple is a label from $\VV^{\lbadtree}$.
\end{itemize}
We denote the pairs $(\lerror,\mathcal{M})$ as \emph{error outputs}. On a high level, the constraints $\CC^{\lbadgadget}$ are defined such that the output $\bot$ is always allowed, while the constraints on the error outputs are defined such that these outputs encode a proof of the fact that there is an error in the graph.
Informally, the error output triples will be used as follows.
\begin{itemize}
	\item If the \emph{first} element of a triple of some node $v$ is not $\bot$, it means that, in the connected component of the subgraph induced by internal edges that contains $v$ there is some error.
	\item If the \emph{second} element of a triple of some node $v$ is not $\bot$, it means that, even if the connected component of $v$ may be correct, node $v$ is in a head gadget connected to at least one broken port gadget.
	\item If the \emph{third} element of a triple of some node $v$ is not $\bot$, it means that, even if the connected component of $v$ may be correct, and even if $v$ is in a correct port gadget connected to a correct head gadget, node $v$ is in a port gadget connected to a head gadget that is connected to at least one broken port gadget.
\end{itemize}
In order to allow this labeling, we allow more and more nodes to be marked, as follows.
\begin{itemize}
	\item For the first instance of $\problem^{\lbadtree}$, only marked nodes and nodes that already witness some error are marked.
	\item For the second instance, additionally, nodes in head gadgets that are neighbors of nodes of port gadgets that output an error in the previous instance are marked.
	\item For the third instance, additionally, nodes in port gadgets that are neighbors of nodes of head gadgets that output an error in the previous instance are marked.
\end{itemize}
More formally, the constraints $\CC^{\lbadgadget}$ are defined as follows.
\begin{myframe}{The constraints $\CC^{\lbadgadget}$}
	\begin{itemize}
		\item A node can always output $\bot$.\label{cons-badoct:bot}
		\item If a node $v$ outputs $(\lerror,(x_{v,1},x_{v,2},x_{v,3}))$, then the following must hold. \label{cons-badoct:error}
		\begin{enumerate}
			\item The triple $(x_{v,1},x_{v,2},x_{v,3})$ must be different from $(\bot,\bot,\bot)$.
			\item Consider the labeling of the graph obtained by labeling each node $v$ with the label $g_v \in \VV^{\lgadget}$, and each half-edge $(v,e)$ with the label $L_v(e) \in \EE^{\lgadget}$ (i.e., on the nodes we take the second element of their input pair, while on the edges we take the original input half-edge label). For each node $v$, let $i_{v,1}:= 1$ if, according to this labeling, node $v$ does not satisfy the constraints $\CC^{\lgadget}$ or if $m_v = 1$, and let $i_{v,1} := 0$ otherwise. 
			\item Consider the labeling of the subgraph induced by internal edges obtained by input labeling each node $v$ with $i_{v,1}$ and output labeling each node $v$ with $x_{v,1}$. This labeling must satisfy the constraints of $\problem^{\lbadtree}$.
			\item Let $i_{v,2} := 1$ if $i_{v,1} = 1$ or if $v$ is a node such that $g_v = \gadgetN$ and such that there exists an edge $e = \{u,v\}$ satisfying $L_v(e) = \nplink_j$ for some \(j \in \{1,2\}\) and $x_{u,1} \neq \bot$. Let $i_{v,2} := 0$ otherwise.
			\item Consider the labeling of the subgraph induced by internal edges obtained by input labeling each node $v$ with $i_{v,2}$ and output labeling each node $v$ with $x_{v,2}$. This labeling must satisfy the constraints of $\problem^{\lbadtree}$.
			\item Let $i_{v,3} := 1$ if $i_{v,2} = 1$ or if $v$ is a node such that $g_v = \gadgetP$ and such that there exists an edge $e = \{u,v\}$ satisfying $L_v(e) = \pnlink$ and $x_{u,2} \neq \bot$. Let $i_{v,3} := 0$ otherwise.
			\item Consider the labeling of the subgraph induced by internal edges obtained by input labeling each node $v$ with $i_{v,3}$ and output labeling each node $v$ with $x_{v,3}$. This labeling must satisfy the constraints of $\problem^{\lbadtree}$.
		\end{enumerate}
	\end{itemize}
\end{myframe}

\begin{lemma}\label{lem:validoctopus}
	Let $G$ be a $(\{0,1\} \times \VV^{\lgadget}$, $\EE^{\lgadget})$-labeled graph where $\CC^{\lgadget}$ is satisfied for all nodes and all nodes are not marked. Then, the only valid solution for $\problem^{\lbadgadget}$ is the one assigning $\bot$ to all nodes.
\end{lemma}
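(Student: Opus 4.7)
The plan is to proceed by contradiction: assume some node $v$ outputs a non-$\bot$ label, i.e., an error output $(\lerror,(x_{v,1},x_{v,2},x_{v,3}))$ with $(x_{v,1},x_{v,2},x_{v,3}) \neq (\bot,\bot,\bot)$, and derive that in fact $x_{v,k} = \bot$ for all $v$ and all $k \in \{1,2,3\}$ by three successive applications of \cref{lem:badtree} to the tree-like components of the subgraph induced by internal edges.

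First, I would argue about the auxiliary inputs $i_{v,1}$. Since $\CC^{\lgadget}$ is satisfied at every node by assumption, and since no node is marked (so $m_v = 0$ for all $v$), the definition of $i_{v,1}$ in $\CC^{\lbadgadget}$ gives $i_{v,1} = 0$ for every $v$. Moreover, constraint~0 of $\CC^{\lgadget}$ guarantees that the subgraph induced by internal edges satisfies $\CC^{\ltreelike}$ everywhere, so by \cref{lemma:constraints-to-treelike} each of its connected components is a tree-like gadget. Applying \cref{lem:badtree} to each such component (all nodes unmarked, valid tree-like gadget) forces $x_{v,1} = \bot$ for every $v$. Now feed this into the next layer: since $x_{u,1} = \bot$ for every $u$, the clause defining $i_{v,2}$ in terms of an $\nplink_j$-neighbor with $x_{u,1} \neq \bot$ can never fire, and combined with $i_{v,1} = 0$ this yields $i_{v,2} = 0$ for all $v$. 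A second application of \cref{lem:badtree} to the same tree-like components (again with all-zero inputs) gives $x_{v,2} = \bot$ for all $v$. Repeating the same reasoning once more — using that $x_{u,2} = \bot$ kills the extra marking condition for $i_{v,3}$ — yields $i_{v,3} = 0$ and hence $x_{v,3} = \bot$ for every $v$.

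Thus $(x_{v,1},x_{v,2},x_{v,3}) = (\bot,\bot,\bot)$ for every $v$, which contradicts the requirement in $\CC^{\lbadgadget}$ that an error output have a non-$(\bot,\bot,\bot)$ triple. Consequently the only admissible output for any node is $\bot$, as claimed.

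\textbf{Main obstacle.} There is no deep obstacle: the argument is a clean three-step cascade that leverages exactly the rigidity of $\problem^{\lbadtree}$ on valid, unmarked tree-like gadgets guaranteed by \cref{lem:badtree}. The only care required is bookkeeping at each stage — verifying that the induced inputs $i_{v,k}$ really are zero at every node — which hinges critically on both hypotheses of the lemma: that $\CC^{\lgadget}$ holds throughout (so $i_{v,1}$ is not forced to $1$ by a local violation) and that no node is initially marked (so the $m_v$ component does not spuriously set $i_{v,1} = 1$).
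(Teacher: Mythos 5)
Your proof is correct and follows essentially the same three-step cascade as the paper: show $i_{v,1}=0$ everywhere (using that $\CC^{\lgadget}$ holds and no node is marked), invoke \cref{lem:badtree} to conclude $x_{v,1}=\bot$, then propagate this to $i_{v,2}$, $x_{v,2}$, and $i_{v,3}$, $x_{v,3}$, and finally contradict the requirement that the error triple differ from $(\bot,\bot,\bot)$. The only minor difference is that you explicitly cite \cref{lemma:constraints-to-treelike} to justify that the internal-edge components are valid tree-like gadgets, a step the paper leaves implicit.
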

\begin{proof}
	For a contradiction, assume that some node $v$ in $G$ outputs $(\lerror,(x_{v,1},x_{v,2},x_{v,3}))$ for some labels $x_{v,1}$, $x_{v,2}$, $x_{v,3}$. By the definition of $i_{v,1}$ in constraint 2, for all nodes $v$ it must hold that $i_{v,1} = 0$, and by constraint 3 and \Cref{lem:badtree} we thus get that $x_{v,1} = \bot$.
	Similarly, by the definition of $i_{v,2}$ in constraint 4, for all nodes $v$ it must hold that $i_{v,2} = 0$, and by constraint 5 and \Cref{lem:badtree} we thus get that $x_{v,2} = \bot$.
	Again, by the definition of $i_{v,3}$ in constraint 6, for all nodes $v$ it must hold that $i_{v,3} = 0$, and by constraint 7 and \Cref{lem:badtree} we thus get that $x_{v,3} = \bot$.
	The claim then follows by the fact that constraint 1 forbids the triple $(\bot,\bot,\bot)$.
\end{proof}

\begin{lemma}\label{lem:badoctopus}
	Let $G$ be a connected $(\{0,1\} \times \VV^{\lgadget}$, $\EE^{\lgadget})$-labeled graph where either $\CC^{\lgadget}$ is not satisfied on at least one node, or there is at least one marked node. Then, there exists a solution for $\problem^{\lbadgadget}$ where all nodes produce an output different from $\bot$. Moreover, such a solution can be computed in $O(\log n)$ deterministic rounds in the \local model.
\end{lemma}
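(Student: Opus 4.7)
I plan to design a three-phase deterministic \local algorithm. In phase $k \in \{1,2,3\}$, each node $v$ first computes its mark $i_{v,k}$ locally in $O(1)$ rounds (reading its input and, for $k \ge 2$, the phase-$(k-1)$ output in its constant-radius neighborhood), and then all nodes jointly invoke the $O(\log n)$-round algorithm of \Cref{lem:badtree} on the subgraph induced by internal edges, with the marking input just defined, to produce the coordinate $x_{v,k} \in \VV^{\lbadtree}$. Each node finally outputs $(\lerror,(x_{v,1},x_{v,2},x_{v,3}))$. The running time is dominated by three invocations of \Cref{lem:badtree}, giving $O(\log n)$ total. Validity (constraints~3, 5, 7 of $\CC^{\lbadgadget}$) follows immediately from \Cref{lem:badtree}; the substantive task is to show that every node outputs at least one non-$\bot$ coordinate, so that constraint~1 of $\CC^{\lbadgadget}$ is met.

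\paragraph{Key structural step.} The crux will be a structural claim about the abstract graph $\GG$ obtained from $G$ by contracting each connected component of the internal-edge subgraph and keeping only external edges. Call an external edge \emph{proper} if no constraint of $\CC^{\lgadget}$ is violated at either of its endpoints because of it; since each of constraints~2--8 is checkable at $O(1)$ radius, an improper external edge necessarily forces $i_{v,1}=1$ at at least one of its endpoints. I plan to show that the subgraph of $\GG$ induced by the proper edges is a disjoint union of stars centered at head components: constraint~2 forces the labels to be $\{\nplink_j,\pnlink\}$; constraints~3 and 4 then force the endpoints to be one head and one port; and constraints~6 and 7 force the port-side endpoint to be the unique root of its component and to carry at most one such edge. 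Such a star has abstract diameter at most two, which is exactly what the three-phase propagation can cover: phase~2 marks head components adjacent (via $\nplink_j$ on the head side) to any port component already in $F_1$, and phase~3 marks port components adjacent (via $\pnlink$ on the port side) to any head component already in $F_2$, where $F_k$ denotes the set of components containing some node with $i_{v,k}=1$.

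\paragraph{Finishing argument and main obstacle.} Since $F_1 \subseteq F_2 \subseteq F_3$ and a component is \emph{final} (all its nodes output non-$\bot$) iff it lies in $F_3$, a short case analysis will show that every component $C$ of $G$ lies in $F_3$: if $C$ lies in a proper star $\mathcal{S}$ containing an initially marked node, the two-hop propagation within $\mathcal{S}$ handles it; if $\mathcal{S}$ has no initial marker then it also has no incident improper edge (else that edge would mark an endpoint inside $\mathcal{S}$), so $\mathcal{S}$ is a union of connected components of $G$, forcing $\mathcal{S}=G$ by connectedness and contradicting the hypothesis; and if $C$ is incident to an improper edge $e$ at a node $v \in C$ that is itself unmarked, then constraint~2 forces $L_v(e)\in\{\nplink_j,\pnlink\}$ and the other endpoint must be the marked one, so phase~2 (if $v$ is a head) or phase~3 (if $v$ is a port) pulls $C$ into $F_3$. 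I expect the main obstacle to be the careful accounting of which constraint violation marks which endpoint of each improper external edge; once this is pinned down, the three-phase design of $\CC^{\lbadgadget}$ matches exactly the depth-$2$ propagation required in the proper-star structure, and the remaining bookkeeping is routine.
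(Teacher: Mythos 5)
Your algorithm and overall plan---three sequential invocations of $\problem^{\lbadtree}$ with monotonically growing marking sets, a star-structure claim for the contracted component graph, and a depth-two propagation---coincide with the paper's, and the algorithm itself is correct. However, your finishing case analysis has a real gap. In case~2 you assert that if the proper star $\mathcal{S}$ contains no initial marker then it has no incident improper edge, ``else that edge would mark an endpoint inside $\mathcal{S}$.'' This is false: several constraints of $\CC^{\lgadget}$ (in particular constraints~6 and~7) are checked entirely at one endpoint, so an improper external edge $\{u,w\}$ with $u \in \mathcal{S}$ can have its only violation detected at $w \notin \mathcal{S}$---for instance $w$ is a non-root port node carrying a $\pnlink$ half-edge (constraint~6 violated only at $w$), or $w$ carries two $\pnlink$ half-edges (constraint~7 violated only at $w$)---while every constraint at $u$ is satisfied. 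Your case~3 patches only the situation where $C$ itself is incident to the improper edge; it does not cover, for example, a clean port component $P_1$ in a markless star whose center $H$ has a \emph{different} head-leaf $\ell$ incident to an improper edge whose violation sits entirely outside $\mathcal{S}$. Neither your case~2 nor your case~3 then establishes $P_1 \in F_3$, even though the algorithm does handle it ($\ell$ enters $F_2$, hence $H \in F_2$, hence $P_1 \in F_3$). This is not the routine bookkeeping you anticipated: the false implication in case~2 is load-bearing.

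The paper's proof avoids endpoint-tracking altogether. It labels the contracted components $E$, $H$, $P$ directly from the phase-1 output of $\problem^{\lbadtree}$ (so $E$ means ``some node of this component has $i_{v,1}=1$''), observes that in the subgraph of $\hat G$ induced by non-$E$ nodes the $P$-nodes have degree one and both $H$-nodes and $P$-nodes form independent sets, and then uses connectedness of $G$ to conclude that \emph{every} $H$-node has at least one $E$-neighbor: otherwise its star would be an entire connected component of $\hat G$, hence (by connectedness) all of $\hat G$, which would make $G$ a correctly labeled unmarked octopus gadget, contradicting the hypothesis. The last two phases are then one propagation step each. If you wish to keep your ``proper edge'' formulation, you should replace case~2 by the correct dichotomy---a markless star is either all of $G$ or has an incident improper edge whose $\mathcal{S}$-side endpoint is a head leaf---and then verify that such a leaf entering $F_2$ suffices to pull the entire star into $F_3$.
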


\begin{proof}
	In order to produce the claimed solution, we will use, as a black box, the algorithm for $\problem^{\lbadtree}$ reported in \cref{lem:badtree}.
	The algorithm for solving $\problem^{\lbadgadget}$ works as follows. Each node $v$ spends $O(1)$ rounds to compute $i_{v,1}$. 
	By \cref{lem:badtree}, nodes can spend $O(\log n)$ deterministic classical rounds to compute a solution for $\problem^{\lbadtree}$, where $i_{v,1}$ is the input of $v$ for the problem, and, if in the subgraph induced by internal edges, in the connected component containing $v$, there is at least one marked node or a node not satisfying $\CC^{\ltreelike}$, then $v$ outputs a label different from $\bot$. For each node $v$, let $x_{v,1}$ be this output.

	Then, each node $v$ spends $O(1)$ rounds to compute $i_{v,2}$, and again, as before, nodes spend $O(\log n)$ rounds to compute a solution for $\problem^{\lbadtree}$, where this time the input of node $v$ is $i_{v,2}$.

	Finally, each node $v$ spends $O(1)$ rounds to compute $i_{v,3}$ and spends $O(\log n)$ rounds to compute a solution for $\problem^{\lbadtree}$, where the input of node $v$ is now $i_{v,3}$. If a node $v$ satisfies $x_{v,1} = x_{v,2} = x_{v,3}$, then it outputs $\bot$. Otherwise, it outputs $(\lerror,(x_{v,1}, x_{v,2}, x_{v,3}))$.

	The correctness of the output directly follows from the definition of $\CC^{\lbadgadget}$, and the runtime is clearly upper bounded by $O(\log n)$ deterministic classical rounds. We now prove that, if $G$ contains at least one marked node, or at least one node not satisfying $\CC^{\lgadget}$, then all nodes output a label different from $\bot$.

	Let $\hat{G}$ be the graph obtained by contracting each connected component induced by internal edges into a single node. If there are multiple edges between different connected components, we have parallel edges in $\hat{G}$. By the definition of the algorithm, it holds that all nodes $v$ of $G$ corresponding to the same node of $\hat{G}$, in the first output of $\problem^{\lbadtree}$ (i.e., the values $x_{v,1}$), either all output $\bot$ or all output something different. Let us label $E$ (which stands for \emph{error}) each node of $\hat{G}$ satisfying that all its nodes output something different from $\bot$. We label all the other nodes either $H$ (\emph{head}) or $P$ (\emph{port}), as follows. If a node of $\hat{G}$ is not labeled $E$, then either all its nodes of $G$ have input $\gadgetN$, or all its nodes of $G$ have input $\gadgetP$. In the former case, we label the node $H$, while in the latter case we label the node $P$.
	
	By \Cref{lem:octopuslabeling}, at least one node of $\hat{G}$ is labeled $E$. Moreover, by the definition of $\CC^{\lgadget}$, the graph $\hat{G}$ must satisfy the following properties.
	\begin{itemize}[noitemsep]
		\item Each node labeled $P$ must have degree $1$.
		\item Nodes labeled $P$ must form an independent set.
		\item Nodes labeled $H$ must form an independent set.
	\end{itemize}
	We thus get that each connected component of $\hat{G}$ induced by nodes not labeled $E$ must form stars centered at nodes labeled $H$.
	Moreover, we observe that each node labeled $H$ must have at least one neighbor labeled $E$, since otherwise we would get that, in $\hat{G}$, there is a connected component not containing $E$, which in $G$ corresponds to a valid octopus gadget.
	By the definition of the algorithm, we thus get that, in each connected component corresponding to a node labeled $H$, at least one node is marked when solving $\problem^{\lbadtree}$ for the second time. Let us update the labeling of $\hat{G}$ by changing nodes labeled $H$ into nodes labeled $E$ if their connected component contains nodes not outputting $\bot$.
	We get that, after this update, no node of $\hat{G}$ is labeled $H$.
	Hence, all nodes of $\hat{G}$ labeled $P$ have a neighbor labeled $E$.
	By the definition of the algorithm, we thus get that, in each connected component corresponding to a node labeled $P$, at least one node is marked when solving $\problem^{\lbadtree}$ for the third time. Let us update the labeling of $\hat{G}$ by changing nodes labeled $P$ into nodes labeled $E$ if their connected component contains nodes not outputting $\bot$.
	We get that, after this update, no node of $\hat{G}$ is labeled $P$.
	Hence, all nodes of $\hat{G}$ are labeled $E$, and hence no node of $G$ outputs $\bot$.
	
\end{proof}

\subsection{The family of proper instances}\label{sec:graph family}
In \cref{sec:intro_proper_instances}, we informally defined the family of graphs called \emph{proper instances} that we will use to prove our main result. 
We now provide a formal definition of such graphs.
The main ingredient is the octopus gadget defined in \cref{def:quantum-advantage:octopus-gadget}. 

\begin{definition}[Proper instance]\label{def:proper-instance}
	Let $G=(V,E)$ be a graph.
	We say that \(G\) is a \emph{proper instance} if there exists a node labeling function \(\lambda: V \to \{\texttt{intra-octopus}, \texttt{inter-octopus}\}\) with the following properties.
	\begin{enumerate}
		\item Every connected component in the subgraph induced by nodes labeled \(\texttt{intra-octopus}\) is an octopus gadget (according to \cref{def:quantum-advantage:octopus-gadget}).
		\item The subgraph induced by nodes labeled  \(\texttt{inter-octopus}\) does not contain any edge.
		\item A node $v$ labelled \(\texttt{intra-octopus}\) is connected to a node labeled \(\texttt{inter-octopus}\) if and only if $v$ has coordinates \((w-1,0)\) in the port-gadget $P$ containing $v$, where $w$ is the height of $P$ (that is, $v$ is the left-most leaf of the port-gadget containing $v$).
	\end{enumerate}
\end{definition}

\paragraph{Local checkability.}
While \Cref{def:proper-instance} gives the definition of proper instances from a \emph{global} perspective, we now define a finite set of labels and a set of local constraints satisfying that a connected graph is a properly labeled valid instance if and only if these \emph{local} constraints are satisfied by all nodes.

We first define the sets of labels $\VV^{\proper}$ and $\EE^{\proper}$,  and then we define a set of local constraints $\CC^{\proper}$ over these labels. We will prove that a connected graph $G$ can be ($\VV^{\proper}$, $\EE^{\proper}$)-labeled such that the constraints $\CC^{\proper}$ are satisfied on all nodes if and only if $G$ is a proper instance.

The sets of labels are defined as follows. The node labels are defined as $\VV^{\proper} = \VV^\lgadget \cup \{\inter\}$, where $\inter$ stands \emph{inter-octopus node}. The half-edge labels are defined as $\EE^{\proper} = \EE^\lgadget \cup \{\pilink, \iplink\}$, where $\pilink$ stands for ``port$-$inter-octopus link'' and  $\iplink$ stands for ``inter-octopus$-$port link''.

We now define the set of constraints $\CC^{\proper}$. An edge $e = \{u,v\}$ is called \emph{intra-octopus} if and only if $\{ L_u(e), L_v(e) \} \cap \{ \pilink, \iplink \} = \emptyset$, and \emph{inter-octopus} otherwise.

\begin{myframe}{The constraints $\CC^{\proper}$}
	\begin{itemize}
		\item[0.] In the subgraph induced by intra-octopus edges, all the constraints in $\CC^{\lgadget}$ must be satisfied.\label{cons-proper:octopus}
		
		\item[1.] For each edge $e = \{u,v\}$, if $L_u(e) = \pilink$, then $L_v(e) = \iplink$, and vice versa.\label{cons-proper:edge}
		
		\item[2.] Each node $v$ has an incident edge labeled $\pilink$ if and only if $v$ is labeled with $\gadgetP$ and it has no incident edge $e$ satisfying $L_v(e) \in \{\llch, \lrch, \lleft\}$.\label{cons-proper:pilink}
		
		\item[3.] Each node $v$ labeled $\inter$ has degree at least $1$ and only incident half-edges labeled $\iplink$. Moreover, nodes not labeled $\inter$ have no incident half-edges labeled $\iplink$.\label{cons-proper:iplink:iff}
	
		\end{itemize}

\end{myframe}

\begin{lemma}\label{lem:constraints-to-proper}
	Let \(G\) be any non-empty connected graph that is \((\VV^\proper, \EE^\proper)\)-labeled such that \(\CC^\proper\) is satisfied at all nodes. 
	Then, \(G\) is a proper instance according to \cref{def:proper-instance}.
\end{lemma}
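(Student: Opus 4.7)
The plan is to define the labeling $\lambda: V \to \{\texttt{intra-octopus}, \texttt{inter-octopus}\}$ by setting $\lambda(v) = \texttt{inter-octopus}$ exactly when $v$'s node label is $\inter$, and $\lambda(v) = \texttt{intra-octopus}$ otherwise (so $v$ is labeled $\gadgetN$ or $\gadgetP$). I will then verify the three conditions of \cref{def:proper-instance} in turn. The key preparation is a case analysis of edges: by constraint~1 of $\CC^\proper$ every inter-octopus edge has half-edge labels $\{\pilink, \iplink\}$, and by constraints~2 and~3 the $\pilink$ side is necessarily a $\gadgetP$-labeled node while the $\iplink$ side is necessarily an $\inter$-labeled node. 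Hence every inter-octopus edge has exactly one $\inter$ endpoint and one $\gadgetP$ endpoint, which immediately yields condition~2, since the $\inter$-nodes form an independent set.

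For condition~1, observe that the subgraph $H$ induced by the intra-octopus nodes contains only intra-octopus edges (those with half-edge labels in $\EE^\lgadget$), because every inter-octopus edge has one $\inter$ endpoint. The labels on $H$ therefore lie in $(\VV^\lgadget, \EE^\lgadget)$, and by constraint~0 of $\CC^\proper$ the constraints $\CC^\lgadget$ are satisfied at every node of $H$. Applying \cref{lem:octopuslabeling} to each connected component of $H$ gives that each such component is an octopus gadget, establishing condition~1 and simultaneously equipping every intra-octopus node $v$ with a well-defined head gadget or port gadget.

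For condition~3, constraint~2 of $\CC^\proper$ states that an intra-octopus node $v$ has an incident $\pilink$ half-edge---equivalently, by the edge classification, an inter-octopus neighbor---if and only if $v$ is labeled $\gadgetP$ and has no incident half-edge labeled $\llch$, $\lrch$, or $\lleft$. Such a $v$ sits in some port gadget $P$ of the octopus containing it; inside $P$, constraint~6 of $\CC^\ltreelike$ makes the absence of $\llch/\lrch$ equivalent to being a leaf, while the structural description carried by $\CC^\ltreelike$ forces the absence of $\lleft$ to be equivalent to being the leftmost node in its layer (position $k=0$). Together these pin down $v$ as the node at coordinates $(w-1, 0)$ of $P$, where $w$ is the height of $P$, giving condition~3. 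The main obstacle I anticipate is this last translation: one must be careful moving from a purely label-based condition to the coordinate statement of \cref{def:tree-like-gadget}, but this reduces to properties of the tree-like labeling $\CC^\ltreelike$ already leveraged in the proof of \cref{lem:octopuslabeling}, so it should be routine.
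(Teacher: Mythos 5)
Your proof is correct and takes essentially the same route as the paper's: both use constraint~0 together with \cref{lem:octopuslabeling} to establish the octopus structure, and both use constraints~1--3 of $\CC^\proper$ to pin each inter-octopus edge as joining an $\inter$ node to a $\gadgetP$ leftmost leaf; you merely reorganize the argument explicitly around the three clauses of \cref{def:proper-instance}. One small inaccuracy: attributing "no $\llch/\lrch$ incident $\Leftrightarrow$ leaf" to constraint~6 of $\CC^\ltreelike$ is off---constraint~6 only forces $\llch$ and $\lrch$ to appear together; the leaf characterization is rather immediate from the tree-like structure recovered via \cref{lemma:constraints-to-treelike} (i.e., children are exactly the $\llch/\lrch$-neighbors), so the conclusion stands but the citation should be corrected.
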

\begin{proof}
	We start by proving that there must be at least one node not labeled $\inter$. Since the graph $G$ is non-empty, it contains at least some node $v$, which is either labeled $\inter$ or not. In the latter case we are done. In the former case, by constraint \(3\), \(v\) has an incident edge \(e = \{u,v\}\) such that \(L_v(e) = \iplink\).
	By constraints \(1\) and \(2\), \(u\) is labeled with \(\gadgetP\) and \(L_u(e) = \pilink\). Hence, a node not labeled $\inter$ exists.

	We thus know that the subgraph \(H_{\text{Oct}}\) induced by nodes not labeled $\inter$ is non-empty, and, by constraint 0, \(H_{\text{Oct}}\) is such that each connected component is an octopus gadget.

	By constraint 2, each \(\gadgetP\) node \(u\) that is the left-most leaf of its gadget must have an incident edge \(\{u,u'\}\) such that \(L_u(e) = \pilink\), and no other node is allowed to have such an incident edge.
	By constraint 1 and 3, \(u'\) must be labeled \(\inter\) and \(L_{u'}(e) = \iplink\). This in particular implies that the set \(V(H_{\text{Ext}}) = V(G) \setminus V(H_{\text{Oct}})\) is non-empty.

	By constraint 3, all edges incident to nodes in \(V(H_{\text{Ext}})\) are labeled \(\iplink\).
	Moreover, by constraint 3, no nodes in $H_{\text{Oct}}$ have incident half-edges labeled $\iplink$. By constraint 1, each edge $e = \{u,v\}$ incident to an $\inter$ node $u$, must satisfy \(L_v(e) = \pilink\), and, by constraint 2, $v$ must be a left-most leaf of a port gadget.
	Summarizing, we obtained the following.
	\begin{itemize}
		\item Each connected component in the subgraph induced by intra-octopus edges is an octopus gadget.
		\item Only left-most leaves of port gadgets have incident inter-octopus edges.
		\item Inter-octopus edges must have an endpoint that is a left-most leaf of a port gadget and an endpoint that is an inter-octopus node.
	\end{itemize}
	Hence, $G$ is a proper instance.
\end{proof}

\begin{lemma}
	Let \(G\) be a proper instance as defined in \cref{def:proper-instance}.
	Then, there exists a \((\VV^\proper, \EE^\proper)\)-labeling of \(G\) that satisfies the constraints in \(\CC^\proper\) at all nodes.
\end{lemma}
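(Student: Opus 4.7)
The plan is to mirror the labeling procedure from \cref{lemma:label:octopus} and extend it by handling the inter-octopus part. First, I would use the node labeling function $\lambda$ from \cref{def:proper-instance} to split the vertex set of $G$ into the intra-octopus nodes (which form a disjoint union of octopus gadgets by property~1 of the definition) and the inter-octopus nodes. For every node $v$ labeled $\texttt{inter-octopus}$, I assign it the label $\inter \in \VV^\proper$. For every intra-octopus connected component $H$, which is an octopus gadget, I apply \cref{lemma:label:octopus} to obtain a $(\VV^\lgadget, \EE^\lgadget)$-labeling of $H$ satisfying all constraints in $\CC^\lgadget$; this in particular assigns $\gadgetN$ or $\gadgetP$ labels to the nodes of $H$, and labels to all half-edges of $H$, which form the intra-octopus edges of $G$.

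Next I would label the inter-octopus edges. By property~3 of \cref{def:proper-instance}, every edge $e = \{u,v\}$ with $\lambda(u) = \texttt{inter-octopus}$ and $\lambda(v) = \texttt{intra-octopus}$ has $v$ as the left-most leaf of some port gadget. For each such edge, I set $L_u(e) = \iplink$ and $L_v(e) = \pilink$. Property~2 of \cref{def:proper-instance} ensures that there are no edges with both endpoints labeled $\inter$, so every edge of $G$ has now received half-edge labels.

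Finally I would verify the constraints. Constraint~0 holds because on the subgraph induced by intra-octopus edges (which coincides with the disjoint union of octopus gadgets), I used the labeling from \cref{lemma:label:octopus}. Constraint~1 holds directly by my choice of labels on the inter-octopus edges. For constraint~2, note that in a port gadget, the left-most leaf is precisely the node whose half-edges into the same gadget are only labeled $\lparent$ or $\lright$ (it has no incident edge labeled $\llch$, $\lrch$, or $\lleft$); by property~3 of \cref{def:proper-instance} such a node has exactly one incident $\pilink$ half-edge, and no other node does. Constraint~3 holds because, by property~2, inter-octopus nodes are only incident to inter-octopus edges, each of which has the inter-octopus endpoint labeled $\iplink$, and by property~3 every inter-octopus node has degree at least~$1$ (since proper instances are assumed nonempty in the relevant cases and inter-octopus nodes connect to port-gadget left-most leaves).

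There is no real obstacle here since this is essentially the converse of \cref{lem:constraints-to-proper} and reduces to invoking \cref{lemma:label:octopus} on each octopus component and matching up the external edges. The only subtle point is ensuring that property~3 of \cref{def:proper-instance} indeed picks out exactly those port-gadget nodes that the constraint~2 of $\CC^\proper$ selects via the labels $\{\llch,\lrch,\lleft\}$; this follows because, in the tree-like gadget labeling produced by \cref{lem:treelike-are-labellable}, the left-most leaf of a tree-like gadget of height $w$ is exactly the node with coordinates $(w-1,0)$, which has no incident half-edge labeled $\llch$ (it is a leaf), no incident half-edge labeled $\lrch$ (same reason), and no incident half-edge labeled $\lleft$ (it is the left-most in its layer).
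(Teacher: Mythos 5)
Your proposal is correct and follows the same construction as the paper: invoke \cref{lemma:label:octopus} on each octopus component, assign $\inter$ to the remaining nodes, and label the connecting edges with $\pilink$ and $\iplink$. The paper's proof simply closes with "all constraints are satisfied by construction" whereas you spell out the constraint-by-constraint verification; the extra care you take on constraint~2 (matching "no incident $\llch,\lrch,\lleft$" with "left-most leaf at coordinates $(w-1,0)$") is exactly the right detail to check, and it matches the reasoning implicit in the paper's argument.
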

\begin{proof}
	We properly label all nodes and edges in octopus gadgets respecting \(\CC^{\lgadget}\) as in \cref{lemma:label:octopus}.
	To all nodes that are outside octopus gadgets, we assign the label \(\inter\).
	To all edges \(e\) that connect a node \(u\) labeled with \(\inter\) to a node \(v\) labeled with \(\gadgetP\), we assign labels so that \(L_u(e) = \iplink\) and \(L_v(e) = \pilink\).
	All constraints are satisfied by construction.
\end{proof}

\paragraph{The LCL problem $\problem^{\badgraph}$.} 
We now define an LCL problem $\problem^{\badgraph}$ that, on a high level, allows the nodes to prove that the graph in which they are is not a proper instance labeled in a valid way.
Differently from the cases of $\problem^{\lbadtree}$ and $\problem^{\lbadgadget}$, we cannot guarantee the property that, if the graph is invalid, then nodes can spend $O(\log n)$ rounds to produce a solution in which no node outputs $\bot$. This time, the problem $\problem^{\badgraph}$ will satisfy the following weaker guarantee, which will be sufficient for our purposes: nodes can spend $O(\log n)$ rounds to produce a solution in which the graph induced by nodes outputting $\bot$ is a proper instance.
On a high level, the problem will satisfy the following properties.
\begin{itemize}
	\item There are three possible types of output: a node can either produce an empty output ($\bot$), or it can output an error, and there are two types of errors. In the latter case, a node needs to also output a locally checkable proof of the fact that the graph is not a proper instance.
	
	\item If the graph is a proper instance, the constraints of the problem are defined such that the only valid solution to the problem $\problem^{\lbadgraph}$ is the one where all nodes output $\bot$.
	
	\item Otherwise, if the graph is not a proper instance, nodes are able to spend $O(\log n)$ deterministic classical rounds to label invalid parts of the graph, such that the graph induced by nodes outputting $\bot$ is a proper instance, and such that the labeling on the invalid parts of the graph encodes a proof that these parts are indeed invalid.
	
\end{itemize}
We now give a formal definition of the LCL problem $\problem^\lbadgraph$.
The inputs are defined as follows.
\begin{itemize}[noitemsep]
	\item Each node $v$ receives as input a label $g_v \in \VV^{\proper}$.
	
	\item Each half-edge receives as input a label from $\EE^{\proper}$.
	
\end{itemize}
The possible node output labels $\VV^{\lbadgraph}$ for the problem $\problem^\lbadgraph$ are the following.
\begin{itemize}[noitemsep]
	\item The empty output $\bot$.
	\item A pair $(\lerror_{\lintra},x_v)$, where $x_v \in \VV^{\lbadgadget}$. This label will be allowed only on intra-octopus nodes.
	\item The labels $\lerror_{\linter,1}$ and $\lerror_{\linter,2}$. These labels will be allowed only on inter-octopus nodes.
\end{itemize}
We call \emph{error outputs} all labels that are different from $\bot$. Informally, the label $\lerror_{\linter,1}$ will be used by inter-octopus nodes that do not satisfy the constraints $\CC^{\proper}$, while $\lerror_{\linter,2}$ will be used by inter-octopus nodes satisfying that \emph{all} their intra-octopus neighbors output errors.
On a high level, these labels will be used as follows. Inter-octopus nodes that are not properly connected to intra-octopus nodes output $\lerror_{\linter,1}$. Then, intra-octopus nodes connected to inter-octopus nodes that are outputting an error are marked. Then, nodes solve $\problem^{\lbadgadget}$ in the subgraph induced by intra-octopus nodes. Finally, inter-octopus nodes output $\lerror_{\linter,2}$ if they satisfy the following two properties: they did not output $\lerror_{\linter,1}$ previously, and all their neighbors (which are intra-octopus nodes) output an error.
We now define the constraints $\CC^{\badgraph}$.
\begin{myframe}{The constraints $\CC^{\badgraph}$}
	\begin{itemize}
		\item A node can always output $\bot$.\label{cons-badgraph:bot}
		
		\item A node $v$ can output $\lerror_{\linter,1}$ only if $g_v = \inter$ and it does not satisfy the constraints $\CC^{\proper}$.\label{cons-badgraph:error1}

		\item If a node $v$ outputs $(\lerror_{\lintra},x_v)$, then the following must hold. \label{cons-badgraph:error}
		\begin{enumerate}

			\item $x_v$ must be different from $\bot$.
			
			\item Consider the labeling of the graph obtained by input labeling each node $v$ with the pair $(m_v,g_v)$, where $m_v := 1$ if $v$ has a neighbor outputting $\lerror_{\linter,1}$ and $m_v := 0$ otherwise, and output labeling each node $v$ with $x_v$. On node $v$ the constraints $\CC^{\lbadgadget}$ must be satisfied. That is, $x_v$ is a valid solution for $\problem^{\lbadgadget}$ when we mark nodes that are incident to inter-octopus nodes that output $\lerror_{\linter,1}$.
						
		\end{enumerate}
		\item A node $v$ can output $\lerror_{\linter,2}$ only if $g_v = \inter$ and all its neighbors do not output $\bot$.\label{cons-badgraph:error2}
	\end{itemize}
\end{myframe}

\begin{lemma}\label{lem:validgraph}
	Let $G$ be a $(\VV^{\proper}$, $\EE^{\proper})$-labeled graph where $\CC^{\proper}$ is satisfied for all nodes. Then, the only valid solution for $\problem^{\lbadgraph}$ is the one assigning $\bot$ to all nodes.
\end{lemma}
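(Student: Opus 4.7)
\medskip
\noindent\textbf{Proof plan.}
The plan is to rule out each non-$\bot$ output label in turn, exploiting the assumption that $\CC^{\proper}$ is satisfied everywhere. I will proceed in three stages, in the order dictated by the dependencies between the constraints of $\CC^{\badgraph}$: first show that no node outputs $\lerror_{\linter,1}$, then use this to show that no node outputs $(\lerror_{\lintra}, x_v)$ with $x_v \neq \bot$, and finally use both facts to show that no node outputs $\lerror_{\linter,2}$.

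First, the constraint governing $\lerror_{\linter,1}$ explicitly requires the outputting node $v$ to violate $\CC^{\proper}$. Since by hypothesis every node satisfies $\CC^{\proper}$, no node can output $\lerror_{\linter,1}$. Consequently, for every intra-octopus node $v$ the marking value defined in the constraint for $(\lerror_{\lintra}, x_v)$ reduces to $m_v = 0$, because $v$ has no neighbor outputting $\lerror_{\linter,1}$.

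Second, consider the subgraph $G'$ induced by intra-octopus nodes together with the intra-octopus edges. By constraint $0$ of $\CC^{\proper}$, the $(\VV^{\lgadget},\EE^{\lgadget})$-labeling restricted to $G'$ satisfies $\CC^{\lgadget}$ at every node, and by the previous paragraph all nodes in $G'$ carry the input marking $0$. Hence the hypotheses of \cref{lem:validoctopus} apply to $G'$: the only valid solution to $\problem^{\lbadgadget}$ on $G'$ is the all-$\bot$ labeling. But any node $v$ outputting $(\lerror_{\lintra}, x_v)$ must have $x_v$ form part of a valid $\problem^{\lbadgadget}$-solution on $G'$ and have $x_v \neq \bot$, which contradicts \cref{lem:validoctopus}. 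Thus no intra-octopus node outputs $(\lerror_{\lintra}, x_v)$ either.

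Finally, suppose some inter-octopus node $v$ outputs $\lerror_{\linter,2}$. Its constraint forces every neighbor of $v$ to output something different from $\bot$. But $\CC^{\proper}$ combined with constraints $1$--$3$ forces every edge incident to $v$ to be an inter-octopus edge connecting $v$ to a $\gadgetP$ node; all neighbors of $v$ are therefore intra-octopus, and by the previous two steps each of them must output $\bot$. This contradicts the requirement of $\lerror_{\linter,2}$. Since all three non-$\bot$ options are excluded, every node outputs $\bot$, and conversely the all-$\bot$ labeling is trivially valid since constraint $0$ of $\CC^{\badgraph}$ always permits $\bot$. The only real technical point to double-check, and the place where one has to be careful, is that in the second step one is applying \cref{lem:validoctopus} to each connected component of $G'$ rather than to $G$ itself, and that the markings used in the reduction to $\problem^{\lbadgadget}$ match those in the statement of \cref{lem:validoctopus}; this follows immediately from the definition once we have verified that all $m_v = 0$.
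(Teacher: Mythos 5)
Your proof is correct and follows essentially the same route as the paper's: rule out $\lerror_{\linter,1}$ from the $\CC^{\proper}$ hypothesis, conclude $m_v = 0$ everywhere, apply \cref{lem:validoctopus} component-wise to the intra-octopus subgraph to rule out $(\lerror_{\lintra}, x_v)$, and then rule out $\lerror_{\linter,2}$ because all neighbors of an inter-octopus node are intra-octopus and hence output $\bot$. Your version is slightly more explicit (in particular in checking the hypotheses of \cref{lem:validoctopus} and in explaining why every neighbor of an inter-octopus node is intra-octopus), but the logical content and order of steps coincide with the paper's argument.
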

\begin{proof}
	Since $\CC^{\proper}$ is satisfied at all nodes, an inter-octopus node cannot output $\lerror_{\linter,1}$.
	This implies that the input for $\problem^{\lbadgadget}$ satisfies $m_v = 0$ for all intra-octopus nodes $v$.
	Since each connected component of the subgraph induced by intra-octopus nodes is a properly labeled octopus gadget, intra-octopus nodes must output $\bot$.
	This implies that inter-octopus nodes cannot output $\lerror_{\linter,2}$. We thus get that all nodes must output $\bot$.
\end{proof}

\begin{lemma}\label{lem:badgraph}
	Let $G$ be a $(\VV^{\proper}$, $\EE^{\proper})$-labeled graph. There exists a solution for $\problem^{\lbadgraph}$ where each connected component induced by nodes outputting $\bot$ is a proper instance. Moreover, such a solution can be computed in $O(\log n)$ classical deterministic rounds.
\end{lemma}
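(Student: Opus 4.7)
The plan is to design a four-stage $O(\log n)$-round deterministic \local algorithm that reuses the primitives we have built up: a local $\CC^{\proper}$ check on inter-octopus nodes, a mark-propagation step, the algorithm from \cref{lem:badoctopus}, and a one-round consistency step on inter-octopus nodes. The first three stages produce most of the output, and the last stage serves only to guarantee that each inter-octopus $\bot$-node retains at least one $\bot$-neighbor in $G^{\bot}$, as required by constraint~3 of $\CC^{\proper}$.

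Concretely, I would proceed as follows. In Stage~1, every inter-octopus node $v$ (i.e.\ with $g_v = \inter$) checks locally whether it satisfies $\CC^{\proper}$ by inspecting its incident half-edges and neighbors; those failing the check output $\lerror_{\linter,1}$. In Stage~2, each intra-octopus node sets $m_v := 1$ if and only if it has a neighbor that just output $\lerror_{\linter,1}$, matching exactly the marking prescribed by constraint~2 of $\CC^{\badgraph}$. In Stage~3, view the subgraph induced by intra-octopus nodes as a $(\{0,1\} \times \VV^{\lgadget}, \EE^{\lgadget})$-labeled graph with node-inputs $(m_v, g_v)$ and half-edge labels inherited from the input, and invoke the algorithm of \cref{lem:badoctopus} to obtain $x_v \in \VV^{\lbadgadget}$; each intra-octopus node then outputs $\bot$ if $x_v = \bot$ and $(\lerror_{\lintra}, x_v)$ otherwise. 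Stage~4 is one round: each inter-octopus node that did not output $\lerror_{\linter,1}$ outputs $\lerror_{\linter,2}$ if \emph{all} its neighbors output non-$\bot$ labels, and $\bot$ otherwise. Stage~3 dominates, so the total locality is $O(\log n)$.

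Validity with respect to $\CC^{\badgraph}$ is then checked clause by clause: Stages~1 and~4 enforce the clauses on $\lerror_{\linter,1}$ and $\lerror_{\linter,2}$ directly, and in Stage~3 the marks $m_v$ used by the algorithm coincide with those prescribed by $\CC^{\badgraph}$, so \cref{lem:badoctopus} certifies that the returned $x_v$ satisfies $\CC^{\lbadgadget}$ at every node outputting $(\lerror_{\lintra}, x_v)$. For the proper-instance guarantee on $G^{\bot}$, I would combine \cref{lem:validoctopus} and \cref{lem:badoctopus} to conclude that a connected component of the intra-octopus subgraph lies entirely in $G^{\bot}$ exactly when it is a $\CC^{\lgadget}$-valid octopus gadget with no marked node; this yields condition~1 of the proper-instance definition. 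Each inter-octopus $\bot$-node satisfies $\CC^{\proper}$, which forces all its incident edges to be $\iplink$--$\pilink$ pairs into intra-octopus $\gadgetP$ left-most leaves (condition~2 and half of condition~3), and Stage~4 ensures such a node has at least one $\bot$-neighbor. An application of \cref{lem:constraints-to-proper} component-wise then certifies each connected component of $G^{\bot}$ as a proper instance.

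The step I expect to be the main obstacle is verifying the biconditional in constraint~2 of $\CC^{\proper}$ inside $G^{\bot}$, namely that a $\gadgetP$ node in $G^{\bot}$ carries a $\pilink$ half-edge in $G^{\bot}$ if and only if it is the left-most leaf of its port-gadget. The \emph{only if} direction is immediate from $\CC^{\proper}$ on the surviving inter-octopus side: any $\pilink$ half-edge on an intra-octopus $\bot$-node connects, by $\CC^{\proper}$ on its $\inter$-neighbor, to exactly a $\gadgetP$ left-most leaf. The \emph{if} direction is more delicate and is where marking via $\lerror_{\linter,1}$-neighbors must do its real work: one has to argue, case by case, that if a $\CC^{\lgadget}$-valid octopus survived Stage~3 with a left-most leaf lacking its required $\pilink$ edge, then some local inconsistency---either a $\CC^{\lgadget}$ violation inside the octopus or a $\CC^{\proper}$ failure on an adjacent inter-octopus node, yielding $\lerror_{\linter,1}$---would already have propagated and excluded that octopus from $G^{\bot}$. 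Setting this propagation argument up cleanly is the heart of the proof.
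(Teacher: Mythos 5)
Your four-stage algorithm and the constraint-by-constraint validity check match the paper's proof almost exactly (the paper's Stage~4 writes $\lerror_{\linter,1}$ where $\lerror_{\linter,2}$ is clearly meant; your version is the correct one). You have also correctly identified the one nontrivial step: showing that a surviving left-most port-gadget leaf in $G^{\bot}$ actually has its $\pilink$ edge. The paper, at the corresponding point, simply asserts ``all left-most leaves of octopus ports must have an inter-octopus neighbor, since otherwise they would have output an error'' without argument, so you are right to treat this as the crux.

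However, the resolution you sketch cannot close this step. You propose to propagate from either a $\CC^{\lgadget}$ violation inside the octopus or an $\lerror_{\linter,1}$ on an adjacent $\inter$ node. Consider an input $G$ that is a single $\CC^{\lgadget}$-valid octopus gadget $O$ with no inter-octopus edges at all: there are no $\inter$ nodes, so no $\lerror_{\linter,1}$ is ever produced, and no $\CC^{\lgadget}$ constraint is violated inside $O$; consequently no node is marked (under the marking rule ``$m_v = 1$ iff $v$ has an $\lerror_{\linter,1}$-neighbor,'' which both you and the formal statement of $\CC^{\badgraph}$ use). By \cref{lem:validoctopus} the only valid $\problem^{\lbadgadget}$ labeling on $O$ is all-$\bot$, so the only valid $\problem^{\lbadgraph}$ output is all-$\bot$. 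Yet $G^{\bot} = O$ is \emph{not} a proper instance, since each left-most leaf of a port gadget violates constraint~2 of $\CC^{\proper}$ (no $\pilink$ edge). The missing ingredient is that an intra-octopus node must \emph{also} be marked when it locally violates $\CC^{\proper}$ (in particular a $\gadgetP$ node with no $\llch,\lrch,\lleft$ edge and no $\pilink$ edge). That is exactly the marking the paper's high-level description of $\problem^{\lbadgraph}$ already prescribes (``first marking nodes that do not satisfy $\CC^{\lproper}$''), even though the formal marking rule in $\CC^{\badgraph}$ only mentions $\lerror_{\linter,1}$-neighbors. With that strengthened marking rule, a missing $\pilink$ is caught at the leaf itself, \cref{lem:badoctopus} removes the whole octopus from $G^{\bot}$, and the rest of your argument goes through with no further propagation needed.
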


\begin{proof}
	In order to produce the claimed solution, we will use the algorithm for $\problem^{\lbadgadget}$ reported in the proof of the \cref{lem:badoctopus}.
	The algorithm for solving $\problem^{\badgraph}$ works as follows. 
	
	Each node $v$ satisfying $g_v = \inter$ spends $O(1)$ rounds to check whether it satisfies $\CC^{\proper}$. If these constraints are not satisfied, $v$ outputs $\lerror_{\linter,1}$. Then, each intra-octopus node $v$ spends $O(1)$ rounds to compute $m_v$.
	By \cref{lem:badoctopus}, intra-octopus nodes can spend $O(\log n)$ deterministic classical rounds to compute a solution for $\problem^{\lbadgadget}$, where $(m_v,g_v)$ is the input of node $v$ for the problem. Such a solution satisfies that the following property: if in the subgraph induced by intra-octopus edges, in the connected component containing $v$, there is at least one marked node or a node not satisfying $\CC^{\lgadget}$, then $v$ outputs a label different from $\bot$. For each node $v$, let $x_v$ be this output.
	If $x_v \neq \bot$, then $v$ outputs $(\lerror_{\lintra},x_v)$, otherwise $v$ outputs $\bot$.
	Finally, each inter-octopus node that did not output $\lerror_{\linter,1}$ spends $O(1)$ rounds to check if at least one neighbor output some error in the previous phase, and in that case it outputs $\lerror_{\linter,1}$.

	The solution clearly satisfies the constraints of $\problem^{\lbadgraph}$, and the runtime is clearly $O(\log n)$. We now prove that the subgraph induced by nodes outputting $\bot$ is a proper instance.
	
	Let $G'$ be the subgraph induced by nodes outputting $\bot$. By the definition of the algorithm, each connected component of the subgraph of $G'$ induced by intra-octopus nodes must be a valid octopus gadget. Moreover, the inter-octopus nodes incident to intra-octopus gadgets of $G'$ must not be labeled $\lerror_{\linter,1}$, since otherwise their neighbors $v$ in octopus gadgets would have satisfied $m_v=1$ and hence output something different from $\bot$.
	Also, inter-octopus nodes incident to intra-octopus gadgets of $G'$ cannot be labeled $\lerror_{\linter,2}$, since they have at least one intra-octopus neighbor outputting $\bot$.
	We thus get that inter-octopus nodes that are neighbors of intra-octopus gadgets of $G'$ are all labeled $\bot$.
	Additionally, all left-most leaves of octopus ports must have an inter-octopus neighbor, since otherwise they would have output an error. We thus get that, in $G'$, $\CC^{\proper}$ is satisfied on all nodes, and by \Cref{lem:constraints-to-proper} this implies that $G'$ is a proper instance.
\end{proof}

\subsection{The problem \texorpdfstring{\boldmath $\problem^{\lpromise}$}{Pi promise}}\label{sec:lcl-problem-pipromise}
In this section, we define the problem $\problem^{\lpromise}$ as a function of a given problem $\problem^{\linearizable} = (\Sigma, (F,L,P), B)$ (see \cref{def:Pi_linearizable} for the definition of linearizable problems).
As mentioned in \cref{sec:higl_level-pi_promise}, the problem $\Pi$ will be defined such that it is required to solve $\Pi^{\lbadgraph}$ on the whole graph, and it is required to solve $\Pi^{\lpromise}$ on the subgraph induced by nodes that output $\bot$ for $\Pi^{\lbadgraph}$. 

Recall that the problem $\Pi^{\lbadgraph}$ allows to output $\bot$ on all nodes, even on graphs that are not proper instances. If an algorithm produces such an output, then it is required to solve $\Pi^{\lpromise}$ on a graph that is not a proper instance.
However, when proving a lower bound for $\Pi$, we will use a proper instance, and when proving an upper bound for $\Pi$, we will define an algorithm satisfying that the subgraph induced by nodes outputting $\bot$ for $\Pi^{\lbadgraph}$ is always a proper instance. 
In other words, it does not matter how $\Pi^{\lpromise}$ is defined on graphs that are not proper instances, as it will not affect the lower and the upper bounds that we prove.
For this reason, we will define the constraints $\CC^{\lpromise}$ of $\Pi^{\lpromise}$ under the assumption that the given graph $G$ is a proper instance labeled such that all nodes satisfy $\CC^{\lproper}$. This will make the definition easier to read.
Then, the definition of $\Pi^{\lpromise}$ is lifted to any arbitrary graph as follows.
\begin{itemize}
	\item Let $r = O(1)$ be the distance that each node $v$ needs to inspect to check whether it satisfies $\CC^{\lpromise}$ (on a proper instance).
	\item If all nodes that are within distance $r$ from $v$ (including $v$) satisfy $\CC^{\lproper}$, then $v$ needs to satisfy $\CC^{\lpromise}$.
	\item Otherwise, any output is considered incorrect.
\end{itemize}
In this way, whenever a node $v$ needs to satisfy $\CC^{\lpromise}$, the radius-$r$ neighborhood of $v$ is a subgraph of a proper instance, and hence the constraints of $\CC^{\lpromise}$ are well-defined.

Let us proceed with some nomenclature.
If $v$ is labeled $\gadgetN$ and $L_v(e) \notin \{\llch, \lrch\}$ for all $e$ incident to $v$, then we call $v$ \emph{head-leaf}. If $v$ is labeled $\gadgetP$ and $L_v(e) \neq \lparent$ for all $e$ incident to $v$, then we call $v$ \emph{port-root}. 
For each head-leaf $v$,  let $\eta(v) = \abs{\{ e: L_v(e) \in \{\nplink_1, \nplink_2\} \}} $ count the number of port gadgets that are connected to \(v\) (which are either 1 or 2).

We are now ready to define \(\problem^{\lpromise}\).
The node input set for \(\problem^{\lpromise}\) is \(\Vin^{\lpromise} = \VV^{\lproper}\), while the node-edge input set for \(\problem^{\lpromise}\) is \(\Ein^{\lpromise} = \EE^{\proper}\).
The node output set is \(\Vout^{\lpromise} = \Sigma \cup \{\bot\}\).
Let us denote with \(\oupt(v)\) the output of a node \(v\) for \(\problem^{\lpromise}\). 
The constraints $\CC^{\lpromise}$ of $\problem^\lpromise$ are defined as follows.

\begin{myframe}{The constraints $\CC^{\lpromise}$}
	\begin{enumerate}
		\raggedright
		\item If $v$ is labeled $\gadgetP$, then it must hold that \(\oupt(v) \in \Sigma\).
		\item If $v$ is not labeled $\gadgetP$, then it must hold that \(\oupt(v) = \bot\).
		
		\item If a node $v$ is labeled $\gadgetP$, then, for each neighbor $u$ of $v$ that is also labeled $\gadgetP$, it must hold that $\oupt(v) = \oupt(u)$.

		\item If $v$ is a head-leaf node, and $L_v(e) \neq \lleft$ for all $e$ incident to $v$, then it must hold that $\oupt(f(v, \nplink_1)) \in F$.
		
		\item If $v$ is a head-leaf node, and $L_v(e) \neq \lright$ for all $e$ incident to $v$, then:
		\begin{itemize}
			\item if $\eta(v) = 1$, then it must hold that $\oupt(f(v, \nplink_1)) \in L$;
			\item if $\eta(v) = 2$, then it must hold that $\oupt(f(v, \nplink_2)) \in L$.
		\end{itemize}
		
		\item If $v$ is a head-leaf node satisfying that $\eta(v) = 2$, then it must hold that $(\oupt(f(v, \nplink_1)), \oupt(f(v, \nplink_2)) ) \in P$.
		
		\item If $v$ is a head-leaf node satisfying that $\eta(v) = j$ for some $j \in \{1,2\}$, and $f(v, \lright) = u \neq \bot$, then it must hold that $(\oupt(f(v, \nplink_j)), \oupt(f(u, \nplink_1)) ) \in P.$

		\item If $v$ is labeled $\inter$ and $\{u_1, u_2, \dots, u_k \}$ is the set of all nodes adjacent to $v$, then it must hold that $\{\oupt(u_1), \oupt(u_2), \dots, \oupt(u_k)\} \in B$.
		
	\end{enumerate}
\end{myframe}

We now prove some facts about the relation between \(\problem^{\linearizable}\) and \(\problem^{\lpromise}\). As informally explained in \Cref{sec:intro_proper_instances}, a proper instance can be obtained by replacing the nodes belonging to one side of a bipartite graph with octopus gadgets. We now provide a formal definition of this construction.

\begin{definition}[Lift of a bipartite graph]\label{def:lift:bipartite-graph}
	Let \(G = (W \cup B, E)\) be a bipartite graph that is not necessarily connected and that may contain parallel edges.
	A \emph{lift} of \(G\) is a pair $(L, \compression)$, where \(L = (V', E')\) is a graph that is a proper instance (according to \cref{def:proper-instance}), and $\compression : V' \to W \cup B \cup E$ is a function, with the following properties.
	\begin{enumerate}
		\item For each node \(v\) of any head gadget, \(\compression(v) \in W\).
		\item If \(u,v \in V'\) are nodes of the same head gadget, then \(\compression(u) = \compression(v)\).
		\item If \(u,v \in V'\) are nodes of different head gadgets, then \(\compression(u) \neq \compression(v)\).
		\item For any node \(v\) of any port gadget, \(\compression(v) \in E\).
		\item If \(u,v \in V'\) are nodes of the same port gadget, then \(\compression(u) = \compression(v)\).
		\item If \(u,v \in V'\) are nodes of different port gadgets, then \(\compression(u) \neq \compression(v)\).
		\item Let \(u \in V'\) be a node of a port gadget $P$, and let \(v \in V'\) be a node of a head gadget $H$. The edge \(\compression(u)\) is incident to the node \(\compression(v)\) if and only if $P$ is connected to $H$.
		\item If \(v \in V'\) is an inter-octopus node, then \(\compression(v) \in B\).
		\item Let \(u \in V'\) be a node of a port gadget $P$, and let \(v \in V'\) be an inter-octopus node. The edge \(\compression(u)\) is incident to the node \(\compression(v)\) if and only if $P$ is connected to $v$.
	\end{enumerate}
	Furthermore, for a proper instance \(L\) and a bipartite graph \(G\), if there exists a function $\compression$ for which $(L, \compression)$ is a lift of \(G\), then we say that \emph{\(L\) can be compressed} to \(G\).
\end{definition}

In the following, for a bipartite graph \(G = (W \cup B, E)\), by \emph{ordering assigned to the edges of $G$} we denote a function $\tau : w \mapsto \sigma_w$ that takes as input a node $w \in W$ and returns a function $\sigma_w$, where the function $\sigma_w$ gives an ordering of the incident half-edges of $w$. That is, $\tau$ assigns, to each node $w$, an injective function $\sigma_w$ mapping each edge incident to $w$ to an integer in $\{1,\ldots,\deg(w)\}$, where \(\deg(w)\) is the degree of \(w\).

\begin{observation}\label{obs:compression-implies-local-ordering}
	Let \(G = (W \cup B, E)\) be a bipartite graph, and let \((L = (V', E'), \compression)\) be a lift of $G$ that is \((\VV^\proper, \EE^\proper)\)-labeled such that $\CC^{\lproper}$ is satisfied on all nodes.
	Then, the function \(\compression : V' \to W \cup B \cup E\) uniquely defines, for each node $w \in W$, the ordering $\sigma_w$ of the edges incident to $w$.
\end{observation}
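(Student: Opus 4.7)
The plan is to extract the ordering $\sigma_w$ from the data $(L, \compression)$ in four stages, corresponding to the four layers of structure in a proper instance. First, for each $w \in W$, I would identify the unique head gadget $H_w$ of $L$ whose nodes are compressed to $w$; this is immediate from properties 1--3 of \cref{def:lift:bipartite-graph}, which together say that $\compression$ induces a bijection between head gadgets of $L$ and elements of $W$.

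Second, I would recover a canonical ordering on the head-leaves of $H_w$ from the tree-like gadget labeling. Since $L$ satisfies $\CC^{\lproper}$, the subgraph induced on $V(H_w)$ is a tree-like gadget properly labeled with $\EE^{\ltreelike}$. Starting from the unique root (the node with no incident $\lparent$ half-edge, guaranteed by constraints \ref{cons-tree:parent-child} and \ref{cons-tree:root}) and iterating over $\llch$/$\lrch$ edges, each node is uniquely assigned coordinates $(l_u, k_u)$ as in \cref{def:tree-like-gadget}; the $\lleft/\lright$ constraints (\ref{cons-tree:left-right}, \ref{cons-tree:triangle}, \ref{cons-tree:square}) ensure these coordinates agree across the intra-layer path. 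In particular, the head-leaves $v_0, \dots, v_{2^{x-1}-1}$ are canonically enumerated by their second coordinate, where $x$ is the height of $H_w$.

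Third, I would upgrade this to an ordering on the port gadgets attached to $H_w$. Each head-leaf $v_i$ is incident, by constraints \ref{cons-gadget:child-link}, \ref{cons-gadget:no-multiple}, and \ref{cons-gadget:at_least}, to either one external edge labeled $\nplink_1$ or two external edges labeled $\nplink_1$ and $\nplink_2$; by constraint \ref{cons-gadget:edge} each such edge goes to the root of a distinct port gadget. Ordering the port gadgets lexicographically by the pairs $(i, j)$, where $i$ is the leaf index and $j \in \{1, \eta(v_i)\}$ is the $\nplink$ subscript, yields a canonical total order on the port gadgets of the octopus around $H_w$.

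Finally, I would transport this ordering to the edges of $G$ incident to $w$ via $\compression$. By properties 5, 6, and 7 of \cref{def:lift:bipartite-graph}, $\compression$ is constant on each port gadget, injective across distinct port gadgets, and its image on the port gadgets attached to $H_w$ is exactly the set of edges of $G$ incident to $w$; hence composing with $\compression$ converts the port-gadget ordering into a bijection $\sigma_w$ from the half-edges incident to $w$ to $\{1, \dots, \deg(w)\}$, as required. The only step requiring any real verification is the second one, namely that the $\EE^{\ltreelike}$-labeling of $H_w$ pins down the coordinates (and hence the leaf order) uniquely; this is a routine induction on depth using the constraints of $\CC^{\ltreelike}$, and everything else is essentially unpacking the definitions.
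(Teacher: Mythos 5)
Your proposal is correct and follows essentially the same approach as the paper's proof: both use the head gadget's tree-like labeling to linearly order its leaves, refine this to a lexicographic order on port gadgets via the $\nplink_1$/$\nplink_2$ labels, and then transport that order to the incident edges of $w$ through $\compression$. Your version is somewhat more explicit about the intermediate step of recovering coordinates from the $\CC^{\ltreelike}$ constraints, but the underlying argument is identical.
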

\begin{proof}
	For each node $w \in W$, let $G_w$ be the octopus gadget whose nodes map either to \(w\) or to its incident edges according to the function $\compression$.
	We now define an ordering of the port gadgets of \(G_w\) which will automatically induce an ordering of the edges incident to \(w\), and hence an ordering assigned to the edges of $G$.
	Assume that \(G_w\) has height \(h\).
	Consider two port gadgets \(P\) and \(P'\) of \(G_w\), and let \(r_P\) and \(r_{P'}\) be their root nodes, respectively.
	Furthermore, let \(v_{P}\) and \(v_{P'}\) be the head-leaf nodes of \(G_w\) that are connected to \(r_P\) and \(r_{P'}\), respectively.
	Assume that \(v_P = (h-1,i)\) and \(v_{P'} = (h-1,j)\) for some \(i,j \in \{0, \ldots, 2^{h-1}\}\).
	We say that \(P < P'\) if and only if \(i < j\) or, if \(i = j\), \(L_{v_P}(\{v_P, r_P\}) = \nplink_1\) and \(L_{v_{P}}(\{v_{P}, r_{P'}\}) = \nplink_2\).
	Let $P_i$ be the $i$-th port of $G_w$ according to this ordering, and let $u$ be an arbitrary node of $P_i$. The $i$-th edge of $w$ is the edge $\compression(u)$.
\end{proof}

In the definition of \(\problem^{\linearizable}\), the given bipartite graph \(G = (W \cup B, E)\) comes with a local ordering \(\sigma_w\) of the edges of each node \(w \in W\).
We can then think of lifts that ``respect'' such an ordering.

\begin{definition}[Edge-order preserving lift]
	Let \(G = (W \cup B, E)\) be a bipartite graph that is not necessarily connected and that may contain parallel edges, and let $\tau : w \mapsto \sigma_w$ be an ordering assigned to the edges of $G$.
	Let \((L,\compression)\) be a lift of \(G\), where \(L\) is \((\VV^\proper,\EE^\proper)\)-labeled.
	Let \(\tau' : w \mapsto \sigma'_w \) be a function assigning to each \(w \in W\) the edge ordering defined via \(\compression\) according to \cref{obs:compression-implies-local-ordering}.
	We say that \((L,\compression)\) is edge-order preserving with respect to \(\tau \) if \(\sigma'_w = \sigma_w\) for all \(w \in W\).
\end{definition}

\begin{lemma}\label{lemma:lift-well-defined}
	For each bipartite graph \(G = (W \cup B, E)\) which comes with an ordering \(\tau: w \in W \mapsto \sigma_w\) assigned to its edges, and for each integer $h$, there exists a lift \((L, \compression)\) that is \((\VV^\proper, \EE^\proper)\)-labeled such that all nodes satisfy $\CC^\lproper$ where all port gadgets of $L$ have height $h$, and such that \((L, \compression)\) is edge-order preserving with respect to \(\tau\).
	Moreover, for each proper instance \(L\), there exists a bipartite graph \(G\) such that \(L\) can be compressed to $G$.
\end{lemma}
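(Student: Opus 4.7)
The plan is to prove the two parts of the lemma by explicit construction followed by a structural verification.

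For the first part, given $G = (W \cup B, E)$, $\tau$, and $h$, I construct a lift $(L, \compression)$ as follows. For each $w \in W$ with degree $d_w \ge 1$, pick $x_w \ge 1$ so that $2^{x_w-1} \le d_w \le 2^{x_w}$, and choose $\eta_w \in \{1,2\}^{2^{x_w-1}}$ with $\sum_i (\eta_w)_i = d_w$; such a choice exists because $d_w$ lies in the interval $[2^{x_w-1}, 2^{x_w}]$. I instantiate an octopus gadget $G_w$ according to \cref{def:quantum-advantage:octopus-gadget} with parameters $(x_w, \eta_w, W_w)$, where every weight in $W_w$ equals $h$, so that all port gadgets have height exactly $h$. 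For each $b \in B$, I introduce a single new node $v_b$. For each edge $e = \{w,b\} \in E$ with $\sigma_w(e) = i$, I add an edge between $v_b$ and the left-most leaf of the $i$-th port gadget of $G_w$, where the $i$-th port is defined according to the ordering of \cref{obs:compression-implies-local-ordering}. The compression map is the natural one: each head-gadget node of $G_w$ maps to $w$, each node of the $i$-th port gadget of $G_w$ maps to the $i$-th edge incident to $w$, and each $v_b$ maps to $b$.

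Next, I label $L$. Each octopus gadget $G_w$ is labeled with labels from $(\VV^\lgadget, \EE^\lgadget)$ as guaranteed by \cref{lemma:label:octopus}. Each $v_b$ receives the node label $\inter$, and each new inter-octopus edge $e = \{v_b, u\}$ is labeled so that $L_{v_b}(e) = \iplink$ and $L_u(e) = \pilink$. A direct inspection shows that $\CC^\lproper$ is satisfied at every node, and that conditions 1--9 of \cref{def:lift:bipartite-graph} hold: nodes in distinct octopus gadgets or distinct port gadgets are sent to distinct targets by construction, and the incidence clauses are forced by our explicit placement of the inter-octopus edges. Edge-order preservation is then immediate because the $i$-th port gadget of $G_w$ in the ordering of \cref{obs:compression-implies-local-ordering} is, by our construction, precisely the one attached to $v_b$ for the $b$ with $\sigma_w(\{w,b\}) = i$.

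For the second part, given a proper instance $L$, I use \cref{lem:constraints-to-proper} to decompose $L$ into octopus gadgets joined to inter-octopus nodes through inter-octopus edges. Let $W$ contain one element $w_H$ per head gadget $H$ appearing in $L$, let $B$ be the set of inter-octopus nodes of $L$, and, for each port gadget $P$ whose left-most leaf is attached by an inter-octopus edge to some $b \in B$, include an edge $\{w_H, b\}$ in $E$, where $H$ is the head gadget of the octopus containing $P$. Parallel edges may appear when two port gadgets of the same octopus connect to the same inter-octopus node, which is permitted by \cref{def:lift:bipartite-graph}. Defining $\compression$ so that head-gadget nodes go to the corresponding $w_H$, port-gadget nodes go to the corresponding edge of $E$, and each inter-octopus node of $L$ is identified with its copy in $B$, all properties of \cref{def:lift:bipartite-graph} follow directly from the structural guarantees of \cref{def:proper-instance} and \cref{def:quantum-advantage:octopus-gadget}.

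The main subtlety sits in the first part: arranging $\eta_w$ and the attachment of port gadgets so that the ordering of ports induced by \cref{obs:compression-implies-local-ordering} agrees with the prescribed ordering $\sigma_w$. Once one unwinds the lexicographic definition of port ordering given there, the construction above achieves this by design, and the remaining properties reduce to routine structural checks against \cref{def:lift:bipartite-graph}.
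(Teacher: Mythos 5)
Your construction is essentially the same as the paper's: one octopus gadget per white node with all port gadgets of height $h$, attaching $v_b$ to the left-most leaf of the $i$-th port gadget whenever $\sigma_w(\{w,b\}) = i$, labeling via \cref{lemma:label:octopus} together with the inter-octopus labels, and inverting the construction for the second direction. One small point in your favor: your choice of head-gadget height (pick $x_w$ with $2^{x_w-1} \le d_w \le 2^{x_w}$, so that $\eta_w \in \{1,2\}^{2^{x_w-1}}$ with $\sum_i (\eta_w)_i = d_w$ exists) matches \cref{def:quantum-advantage:octopus-gadget} cleanly, whereas the paper's formula $x_w = 2^{\lfloor\log_2 d_w\rfloor}+1$ paired with $\eta_w \in \{1,2\}^{x_w-1}$ appears to conflate the height parameter with the leaf count, so your phrasing is the one that should stand.
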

\begin{proof}
	We first prove that for each bipartite graph \(G = (W\cup B, E)\) with an edge-ordering \(\tau\) and for each integer $h$, there exists a lift \((L, \compression)\) where all port gadgets of $L$ have height $h$ and that is edge-order preserving w.r.t.\ \(\tau\). Recall that to each node $w \in W$ is assigned an ordering $\sigma_w$ of its incident edges.
	
	For each node $w \in W$, we create an $(x_w,\eta_w,W_w)$-octopus gadget $G_w$, where the parameters $x_w$, $\eta_w$, and $W_w$ are chosen as follows.
	\begin{itemize}
		\item $x_w = 2^{\floor{\log_2(\deg(w))}} + 1$.
		\item $\eta_w \in \{1,2\}^{x_w-1}$ is a vector satisfying that $\sum \eta_w(i) = \deg(w)$. Such a vector exists, since $(x_w-1) \le \deg(w) \le 2(x_w-1)$.
		\item All elements of $W_w$ are equal to $h$.
	\end{itemize}
	Let \(E_w = \{e_{(w,1)}, \dots, e_{(w,\deg(w))}\}\) be the set of edges that are incident to \(w\), ordered according to $\sigma_w$.
	For each $1 \le i \le \deg(w)$, let \(P_{e_{(w,i)}}\) be the $i$-th port gadget connected to the octopus gadget $G_w$ (according to the natural order defined in the proof of \cref{obs:compression-implies-local-ordering}).
	For each black node \(b \in B\), we create a node \(v_b\) that is an inter-octopus node. If \(b\) is connected to \(w\) through some edge \(e_{(w,i)}\), then \(v_b\) is connected to the left-most leaf of \(P_{e_{(w,i)}}\). 
	It is trivial to see that \(L\) admits a \(\compression\) function to \(G\) with the properties given in \cref{def:lift:bipartite-graph}.

	We now prove that each proper instance \(L = (V,E)\) can be compressed to some bipartite graph \(G = (W \cup B, E')\). We construct $G$ as follows.
	The set of white nodes \(W\) contains a vertex \(v_H\) for each head gadget \(H\).
	The set of black nodes \(B\) contains a node \(u_{v_{\inter}}\) for each inter-octopus node \(v_{\inter}\).
	Then, \(v_H\) is connected to \(u_{v_{\inter}}\) through an edge if and only if \(H\) is connected to a port gadget \(P\) that is, in turn, connected to \(v_\inter\).
\end{proof}

\begin{lemma}\label{lem:mapping-linearizable-promise}
	Let $\Pi^{\lpromise}$ be the problem defined as a function of a linearizable problem $\Pi^{\linearizable}$ as described in \Cref{sec:lcl-problem-pipromise}.
	Let \(L = (V',E')\) be a proper instance, and let $G  = (W \cup B, E)$ be the bipartite graph to which $L$ can be compressed to, according to \Cref{lemma:lift-well-defined}. For each node $w \in W$, let $\sigma_w$ be the ordering defined in \Cref{obs:compression-implies-local-ordering}.

	Then, there is a bijective function that maps solutions of \(\problem^{\lpromise}\) on \(L\) to solutions of \(\problem^{\linearizable}\) on \(G\) (with the aforementioned edge ordering).
\end{lemma}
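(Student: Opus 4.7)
The plan is to exhibit the bijection explicitly. Given a solution $\phi$ of $\problem^{\lpromise}$ on $L$, I define $\psi(e)$ for each $e \in E$ to be the common $\Sigma$-label that $\phi$ assigns to every node of the port gadget $\compression^{-1}(e)$; constraint~3 of $\CC^{\lpromise}$ makes this well defined, because each port gadget is connected and any two $\gadgetP$-neighbors inside it must receive the same label. The inverse map extends a labeling $\psi$ of $E$ to a labeling $\phi$ of $L$ by copying $\psi(e)$ onto every node of $\compression^{-1}(e)$ and writing $\bot$ on every head-gadget node and every inter-octopus node. The two maps are manifestly mutual inverses, so the only remaining work is to check that they send valid solutions to valid solutions.

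For the forward direction, the black constraint of $\problem^{\linearizable}$ at $b \in B$ reads off directly: the neighbors of the inter-octopus node $v_b = \compression^{-1}(b)$ in $L$ are precisely the left-most leaves of the port gadgets $\compression^{-1}(e)$ for the edges $e \in E$ incident to $b$, so constraint~8 of $\CC^{\lpromise}$ forces the multiset of their $\phi$-labels---which equals $\{\psi(e) : b \in e\}$---to lie in $B$. For the white constraint at $w \in W$, I will invoke the ordering $\sigma_w$ from \cref{obs:compression-implies-local-ordering}, which lists the edges of $w$ in the same order as the port gadgets appear along the head-leaves of $H_w$. Constraints~4 and~5 then yield $\psi(e_{(w,1)}) \in F$ and $\psi(e_{(w,\deg(w))}) \in L$, while constraints~6 and~7 together cover every pair of consecutive edges in $\sigma_w$---either through a single head-leaf carrying two ports (constraint~6, $\eta(v) = 2$), or between the last port of one head-leaf and the first port of the next (constraint~7)---yielding the required membership in $P$.

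The reverse direction is symmetric: starting from a valid $\psi$, I check each constraint of $\CC^{\lpromise}$ for the induced $\phi$. Constraints~1, 2, and~3 are immediate from the construction, constraints~4--7 follow from the $F$, $L$, and $P$ conditions on $\psi$ via the same ordering correspondence of \cref{obs:compression-implies-local-ordering}, and constraint~8 follows from the black condition on $\psi$. The main obstacle I anticipate is the bookkeeping around the port-gadget ordering: I need to verify that the cases $\eta(v) \in \{1,2\}$ at each head-leaf, together with the ``extremal'' leaves (those with no $\lleft$ or no $\lright$ neighbor in the bottom layer of the head gadget), exactly partition the pairs to be inspected, so that every consecutive pair in $\sigma_w$ is covered by precisely one of constraints~6 or~7 and the $F$/$L$ conditions fall out of exactly the correct extremal leaves, with no gaps and no double-counting. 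Once this combinatorial case analysis is nailed down, every remaining step lines up verbatim with \cref{def:Pi_linearizable}.
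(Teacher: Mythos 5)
Your proof is correct and takes the same approach as the paper: both exhibit the natural bijection that reads a common $\Sigma$-label off each port gadget (using constraint~3 to see this is well defined) and, conversely, paints the edge-label of $\Pi^{\linearizable}$ uniformly onto each port gadget and $\bot$ elsewhere. The paper's own proof stops at "this labeling clearly satisfies the constraints," whereas you carry out the verification, matching constraints~4--8 of $\CC^{\lpromise}$ against the $F$, $L$, $P$, and black-configuration conditions of \cref{def:Pi_linearizable}; the "bookkeeping" you flag as the remaining obstacle does in fact go through exactly as you outline (constraint~6 handles intra-leaf consecutive pairs when $\eta(v)=2$, constraint~7 handles the boundary pair between consecutive head-leaves, and constraints~4 and~5 pick out precisely the left-most and right-most leaves), so your argument is, if anything, more complete than the one in the paper.
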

\begin{proof}
	Let \(\compression : V' \to W \cup B \cup E\) be the function that compresses \(L\) to \(G\).
	First, suppose that we are given a solution for \(\problem^{\lpromise}\) on \(L\).
	Consider any white node \(w\) of \(G\) and its set of incident edges \(e_{(w,1)}, \ldots, e_{(w,\deg(w))}\) ordered according to $\sigma_w$.
	Consider an arbitrary node \(v\) of a port gadget \(P\) such that \(\compression(v) = e_{(w,i)}\).
	Then, the output of $w$ on the edge \(e_{(w,i)}\) is defined to be the same as the output of \(v\).
    This labeling clearly satisfies the constraints of $\Pi^{\linearizable}$ under the ordering of the edges defined by $\sigma = \{\sigma_w \mid w \in W \}$.

	Now, suppose we are given a solution for \(\problem^{\linearizable}\) on \(G\).
	Consider any white node \(w\) of \(G\) and its set of incident edges \(e_{(w,1)}, \ldots, e_{(w,\deg(w))}\),  ordered according to $\sigma_w$.
	For all nodes \(v \in \compression^{-1}(e_{(w,i)})\), we define the output of \(v\) to be the same as the output of $w$ on \(e_{(w,i)}\).
	For all other nodes \(v\) of \(G\), we define the output of \(v\) to be \(\bot\).
	This labeling clearly satisfies the constraints of $\Pi^{\lpromise}$.
\end{proof}

\subsection{The LCL problem \texorpdfstring{\boldmath $\problem$}{Pi}}\label{sec:lcl-problem}
We are now ready to define our problem $\problem$, as a function of the given problem $\problem^{\lpromise}$.
The input labels of $\problem$ are the same as the input labels $\VV^{\lproper}$ and $\EE^{\lproper}$ of $\problem^{\lbadgraph}$.
The set $\VV^{\problem}$ of output labels contains the following labels.
\begin{itemize}[noitemsep]
	\item $(\badgraph,x)$, for all $x \in \VV^{\lbadgraph} \setminus \{\bot\}$.
	\item $(\lpromise,x)$, for all $x \in \VV^{\lpromise}$.
\end{itemize}
The constraints $\CC^{\problem}$ are defined as follows.
\begin{myframe}{The constraints $\CC^{\problem}$}
	\begin{itemize}
		\item Consider the graph labeling in which the output labels are changed as follows: nodes labeled $(\lpromise,x)$ become labeled $\bot$, while nodes labeled $(\badgraph,x)$ become labeled $x$. Then, this labeling must satisfy the constraint $\CC^{\lbadgraph}$.
		\item Let $G'$ be the subgraph induced by nodes outputting $(\lpromise,x)$, for some $x$. Consider the labeling of $G'$ in which nodes labeled $(\lpromise,x)$ become labeled $x$. Then, all nodes of $G'$, in the subgraph $G'$, must satisfy $\CC^{\lpromise}$.
	\end{itemize}
\end{myframe}
In other words, if nodes output a $(\badgraph,\cdot)$ label, then this labeling must be a valid solution for $\problem^{\lbadgraph}$. This allows nodes to mark invalid parts of the graph. Then, in the parts that are not marked as invalid, nodes need to solve $\problem^{\lpromise}$.

\subsection{Upper bound in quantum-LOCAL}\label{sec:ub-quantum}
In this section, we prove an upper bound on the quantum complexity of $\problem$ as a function of the quantum complexity of $\problem^{\linearizable}$.
More in detail, we prove the following.
\begin{lemma}\label{lem:ub-padding}
	Let $T(n)$ be an upper bound on the quantum complexity of $\problem^{\linearizable}$, that holds also if the given graph contains parallel edges. Then, the quantum complexity of $\problem$ is upper bounded by $O(T(n) \log n)$.
\end{lemma}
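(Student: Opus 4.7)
The plan is to use the quantum algorithm $\algo$ for $\problem^{\linearizable}$ as a black box, invoked on a virtual graph obtained by compressing the valid portion of the input. First, all nodes spend $O(\log n)$ deterministic rounds to compute a solution to $\problem^{\lbadgraph}$ using \cref{lem:badgraph}. Any node whose $\problem^{\lbadgraph}$ output differs from $\bot$ commits the corresponding $(\badgraph,\cdot)$ label as its final output for $\problem$. Let $G'$ be the subgraph induced by the remaining nodes; by \cref{lem:badgraph}, $G'$ is a proper instance.

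Next, by \cref{lemma:lift-well-defined} there exists a bipartite graph $H = (W \cup B, E_H)$ (possibly with parallel edges) such that $G'$ is a lift of $H$ via some function $\compression$, and the induced edge ordering on $H$ is the one given by \cref{obs:compression-implies-local-ordering}. Since $H$ has at most $n$ nodes, we can invoke the hypothesized $T(n)$-round quantum algorithm $\algo$ for $\problem^{\linearizable}$ on $H$ (using the version that tolerates parallel edges). The core of the proof is to argue that a single round of $\algo$ on $H$ can be simulated in $O(\log n)$ rounds of \qlocal communication on $G'$. For each $w \in W$ we designate the root of the head gadget of $\compression^{-1}(w)$ as its representative, while each $b \in B$ is its own inter-octopus node in $G'$. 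To transmit a message along an edge of $H$ between such a $w$ and $b$, it suffices to route qubits from the head-gadget root, through the head gadget to the appropriate head-leaf, across a $\nplink$-$\pnlink$ edge, down the corresponding port gadget to its left-most leaf, and finally over the inter-octopus edge to $b$ (and symmetrically back). Since each octopus gadget lives inside a graph of at most $n$ nodes, every tree-like gadget has height $O(\log n)$, and hence this route has length $O(\log n)$.

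Thus, the $T(n)$ rounds of $\algo$ are simulated in $O(T(n) \log n)$ quantum rounds on $G'$, yielding a valid labeling of $H$ for $\problem^{\linearizable}$. By \cref{lem:mapping-linearizable-promise}, this labeling corresponds bijectively to a solution of $\problem^{\lpromise}$ on $G'$ in which every node of a port gadget carries the label that $\algo$ assigned to the corresponding edge of $H$. One final $O(\log n)$-round broadcast inside each octopus gadget—performed by the representative—propagates these labels to all nodes of every port gadget, after which every node of $G'$ commits the corresponding $(\lpromise,\cdot)$ output. The remaining constraints of $\CC^{\lpromise}$ concerning head-leaf nodes, inter-octopus nodes, and adjacent pairs are satisfied by the translation in \cref{lem:mapping-linearizable-promise}, and nodes with output $(\badgraph,\cdot)$ satisfy the first part of $\CC^{\problem}$ by construction.

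Summing the three stages yields a total complexity of $O(\log n) + O(T(n)\log n) + O(\log n) = O(T(n) \log n)$ quantum rounds, as required. The only potentially subtle point I anticipate is making the per-round simulation fully rigorous: specifically, confirming that one quantum-LOCAL round on $H$ is faithfully reproduced by $O(\log n)$ rounds on $G'$ even when multiple parallel edges of $H$ share endpoints (they correspond to distinct port gadgets with distinct left-most leaves, so the routing is still well-defined), and that unique identifiers for $H$ can be derived deterministically from those of $G'$ (e.g., the identifier of the head-gadget root and of the inter-octopus node, respectively) so that $\algo$ can run on $H$ using the identifier space provided by $G'$.
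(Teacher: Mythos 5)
Your proposal is correct and follows essentially the same three-stage approach as the paper's proof: first solve $\Pi^{\lbadgraph}$ deterministically in $O(\log n)$ rounds, then simulate the quantum algorithm for $\Pi^{\linearizable}$ on the contracted/compressed graph $\hat{G}$ with $O(\log n)$ overhead per round (using the head-gadget root as the representative), and finally lift the result back to a $\Pi^{\lpromise}$ labeling via \cref{lem:mapping-linearizable-promise}. The additional care you take about parallel edges and deriving identifiers for $\hat{G}$ from those of $G'$ is not spelled out in the paper's proof but is a reasonable elaboration of the same argument, not a different route.
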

\begin{proof}
	Let $G$ be the graph in which we need to solve $\problem$.
	The algorithm is composed of two parts, a classical deterministic part and a quantum part. The first part uses \Cref{lem:badgraph} to compute a solution for $\problem^{\lbadgraph}$ such that the nodes outputting $\bot$ induce a proper instance. This requires $O(\log n)$ classical deterministic rounds. For each $v$, let $x_v$ be its output for $\problem^{\lbadgraph}$.
	Each node $v$ satisfying $x_v \neq \bot$ outputs $(\badgraph,x_v)$ for $\problem$. Note that the graph $G'$ induced by nodes that still did not output anything for $\problem$ is a proper instance.

	The second part of the algorithm works as follows. Nodes that have already produced an output for $\problem$ do not do anything. The other nodes simulate the quantum algorithm $\mathcal{A}$ for $\problem^{\linearizable}$ as follows. The goal is to simulate $\mathcal{A}$ in the graph obtained by contracting each octopus gadget of $G'$ into a single node. Let $\hat{G}$ be this graph. Observe that this is the graph that $G$ can be compressed to, according to \Cref{lemma:lift-well-defined}.
	For each octopus gadget $\nu$, the root of its head gadget is the node responsible for storing the quantum state of the node in $\hat{G}$ corresponding to $\nu$. The inter-octopus nodes are unaffected and simulate themselves in $\hat{G}$.
	Since the diameter of a valid octopus gadget is clearly upper bounded by $O(\log n)$, we get that the communication between nodes of $\hat{G}$ can be simulated in $O(\log n)$ quantum rounds in $G'$.
	Hence, after $O(T(n) \log n)$, all the roots of the head gadget computed a solution for $\problem^{\linearizable}$ in $\hat{G}$. With additional $O(\log n)$ steps nodes can solve $\problem^{\lpromise}$ in $G'$, according to the mapping defined in \Cref{lem:mapping-linearizable-promise}. For each node $v$ in $G'$, let $x_v$ be its output for $\problem^{\lpromise}$.
	Each node $v$ in $G'$ outputs $(\lpromise,x_v)$ for $\problem$. 
	
	The output clearly satisfies the constraints $\CC^{\problem}$, and the runtime is clearly upper bounded by $O(T(n) \log n)$.
\end{proof}

\subsection{Lower bound in LOCAL}\label{sec:lb-local}
In this section, we prove a lower bound on the classical randomized complexity of $\problem$ as a function of the classical randomized complexity of $\problem^{\linearizable}$. More in detail, we prove the following.
\begin{lemma}\label{lem:lb-padding}
	Let $T(n)$ be a lower bound on the time required to solve $\problem^{\linearizable}$ with a classical randomized algorithm that has failure probability at most $1/n$. Then, any classical randomized algorithm for $\problem$ with failure probability at most $1/n$ requires $\Omega(T(n^{1/3}) \log n)$ rounds.
\end{lemma}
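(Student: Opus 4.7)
The plan is a simulation argument by contradiction. Suppose there is a randomized \local algorithm $\mathcal{B}$ that solves $\problem$ on $N$-node graphs in $T_B(N) = o(T(N^{1/3}) \log N)$ rounds with failure probability at most $1/N$. We will use $\mathcal{B}$ to build a randomized \local algorithm $\mathcal{A}$ solving $\problem^{\linearizable}$ on $m$-node graphs in $o(T(m))$ rounds with failure probability at most $1/m$, contradicting the assumed lower bound.

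\textbf{The virtual instance.} Let $G = (W \cup B, E)$ be a bipartite input for $\problem^{\linearizable}$ on $m$ nodes, equipped with an edge ordering $\tau$ as in the definition of $\problem^{\linearizable}$. Set $N = \Theta(m^3)$ and $h = \Theta(\log N) = \Theta(\log m)$, chosen so that the proper instance $(L,\compression)$ produced by \cref{lemma:lift-well-defined} for $G$, $\tau$, and port-gadget height $h$ has exactly $N$ nodes (we add dummy padding or adjust the constants in $h$ slightly to hit $N$ on the nose; since each head gadget contributes only $O(\log\Delta) = O(1)$ nodes and the maximum degree $\Delta$ of $\problem^{\linearizable}$ is a constant, the port gadgets dominate and the total size is $\Theta(m \cdot 2^h)$, so $h = \Theta(\log m)$ is the right scale). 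We endow $L$ with the canonical $(\VV^{\lproper},\EE^{\lproper})$-labeling from \cref{lemma:lift-well-defined}, so that $\CC^{\lproper}$ is satisfied everywhere. Each node $v$ of $G$ can locally assign consistent identifiers and random bits to the nodes of $L$ whose $\compression$-image lies in $\{v\} \cup \{e \in E : v \in e\}$: use a deterministic encoding from the identifier of $v$, the deterministic structure of $G_w$ or of the inter-octopus node $v_b$, and fresh private random bits (each virtual node gets its own block of the private random string of the real node that owns it).

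\textbf{Running $\mathcal{B}$ on $L$.} Because $L$ satisfies $\CC^{\lproper}$ at every node, \cref{lem:validgraph} forces the only valid $\problem^{\lbadgraph}$-component of any solution of $\problem$ on $L$ to label every node with $\bot$. Therefore, with probability at least $1 - 1/N \ge 1 - 1/m$, the algorithm $\mathcal{B}$ produces an output whose $(\lpromise,\cdot)$ second coordinate is a valid solution of $\problem^{\lpromise}$ on all of $L$. By \cref{lem:mapping-linearizable-promise}, such a solution is in bijection with a valid solution of $\problem^{\linearizable}$ on $G$ under the ordering $\tau$, and the bijection is defined purely locally: the output of $\problem^{\linearizable}$ on the $i$-th edge of $w$ is read off the output of any node of the port gadget corresponding to that edge.

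\textbf{Simulating $\mathcal{B}$ cheaply on $G$.} The algorithm $\mathcal{A}$ has each node $v \in V(G)$ simulate the execution of $\mathcal{B}$ for all virtual nodes of $L$ that are assigned to it (the octopus gadget of a white node, or the inter-octopus node of a black node). To produce the output of a virtual node $u$, $\mathcal{A}$ needs to know the radius-$T_B(N)$ view of $u$ inside $L$. The diameter of a port gadget of height $h$ is $O(h) = O(\log N)$, the diameter of a head gadget is $O(1)$, and an inter-octopus node contributes a single hop; hence an edge of $G$ traversed during the simulation costs at most $O(\log N)$ hops inside $L$. Consequently the radius-$T_B(N)$ ball around $u$ in $L$ is contained in the image of the radius-$O(T_B(N)/\log N)$ ball around the corresponding vertex of $G$. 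Each node of $G$ can therefore reconstruct everything $\mathcal{B}$ sees by performing
\[
    O\!\left(\frac{T_B(N)}{\log N}\right) = o\!\left(\frac{T(m) \log N}{\log N}\right) = o(T(m))
\]
rounds of communication in $G$, after which it evaluates $\mathcal{B}$ internally and emits the induced $\problem^{\linearizable}$ output.

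\textbf{Wrap-up and what is actually hard.} The routine checks are that the identifier space of $\mathcal{B}$ is still polynomial in $N$ (we give virtual nodes identifiers in $[N^c]$ derived from those of $G$, which live in $[m^c] \subseteq [N^c]$), that the failure probability stays $\le 1/m$ (it is $\le 1/N \le 1/m$ by choice of $N$), and that $G'$ really is a proper instance with labels satisfying $\CC^{\lproper}$ so that \cref{lem:validgraph,lem:mapping-linearizable-promise} apply (this is exactly \cref{lemma:lift-well-defined} together with the labeling lemma above it). The one genuinely delicate point, and the one I would spend most of the writing on, is the size accounting: we must choose $h$ and the shape of the head gadget so that the resulting lift has \emph{exactly} $N$ nodes (with $N = \Theta(m^3)$) while all port gadgets have the same height $\Theta(\log N)$, so that the $\Omega(\log N)$ lower bound on edge stretch holds uniformly and the parameter $N$ passed to $\mathcal{B}$ matches the true size of $L$. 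Once this is set up, the contradiction with the assumed $T(m)$ lower bound for $\problem^{\linearizable}$ is immediate.
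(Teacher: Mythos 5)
Your proof follows the same approach as the paper's: lift $G$ to a proper instance with port-gadget height $\Theta(\log n)$, observe that on a properly labeled proper instance \cref{lem:validgraph} forces the $\bot$-everywhere output for $\Pi^{\lbadgraph}$, read off a $\Pi^{\linearizable}$ solution via \cref{lem:mapping-linearizable-promise}, and simulate $\mathcal{B}$ on $G$ by noting that one edge of $G$ corresponds to a $\Theta(\log n)$-hop stretch in $L$, giving the $o(T(m))$ round bound with $N = \Theta(m^3)$.

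However, there is an error in your size accounting that you should fix: you write that ``each head gadget contributes only $O(\log\Delta) = O(1)$ nodes and the maximum degree $\Delta$ of $\problem^{\linearizable}$ is a constant.'' Both claims are wrong. A head gadget with $\Delta$ leaves is a tree-like gadget of height $\Theta(\log\Delta)$ and hence has $\Theta(\Delta)$ nodes, not $O(\log\Delta)$. More importantly, in the definition of a linearizable problem only \emph{black} nodes (hyperedges) have constant degree; white nodes can have degree as large as $m$, and indeed the whole point of the construction is to handle super-constant degree white nodes (for iterated GHZ the degree is $\Theta(\log\log n / \log\log\log n)$). Your consequence, that the total size is $\Theta(m \cdot 2^h)$, therefore does not hold in general. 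The correct accounting, as in the paper, is: $m$ white nodes, each of degree at most $m$, so at most $m$ port gadgets per octopus, each with height chosen so that a port gadget has fewer than $m$ nodes; this gives $|V(L)| \le m^3$ as an \emph{upper} bound, not an exact count. Your insistence on hitting $N$ exactly is then handled by padding with isolated dummy nodes (or simply giving $\mathcal{B}$ the upper bound $N = m^3$ as its ``number of nodes'' parameter), and the failure probability is still at most $1/N \le 1/m$. Once this is corrected, the rest of your argument---the stretch of $\Theta(\log N)$ per edge of $G$, the simulation cost $O(T_B(N)/\log N) = o(T(m))$, and the forced $(\lpromise,\cdot)$ outputs on a proper instance---is exactly the paper's proof.
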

\begin{proof}
	For a contradiction, assume that there exists an algorithm $\mathcal{A}$ that solves $\problem$ in $T'(n) = o(T(n^{1/3}) \log n)$ rounds with failure probability at most $1/n$. We show that we can construct an algorithm $\mathcal{B}$ that solves $\problem^{\linearizable}$ in $o(T(n))$ with failure probability at most $1/n$, contradicting the hypothesis.

	The algorithm $\mathcal{B}$ works as follows.
	Let $G$ be an instance of $\problem^{\linearizable}$. We construct an instance $G'$ of $\problem$ by applying \Cref{lemma:lift-well-defined} to construct a lift of  $G$ with parameter $h = \Theta(\log n)$. 
	On a high level, the algorithm $\mathcal{B}$ will be defined such that the nodes of $G$ simulate the execution of $\mathcal{A}$ on $G'$.
	
	Recall that $G'$ is constructed as follows. Each hyperedge $e$ of $G$ is replaced with an inter-octopus node $b_e$. 
	Each node $v$ of degree $d_v$ is replaced with an octopus gadget $\nu_v$ that contains exactly $d_v$ ports, such that each port gadget has height $\Theta(\log n)$, and such that the number of nodes of each port gadget is strictly less than $n$. 
	For each node-hyperedge pair $(u,e)$ of $G$, we add an edge to $G'$ that connects the left-most leaf of the $i$-th port of $\nu_v$ to $b_e$, assuming that $e$ is the $i$-th edge incident to $u$ according to the given ordering $\sigma_u$. We label $G'$ such that it is a properly labeled valid instance. Observe that the labeling of the nodes of $\nu_v$ can be computed by $v$ without communication.

	Since each port gadget has size strictly less than $n$, and since the maximum degree of $G$ is $n$, we obtain that $G'$ has at most $n^3$ nodes. Observe that if two nodes $(u,v)$ are neighbors in $G$, then the head gadgets of $\nu_v$ and $\nu_u$ are at distance $\Theta(\log n)$ in $G'$. 
	Hence, each node $v$ of $G$ can spend $O(T'(n^3) / \log n)$ rounds of communication in $G$ to gather its $T'(n^3)$-radius neighborhood in $G'$. 
	Since the runtime of $\mathcal{A}$ on instances of size at most $n^3$ is upper bounded by $T'(n^3)$, and since $G'$ has size at most $n^3$, 
	then each node $v$ is able to compute the output of $\mathcal{A}$ for $\problem$ on $G'$ without further communication. 
	
	Since $G'$ is a properly labeled valid instance, all nodes of $G'$ must output labels of type $(\lpromise,\cdot)$. Hence, $v$ can reconstruct a solution for $\problem^{\linearizable}$ as a function of the output of the nodes in $\nu_v$, without communication, according to \Cref{lem:mapping-linearizable-promise}. 

	The runtime of the algorithm $\mathcal{B}$ for $\problem^{\linearizable}$ is $O(T'(n^3) / \log n) = o(T(n))$, and its failure probability is upper bounded by $1/n^3 \le 1/n$, proving the claim.
\end{proof}

\subsection{Instantiating the construction}
In \cite{balliu2024quantum}, the authors introduced a family of problems $\{\iterghz(\Delta) \mid \Delta \ge 3 \}$. 
While this set can be seen as a family of problems parametrized by the maximum degree $\Delta$ of the graph, for the ease of presentation we will present this family as a single problem that we call \emph{iterated GHZ}.
We now formally define this problem, show that it can be defined as a linearizable problem, and then state its quantum and classical complexities.

\paragraph{The iterated GHZ problem.}
An instance of the problem is a hypergraph of rank $3$. However, for the ease of presentation, in the following, we will describe an input instance as a bipartite graph $(U,V,E)$, where nodes in $V$ have degree exactly $3$ (equivalently, in order to remove this assumption, if a node in $V$ has degree different from $3$, then it is unconstrained). Nodes in $U$ are called \emph{white} nodes, or \emph{players}, while nodes in $V$ are called \emph{black} nodes, or \emph{games}.
Each node $u \in U$ receives as input an ordering on its incident edges. Let $e_{u,i}$ be the $i$-th edge incident to $u$ according to this order, where $1 \le i \le d_u$, and $d_u$ is the degree of $u$.
The problem is defined as follows.
Each node $u \in U$ must output two bits $x(u,i)$ and $y(u,i)$ on each incident edge $e_{u,i}$. The following must hold.
\begin{itemize}
	\item For each white node $u$, it must hold that $x(u,1) = 0$.
	\item For each white node $u$, it must hold that $y(u,i) = x(u,i+1)$, for all $i$.
	\item Let $v$ be a black node, and let $u_1$, $u_2$, $u_3$ be its white neighbors. Let $i_j$ be the position of the edge $\{u_j,v\}$ according to the order of the edges incident to $u_j$, for $1 \le j \le 3$.
	Then, the following must hold.
	\begin{itemize}
		\item If $i_1 = i_2 = i_3 = 1$, then it must hold that the multiset $\{y(u_1,1), y(u_2,1), y(u_3,1)\}$ is equal to $\{0,0,1\}$.
		\item Otherwise, if $x(u_1,i_1) + x(u_2,i_2) + x(u_3,i_3)$ is even, then it must hold that $y(u_1,i_1) \oplus y(u_2,i_2) \oplus y(u_3,i_3) = x(u_1,i_1) \lor x(u_2,i_2) \lor x(u_3,i_3)$.
		\item Otherwise, node $v$ is unconstrained.
	\end{itemize}
\end{itemize}
The original definition of $\iterghz(\Delta)$ of \cite{balliu2024quantum} is equivalent to the definition that we have provided, when restricted to the case in which the maximum degree of the graph is $\Delta$.

\paragraph{Lower bound for classical randomized algorithms.}
In \cite{balliu2024quantum}, the following lower bound is shown.
\begin{theorem}[\cite{balliu2024quantum}]
	Let $\Delta \ge 3$ be an integer. Any classical randomized algorithm that solves $\iterghz(\Delta)$ with high probability requires $\Omega(\min\{\Delta, \log_\Delta \log n\})$ rounds.
\end{theorem}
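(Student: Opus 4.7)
My proof proposal is to establish the stated bound as the minimum of two inherent lower bounds: an $\Omega(\Delta)$ bound arising from the sequential structure of the chain of games, and an $\Omega(\log_\Delta \log n)$ bound arising from the hardness of the very first game alone. For the $\Omega(\Delta)$ part, I would construct hard instances that are $(\Delta,3)$-biregular trees, i.e., bipartite trees in which every white node has degree $\Delta$ and every black node has degree $3$, equipped with a consistent port numbering so that the edge ordering at white nodes is well-defined. On such trees I would apply the round elimination framework of Brandt, writing the constraints of $\iterghz(\Delta)$ as allowed labelings of half-edges. The intuition is that producing a correct output on the $i$-th incident edge of a white node requires knowing the output on its $(i-1)$-th incident edge, because each game's input is the previous game's output. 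A single step of elimination should strip off one value of $i$, and an inductive argument should show that the relaxed problem remains unsolvable as long as the locality is less than a constant times $\Delta$. The standard lifting argument, combining Yao's principle with a probabilistic construction that plants the hard biregular tree inside an $n$-node graph, then transfers the deterministic bound to randomized algorithms, provided $\Delta$ is small enough that such a tree fits in $n$ nodes.

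For the $\Omega(\log_\Delta \log n)$ part, I would reduce from sinkless orientation, whose randomized lower bound on $\Delta$-regular bipartite graphs is well known. The reduction exploits the special case in which a black node $v$ is simultaneously the first incident game of all three of its white neighbors. In that case, the multiset constraint forces $\{y(u_1,1), y(u_2,1), y(u_3,1)\} = \{0,0,1\}$, meaning that exactly one of the three neighbors is ``selected'' at this game. By configuring the port orderings so that every black node is a first game for all its neighbors, and by letting each white node have degree $\Delta$, the first-edge labels at white nodes must encode a valid selection in which each black node has exactly one selected neighbor. This selection problem is at least as hard as sinkless orientation on a $\Delta$-regular bipartite graph, and hence inherits the $\Omega(\log_\Delta \log n)$ randomized lower bound directly.

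The main obstacle will be executing the round elimination step cleanly, because the black node constraint is conditional on the parity of the incoming bits and the white node constraint chains outputs across all $\Delta$ games. One therefore has to identify a family of ``relaxed'' problems preserved by elimination, track how the conditional clauses transform, and verify that the sequence does not collapse to a trivial problem before $\Omega(\Delta)$ steps are performed. A secondary subtlety is the composition of the two bounds: the $\Omega(\Delta)$ argument needs a tree of depth $\Omega(\Delta)$ embedded in $n$ nodes, which caps how large $\Delta$ can be while keeping the argument effective, whereas the sinkless orientation reduction is local and applies uniformly in $\Delta$; the final bound is therefore the minimum of the two, with the two regimes crossing over roughly when $\Delta \approx \log_\Delta \log n$.
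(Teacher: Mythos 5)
First, note that this theorem is cited from \cite{balliu2024quantum} and not reproved in the present paper, so there is no internal proof to match against; I can only compare your proposal against what is known about the cited argument, which the introduction of this paper describes as a round-elimination proof.

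Your proposal for the $\Omega(\Delta)$ piece via round elimination on $(\Delta,3)$-biregular trees is in the right spirit and matches the cited approach. However, there are two genuine problems with the rest.

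The first is structural: you treat $\Omega(\min\{\Delta,\log_\Delta\log n\})$ as the minimum of two independently established lower bounds and write that ``the final bound is therefore the minimum of the two.'' That is not how lower bounds compose — two valid lower bounds combine into their \emph{maximum}, not their minimum. A $\min$ in a lower bound signals a \emph{single} argument with two independent failure modes. Here the $\min$ arises from the standard randomized round-elimination mechanism: a round-elimination chain bottoms out after $O(\Delta)$ steps for combinatorial reasons, while the failure-probability amplification at each step caps the number of usable steps at roughly $\log_\Delta\log n$; whichever limit hits first determines the lower bound. Your two-regime decomposition would, if both pieces held, prove the strictly stronger bound $\Omega(\max\{\Delta,\log_\Delta\log n\})$ in the range where the tree fits — in particular $\Omega(\log_\Delta\log n)$ even for constant $\Delta$ — which overshoots the theorem and is implausible given the structure of the problem.

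The second problem is that your sinkless-orientation reduction cannot be instantiated. You want a $(\Delta,3)$-biregular instance in which every black node is the first game for all three of its white neighbors, so that the $\{0,0,1\}$ constraint is active everywhere. But each white node of degree $\Delta$ designates exactly \emph{one} of its incident edges as edge number $1$, so the number of ``first'' half-edges is $|W|$, while making every black node all-first would require $3|B| = \Delta|W|$ first half-edges. For $\Delta \ge 2$ this is a contradiction; only a $1/\Delta$ fraction of the white-side incidences can be first edges. The $\{0,0,1\}$ clause therefore cannot be arranged to fire at every black node, and the selection problem you hope to encode does not appear globally in any $\iterghz(\Delta)$ instance with $\Delta \ge 3$. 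The $\log_\Delta\log n$ term has to come from the randomized version of the round-elimination argument itself, not from a side reduction.
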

In order to obtain the best possible lower bound stated solely as a function of $n$, we take $\Delta = \Theta\left(\frac{\log \log n}{\log \log \log n}\right)$, and we obtain the following.
\begin{corollary}[\cite{balliu2024quantum}]\label{lb:iterghz}
	Any classical randomized algorithm that solves iterated GHZ with high probability requires $\Omega\left(\frac{\log \log n}{\log \log \log n}\right)$ rounds.
\end{corollary}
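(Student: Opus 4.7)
The plan is to invoke the preceding theorem with a value of $\Delta$ that depends on $n$ and maximizes $\min\{\Delta,\ \log_\Delta \log n\}$. The first quantity is increasing in $\Delta$, while the second quantity equals $(\log \log n)/(\log \Delta)$ and is decreasing in $\Delta$, so the minimum is maximized (up to constants) when the two quantities are equal, i.e.\ when $\Delta \log \Delta = \Theta(\log \log n)$.

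Next, I would guess $\Delta = \lfloor c \cdot (\log \log n)/(\log \log \log n) \rfloor$ for a suitable constant $c$ and verify the asymptotics. With this choice, $\log \Delta = \log \log \log n - \log \log \log \log n + O(1) = (1 - o(1)) \log \log \log n$, hence $\Delta \log \Delta = \Theta(\log \log n)$, as required. Substituting back gives $\log_\Delta \log n = (\log \log n)/(\log \Delta) = \Theta((\log \log n)/(\log \log \log n))$, which matches $\Delta$ up to constants. Therefore the theorem's bound $\Omega(\min\{\Delta, \log_\Delta \log n\})$ evaluates to $\Omega((\log \log n)/(\log \log \log n))$. The integrality of $\Delta$ is handled by the floor, which only affects constants, and we need $\Delta \ge 3$, which holds for sufficiently large $n$.

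Finally, I would observe that any hard instance for $\iterghz(\Delta)$ on $n$ nodes is, by definition, also an instance of the (unparameterized) iterated GHZ problem, since the latter is defined on arbitrary bipartite graphs whose black nodes have degree $3$ and imposes no a priori bound on white-node degrees. Hence the lower bound for $\iterghz(\Delta)$ at the chosen $\Delta$ transfers verbatim to iterated GHZ, completing the argument. The main (and only) obstacle is the one-parameter optimization above; there is no further subtlety since the theorem's statement does not require $\Delta$ to be constant.
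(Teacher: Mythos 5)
Your proposal is correct and follows the same approach as the paper: choose $\Delta = \Theta\bigl(\frac{\log\log n}{\log\log\log n}\bigr)$ to balance the two terms in the $\min$ of the preceding theorem, and observe that hard instances for $\iterghz(\Delta)$ are also instances of the unparameterized iterated GHZ problem. The paper states this in a single sentence; you merely carry out the (routine) optimization and asymptotics explicitly.
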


\paragraph{Upper bound for quantum algorithms.}
In \cite{balliu2024quantum}, the lower and upper bounds for the problem $\iterghz(\Delta)$ are proved in a setting in which the input instances are $\Delta$-edge colored. 
In the construction described in this section, however, we would need to provide a different color to each port gadget belonging to the same octopus gadget. This would require $\omega(1)$ input labels, which is not allowed in a proper LCL.

Since the lower bound of \cite{balliu2024quantum} holds in a $\Delta$-edge colored graph, this lower bound clearly holds in graphs that are not $\Delta$-edge colored (having such a coloring can only make the problem potentially easier). However, for an upper bound, we need to adapt the algorithm of \cite{balliu2024quantum} to the case in which the edge coloring is not provided. Moreover, in order to use \Cref{lem:ub-padding}, we need an algorithm that works also in the case of parallel edges.

\begin{lemma}\label{ub:iterghz}
	The iterated GHZ problem can be solved in $O(1)$ rounds with a quantum algorithm, even if parallel edges are present.
\end{lemma}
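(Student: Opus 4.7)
The plan is to adapt the quantum protocol of \cite{balliu2024quantum} by replacing its reliance on the edge coloring with an explicit two-round distribution of quantum and classical resources, coordinated through the local orderings already available at black nodes. The key observation is that the constraints of iterated GHZ decompose over black nodes: each game imposes either the multiset $\{0,0,1\}$ condition (when $i_1 = i_2 = i_3 = 1$), the standard GHZ XOR--OR condition (when the three inputs sum to an even number), or no constraint at all. Hence the quantum strategy can be implemented independently at each black node, and nothing in the strategy will depend on the existence of a global edge coloring.

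The algorithm will run in two rounds of communication followed by purely local processing. In round $1$, every white node $u$ sends along each incident half-edge the position index $i$ of that half-edge in its own ordering. In round $2$, each black node, now knowing the triple of incoming indices $(i_1, i_2, i_3)$, acts as follows. If $(i_1, i_2, i_3) = (1,1,1)$, it sends a classical bit along its three half-edges telling the neighbors to output $1, 0, 0$ respectively in its own local half-edge order; this produces the required multiset $\{0,0,1\}$. Otherwise it prepares the GHZ state $\frac{1}{\sqrt{2}}(\ket{000}+\ket{111})$ and sends one qubit along each of its three half-edges, together with a one-bit tag indicating this is the GHZ case.

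After round $2$, no further communication is required. Each white node $u$ processes its half-edges in its local order $i = 1, 2, \ldots, d_u$. For $i = 1$ it sets $x(u,1) = 0$; for $i > 1$ it sets $x(u,i) = y(u,i-1)$. If the half-edge carried a classical instruction, $u$ uses it directly as $y(u,i)$; otherwise, $u$ measures the received GHZ qubit in the basis prescribed by the standard GHZ protocol on input $x(u,i)$ and uses the outcome as $y(u,i)$. Since the quantum states for different games are disjoint and were already distributed in round $2$, the entire sequential chain of local measurements inside $u$ requires no additional communication rounds. Correctness in the GHZ case follows from the textbook analysis of the GHZ protocol: for inputs of even sum the identity $y_1 \oplus y_2 \oplus y_3 = x_1 \lor x_2 \lor x_3$ holds with certainty, and for odd sum the constraint is vacuous so any outcome is acceptable.

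Parallel edges and the absence of an edge coloring are handled without modification, because every step above is defined per half-edge. If two half-edges of a black node touch the same white node, they simply carry two independent messages; if two half-edges of a white node touch the same black node, the white node still uses each one at its own position in its local order. The main subtlety to verify will be the alignment between the black node's own half-edge ordering, used to decide which neighbor outputs $1$ in the first-game case, and the players' local positions: since the multiset condition $\{0,0,1\}$ is symmetric in the three players, any canonical choice at the black node works, so this step is routine. No step of the argument uses a proper edge coloring of $G$, and the total round complexity is $O(1)$.
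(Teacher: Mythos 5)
Your proposal is correct and takes essentially the same approach as the paper's proof: a round of index-announcement, a round in which each black node distributes either fixed classical bits (the all-ones case) or the three halves of a GHZ state, followed by a purely local, sequential chain of GHZ measurements at each white node. The only cosmetic difference is that the paper has the black node always prepare and send the GHZ qubits (and attach a classical bit in the all-ones case, which is then simply never measured), whereas you skip the qubit transmission in that case; this is functionally equivalent.
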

\begin{proof}
	The algorithm works in two rounds of communications, that are described in the following two points.
	\begin{enumerate}
		\item Each white node $w$ does the following. Let $b_i$ be the $i$-th neighbor of $w$. Node $w$ sends $i$ to $b_i$, for all \(1 \le i \le d_w\), where \(d_w\) is the degree is \(w\).
		\item Each black node \(b\) has now received three port numbers \(p_v,p_u,p_z\) from its three white neighbors \(v,u,z\). 
		It then prepares three qubits \(q_v,q_u,q_z\) in the GHZ state, i.e.,
		\[
			\ket{0}_{q_v}\ket{0}_{q_u}\ket{0}_{q_z} \to \frac{1}{\sqrt{2}}\left(\ket{0}_{q_v}\ket{0}_{q_u}\ket{0}_{q_z}+ \ket{1}_{q_v}\ket{1}_{q_u}\ket{1}_{q_z}\right).
		\]
		If \(p_v = p_u = p_z = 1\), then $b$ sends the pair \((q_v,0)\) to \(v\), the pair \((q_u,0)\) to \(u\), and the pair \((q_z,1)\) to \(z\).
		Otherwise, it just sends the qubit \(q_v\) to \(v\), the qubit \(q_u\) to \(u\), and the qubit \(q_z\) to \(z\).
	\end{enumerate}
	Now, all white nodes decide on their outputs without any further communication.
	First, all white nodes \(w\) set \(x(w,1) = 0\).
	Now, observe that each white node \(w\) has received a qubit \(q_{w,i}\) through its \(i\)-th incident edge. 
	Furthermore, some white nodes $w$ have received a pair \((q_{w,1},j_w)\), where \(q_{w,1}\) is a qubit and \(j_w \in \{0,1\}\): we call such nodes \emph{lucky}.
	Each lucky white node \(w\) sets \(y(w,1) = j_w\) and $x(w,2) = j_w$ (if $d_w \ge 2$).
	Now we describe what each white node \(w\) outputs for the \(j\)-th incident edge. If a node is lucky, it does the following for each $j \ge 2$ iteratively, while if a node is unlucky it does the following for each $j \ge 1$ iteratively. 
	Each node $w$ measures the qubit \(q_{w,j}\) according to the well-known GHZ strategy (with input $x(w,j)$) that wins the game with probability 1 \cite{Brassard2004QuantumP}, and sets the output \(y(w,j)\) accordingly.
	Then, node $w$ sets $x(w,j+1) = y(w,j)$ (if $j+1 \le d_w$).

	We now argue about the correctness of the algorithm.
	White constraints are satisfied by construction of the algorithm.
	Suppose now that the constraints of some black node \(b\) is not satisfied. 
	This means that the three white neighbors \(v,u,z\) of \(b\) have some outputs \((x_v,y_v),(x_u,y_u),(z_u,y_u)\) that do not satisfy the constraints of the problem.
	We have three cases:
	If the three port numbers \(p_v,p_u,p_z\) are such that \(p_v = p_u = p_z = 1\), then, by point 2, the white nodes are lucky and have set the outputs to be \((0,0),(0,0),(0,1)\), which is a valid solution.
	Suppose that at least a port number between \(p_v, p_u,p_z\) is different from 1.
	If \(x_v + x_u + x_z\) is odd, then any solution is valid.
	Suppose now that \(x_v + x_u + x_z\) is even. 
	Then, the outputs of the white nodes are constructed measuring the three entangled qubits \(q_v,q_u,q_z\) in the GHZ state prepared by \(b\).
	By the well-known GHZ winning strategy \cite{Brassard2004QuantumP}, the white nodes outputs are such that \(y_v \oplus y_u \oplus y_z = x_v \lor x_u \lor x_z\), which is a valid solution.
\end{proof}

\paragraph{The iterated GHZ problem as a linearizable problem.}
We define a problem $\problem^{\linearizable} = (\Sigma,(F,L,P),B)$ that encodes iterated GHZ as a linearizable problem. On a high level, each label will encode a pair $(x(v,i),y(v,i))$ of bits, and we will use special labels for the case in which $i=1$.
We define the set of possible first labels as $F = \{(\lfirst, y) \mid y \in \{0,1 \}\}$.
Then, we define the set of possible output labels as $\Sigma = F \cup \{(\lother, x, y) \mid x,y \in \{0,1\}\}$, and we define $L = \Sigma$.
The allowed pairs $P$ are defined as all the pairs satisfying the following.
\begin{itemize}
	\item All pairs $((\lfirst, y_1),(\lother,x_2,y_2))$ where $y_1 = x_2$ and $y_1,x_2,y_2 \in \{0,1\}$.
	\item All pairs $((\lother, x_1, y_1),(\lother,x_2,y_2))$ where $y_1 = x_2$ and $x_1,y_1,x_2,y_2 \in \{0,1\}$.
\end{itemize}
Let $\{L_1,L_2,L_3\}$ be a multiset containing three elements from $\Sigma$.
If $L_i = (\lfirst,y_i)$ for some $y_i \in \{0,1\}$, then let $\ell_i = \lfirst$ and let $x_i = 0$.
If $L_i = (\lother,x_i,y_i)$ for some $x_i,y_i \in \{0,1\}$, then let $\ell_i = \lother$.
Observe that now, for all $1 \le i \le 3$, $\ell_i$, $x_i$ and $y_i$ are defined. 
The set $B$ contains the multisets $\{L_1,L_2,L_3\}$ satisfying the following.
\begin{itemize}
	\item All the multisets satisfying $\ell_1 = \ell_2 = \ell_3 = \lfirst$ such that the multiset $\{y_1,y_2,y_3\}$ is equal to $\{0,0,1\}$.
	\item All the multisets where, if $\lother \in \{\ell_1,\ell_2,\ell_3\}$, and $x_1 + x_2 + x_3$ is even, then $y_1 \oplus y_2 \oplus y_3 = x_1 \lor x_2 \lor x_3$.
	\item All the multisets where $\lother \in \{\ell_1,\ell_2,\ell_3\}$ and $x_1 + x_2 + x_3$ is odd.
\end{itemize}
It is clear that $\problem^{\linearizable}$ and iterated GHZ are equivalent:
\begin{itemize}
	\item both problems require outputting two bits per edge;
	\item the black constraint of iterated GHZ has the same requirements of the set $B$ on the pairs incident to a black node;
	\item both the white constraint of iterated GHZ and the triple $(F,L,P)$ requires that the second bit of a port is the same as the first bit of the next one.
\end{itemize}

\paragraph{Putting things together.}
By combining the problem $\problem^{\linearizable}$ obtained as a function of iterated GHZ with \Cref{lb:iterghz,ub:iterghz,lem:lb-padding,lem:ub-padding}, we obtain the following result.
\begin{theorem}\label{thm:main-result}
	There exists an LCL problem $\problem$ with quantum complexity $O(\log n)$ and classical randomized complexity $\Omega\left(\log n \cdot \frac{\log \log n}{\log \log \log n}\right)$.
\end{theorem}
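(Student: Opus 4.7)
The plan is to instantiate the framework developed throughout \cref{sec:lcl-details} with the iterated GHZ problem, cast as a linearizable problem $\problem^{\linearizable}$ as above. All the heavy lifting has already been done, so the proof reduces to a bookkeeping exercise that combines four previously established results: \cref{lb:iterghz} (classical lower bound $\Omega(\log\log n / \log\log\log n)$ for iterated GHZ), \cref{ub:iterghz} ($O(1)$-round quantum algorithm for iterated GHZ that tolerates parallel edges), \cref{lem:ub-padding} (quantum upper bound transfer), and \cref{lem:lb-padding} (classical lower bound transfer). The problem $\problem$ produced by the construction is, by design, a genuine LCL (it is defined via finitely many constraints with constant checking radius), so I only need to verify that the two complexity bounds carry through.

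For the quantum upper bound I would take $T(n) = O(1)$ as a quantum round complexity for $\problem^{\linearizable}$, justified by \cref{ub:iterghz} together with the equivalence established between $\problem^{\linearizable}$ and iterated GHZ. Feeding this $T(n)$ into \cref{lem:ub-padding} yields a quantum algorithm for $\problem$ with complexity $O(T(n) \log n) = O(\log n)$.

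For the classical randomized lower bound I would take $T(n) = \Theta(\log\log n / \log\log\log n)$ from \cref{lb:iterghz} and feed it into \cref{lem:lb-padding}, which produces a randomized-LOCAL lower bound of $\Omega(T(n^{1/3}) \log n)$ for any algorithm solving $\problem$ with failure probability at most $1/n$. A brief calculation verifies
\[
    T(n^{1/3}) \;=\; \Theta\!\left(\frac{\log\log(n^{1/3})}{\log\log\log(n^{1/3})}\right) \;=\; \Theta\!\left(\frac{\log\log n}{\log\log\log n}\right),
\]
giving the claimed bound $\Omega\bigl(\log n \cdot \log\log n / \log\log\log n\bigr)$.

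The only subtlety worth highlighting, and the single point I would double-check in the write-up, is the hypothesis of \cref{lem:ub-padding} that the quantum algorithm for $\problem^{\linearizable}$ must work even when the compressed graph $\hat{G}$ contains parallel edges (which can arise because distinct port gadgets of one octopus may be connected to the same inter-octopus node). This is precisely the reason \cref{ub:iterghz} was stated in the multigraph setting: each black node prepares a GHZ triple purely locally and distributes one qubit along each incident half-edge, so parallel edges cause no issue. With that remark in place, the two complexity bounds combine immediately to yield the theorem.
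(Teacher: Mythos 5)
Your proposal is correct and follows exactly the paper's argument: the paper proves \cref{thm:main-result} by instantiating $\problem^{\linearizable}$ with iterated GHZ and directly combining \cref{lb:iterghz,ub:iterghz,lem:lb-padding,lem:ub-padding}, just as you do. Your remark about the parallel-edges hypothesis of \cref{lem:ub-padding} being the reason \cref{ub:iterghz} is stated for multigraphs is also precisely the point the paper takes care of.
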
 %
\section*{Acknowledgements}

This work was supported in part by the Research Council of Finland, Grants 359104 and 363558, by the the MUR (Italy) Department of Excellence 2023 - 2027 for GSSI, by the European Union - NextGenerationEU under the Italian MUR National Innovation Ecosystem grant VITALITY (ECS00000041, CUP: D13C21000430001), and by the PNRR MIUR research project GAMING “Graph Algorithms and MinINg for Green agents” (PE0000013, CUP D13C24000430001).
This project was initiated at the Research Workshop on Distributed Algorithms (RW-DIST 2025) in Freiburg, Germany; we would like to thank all workshop participants and organizers for inspiring discussions.
We would also like to thank Sebastian Brandt, Xavier Coiteux-Roy, François Le Gall, Augusto Modanese, Marc-Olivier Renou, Ronja Stimpert, Lucas Tendick, and Isadora Veeren for discussions that led to this project.
 \printbibliography
\newpage
\appendix
\section{Simulation of component-wise \olocal in \local}
\label{app:reduction}

In this section, we extend the result of \cref{sec:quantum-advantage} to hold also for the \emph{component-wise randomized \olocal model}.
We start by shortly discussing the state of the art and motivating the study of yet another variant of the \local model.
We then define formally what we mean by component-wise randomized \olocal.
Finally, we extend the proofs of \cref{sec:quantum-advantage} to hold also in this new model.

\subsection{Motivation}

Informally, component-wise randomized \olocal is just like the component-wise deterministic \olocal model (see \cite{akbari2024,damore2025}), but generalized for the randomized case.
In particular, a deterministic component-wise \olocal algorithm is also a randomized one.

The component-wise deterministic online-\local model was originally introduced because it allows simulation of randomized \olocal in weaker models like \slocal in rooted trees and forests~\cite{akbari2024}.
In particular, randomized \olocal captures the power of \nonsign distributions~\cite{akbari2024}, \slocal~\cite{ghaffari2017} and dynamic-\local~\cite{akbari_et_al:LIPIcs.ICALP.2023.10}.
Hence, proving lower bounds for randomized online-\local allows us to provide strong lower bounds that span across a wide variety of models.

One way to prove these lower bounds is by providing a simulation of stronger models in weaker ones.
One such model is the component-wise randomized \olocal, which is stronger than both the deterministic component-wise \olocal and the bounded-dependence model.

Extending the simulation result of \cref{sec:quantum-advantage} to component-wise randomized \olocal model improves the state of the art in two major ways:
\begin{enumerate}
    \item So far the only result known for component-wise deterministic \olocal is the simulation result of \textcite{akbari2024} in rooted trees and forests.
    Our result works in all graphs, albeit with a worse locality.

    \item We provide a direct reduction from the bounded-dependence model to the component-wise randomized \olocal.
    Previously, the reduction had to go through the \nonsign model, randomized \olocal and component-wise deterministic \olocal.
    This round-trip through randomized \olocal incurs a doubly exponential cost in the locality.
    For localities in the $\Omega(\log \log n)$ region, our result is asymptotically tighter.
\end{enumerate}

\subsection{Definitions}

We now formally define what we mean by component-wise randomized \olocal.
To do this, we first define the notion of \emph{partial \olocal run of length~$\ell$}, as introduced by \textcite{akbari2024}:
\begin{definition}[Partial \olocal run of length~$\ell$]
    \label{def:partial-online-local-run}
    Let $G$ be a graph with an order of nodes~$v_1, v_2, \ldots, v_n$, and let $\localVar$ encode the input data on each vertex.
    Consider the subgraph $G_\ell \subseteq G$ induced by the radius-$T$ neighborhoods of the first $\ell$ nodes $v_1, \ldots, v_\ell$, i.e., $G_\ell = \mathring{G}[\neighborhood_{T}[\{v_1, \ldots, v_\ell\}]]$.
    We call~$(G_\ell, (v_1, \ldots, v_\ell), \localVar \restriction_{\neighborhood_T[\{v_1, \ldots, v_\ell\}]})$ the \emph{partial \(T\)-rounds \olocal run of length~$\ell$ of~$G$}.

    We denote by $({G}_\ell, (v_1, \ldots, v_\ell), \localVar \restriction_{\neighborhood_T[\{v_1, \ldots, v_\ell\}]})[v_\ell]$ the tuple $(\bar{G}_\ell, (w_1, \ldots, w_k), \localVar \restriction_{\neighborhood_T[\{w_1, \ldots, w_k\}]})$ where
    $\bar{G}_\ell \subseteq G_\ell$ is the connected component containing~$v_\ell$,
    $(w_1, \ldots, w_k)$ is the maximal subsequence of $(v_1, \ldots, v_\ell)$ of nodes that belong to~$\bar{G}_\ell$,
    and $\localVar \restriction_{\neighborhood_T[\{w_1, \ldots, w_k\}]}$ is the restriction of the local inputs to this component.
    In such case, $k$ is the length of $({G}_\ell, (v_1, \ldots, v_\ell), \localVar \restriction_{\neighborhood_T[\{v_1, \ldots, v_\ell\}]})[v_\ell]$.
    Notice that $w_k = v_\ell$.
\end{definition}

Now we are ready to introduce the \olocal model.
We loosely follow the definitions given in by \textcite{akbari2024}.
\paragraph{The \olocal model.}
The deterministic \olocal model was first introduced in~\cite{akbari_et_al:LIPIcs.ICALP.2023.10}.
It is a centralized model of computing where the algorithm initially knows only the number of nodes~$n$ of the input graph $G$.
The nodes are processed with respect to an adversarial input ordering sequence $\sigma = v_{1}, v_{2}, \dots, v_{n}$.
Like in the \local model, each node \(v\) has some local state variable \(\localVar(v)\) which encodes input data.
In an algorithm with locality \(T\), the output of node~$v_{i}$ depends on the partial \(T\)-rounds \olocal run of length \(i\) of \(G\), that is, \((G_i,(v_1,\dots,v_i),\localVar \restriction_{\NN_T[\{v_1, \dots, v_i\}]})\) (see \cref{def:partial-online-local-run}).
In the \emph{randomized} \olocal model~\cite{akbari2024}, every node has access to a private infinite random bit string, which is independent of the random bit strings of other nodes.
Trivially, thanks to global memory, node \(v_i\) knows the random bit strings of \(\NN_T[\{v_1, \dots, v_i\}]\).
In the randomized case, the solution to a problem should be correct with probability at least~\(1 - 1/\poly(n)\).

\bigskip

We now give a formal definition of the component-wise \olocal model.

\paragraph{The component-wise \olocal model.}
The component-wise \olocal model is similar to the \olocal model above, but with some further restrictions.
More specifically, let $(G, (v_1, \dots, v_n), \localVar_G)$ and $(H, (u_1, \dots, u_n), \localVar_H)$ be two $n$-node input instances.
Let $i, j \in [n]$ be two indices such that
\begin{align*}
    (G_i, (v_1, \ldots, v_i), \localVar_G \restriction_{\neighborhood_T[\{v_1, \ldots, v_i\}]})[v_i] &=
    (\bar{G}_i, (v_1', \ldots, v_k'), \localVar_G \restriction_{\neighborhood_T[\{v_1', \ldots, v_k'\}]}) \quad\text{ and}\\
    (H_j, (u_1, \ldots, u_j), \localVar_H \restriction_{\neighborhood_T[\{u_1, \ldots, u_j\}]})[u_j] &=
    (\bar{H}_j, (u_1', \ldots, u_k'), \localVar_H \restriction_{\neighborhood_T[\{u_1', \ldots, u_k'\}]})
\end{align*}
are isomorphic with the following properties: the isomorphism preserves the order, i.e., it brings $v_h'$ to $u_h'$ for every $h \in [n]$, and the isomorphism also preserves the inputs $\localVar_G$ and $\localVar_H$ in these neighborhoods.
Then, the algorithm must produce the same output for~$v_i$ and~$u_j$.

In the \emph{randomized} component-wise \olocal model, each node is given access to a private infinite random bit string, which is independent of the random bit strings of other nodes.
When the algorithm needs to output a label for node~\(v_i\), it can only use the random bit strings of nodes in \(\bar{G_i}\).
Again, in the randomized model a solution to a problem must be correct with probability at least~\(1 - 1/\poly(n)\).

\bigskip

To see why this matches the intuition of an algorithm being \emph{component-wise}, imagine that $G_i$ is the real input and that~$H_j$ consists of only a single component, that is
\[
    (H_j, (u_1, \ldots, u_j), \localVar_H \restriction_{\neighborhood_T[\{u_1, \ldots, u_j\}]})[u_j]
    =  (H_j, (u_1, \ldots, u_j), \localVar_H \restriction_{\neighborhood_T[\{u_1, \ldots, u_j\}]}) .
\]
By the above definition, the algorithm must produce the same output for~$G_i$ and~$H_j$.
Then, effectively what the algorithm does on~$G_i$ must depend only on what the algorithm has seen in~$H_j$, that is only the current component.

\subsection{Simulation of component-wise randomized \olocal in randomized \local}

It is intuitive that a \boundep outcome implies the existence of a component-wise \olocal algorithm with the same locality: we formalize this intuition through the following lemma.

\begin{lemma}
    Let \(\problem\) be any labeling problem with output node label set \(\Sigma\), and let \(\outcome\) be a \boundept outcome solving \(\problem\) with probability \(p > 0\) and locality \(T(n)\) on graphs of \(n\) nodes.
    Then there exists a component-wise randomized \olocal algorithm solving \(\problem\) with probability \(p\) with the same locality \(T(n)\).
\end{lemma}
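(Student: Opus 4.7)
The plan is to run the bounded-dependent outcome $\outcome$ itself as the online sampler, revealing outputs for processed nodes in the natural order using the chain rule of conditional probability. The two enabling ingredients are already baked into the hypotheses. First, non-signaling beyond distance $T$ says that the restricted distribution $\outcome(G,\localVar)[S]$ depends only on $\view_T(S)$, so for any subset $S$ of processed nodes that lies inside the current component $\bar G_i$ of $v_i$ in $G_i$, the joint distribution of outputs on $S$ is fully determined by the component view. Second, $2T$-dependence says that if $\dist_G(A,B) > 2T$ then the restrictions to $A$ and $B$ are independent, which will let us fold together the many separate "local" sampling steps into a single coherent sample from $\outcome(G,\localVar)$.

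The algorithm itself works as follows. When the adversary reveals $v_i$, the algorithm inspects the current component view $(\bar G_i, (v_1',\dots,v_k'), \localVar \restriction)$. For each $h = 1,\dots,k$ in order, it recovers from this view the subcomponent that $v_h'$ belonged to at the moment $v_h'$ was first processed: since components can only grow as new nodes arrive, any node that was ever in $v_h'$'s component is still in $v_h'$'s component now, so this subcomponent is exactly the connected component of $v_h'$ inside $\mathring G[\NN_T[\{v_1',\dots,v_h'\}]]$, which is visible. The algorithm then samples $\oupt(v_h')$ from $P_{\outcome}\!\left(\oupt(v_h') \mid \oupt(v_1^{(h)}),\dots,\oupt(v_{\ell_h - 1}^{(h)})\right)$, where $v_1^{(h)},\dots,v_{\ell_h}^{(h)} = v_h'$ are the processed nodes in that subcomponent; this conditional is computed from the subcomponent view via non-signaling, using the random bits of those nodes (all of which belong to $\bar G_i$). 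Re-running this reconstruction from the current view reproduces exactly the label that was output when $v_h'$ was first processed, so the component-wise equivalence requirement is satisfied.

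The main technical point is proving that this chain-rule scheme actually samples from $\outcome(G,\localVar)[\{v_1',\dots,v_k'\}]$, not just from some independently-chosen per-step conditional. The key lemma is that if $v_l'$ and $v_h'$ lie in different subcomponents of $G_{j_h}$ at the time $v_h'$ is processed, then $\dist_G(v_l',v_h') > 2T$: otherwise any shortest path of length $\le 2T$ has a midpoint lying in both $T$-balls (and, by the edge-removal rule in the open induced subgraph, none of its edges are deleted), forcing the two balls to share a vertex and the components to coincide. Applying $2T$-dependence then gives $\oupt(v_h') \perp \oupt(v_l')$, so the "local" conditional the algorithm uses at time $j_h$ equals the full conditional $P(\oupt(v_h')\mid \oupt(v_1'),\dots,\oupt(v_{h-1}'))$. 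A straightforward induction on $h$ then shows that $(\oupt(v_1'),\dots,\oupt(v_k'))$ is jointly distributed as $\outcome(G,\localVar)[\{v_1',\dots,v_k'\}]$; taking this over all $k$ and all components yields the full distribution $\outcome(G,\localVar)$, so the algorithm solves $\problem$ with the same success probability $p(n)$ and the same locality $T(n)$. The hardest part is precisely the midpoint-in-both-$T$-balls verification, which has to be done with care because the open induced subgraph deletes edges between nodes at distance exactly $T$ from the base set; one checks that on a shortest path of length at most $2T$ no such edge appears, so the bridging argument goes through.
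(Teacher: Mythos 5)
Your proposal takes essentially the same route as the paper's proof: define the component-wise \olocal algorithm by sampling each node's output from the appropriate conditional of $\outcome(G,\localVar)$ given what has already been sampled in that node's current component, prove by induction on the component length that the induced joint law equals $\outcome(G,\localVar)$ restricted to that component, and then use $2T$-dependence to stitch together the separate components. The one place where you add value is the distance lemma: the paper asserts (without spelling it out) that nodes in distinct components of the partial run lie at graph distance strictly greater than $2T$, whereas you actually prove it via the midpoint-of-a-shortest-path argument and, importantly, verify that no edge on a length-$\le 2T$ shortest path is deleted by the open-induced-subgraph construction (each such edge has an endpoint strictly within distance $T$ of one of the two endpoints). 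That edge-survival check is exactly the subtle step the paper glosses over, and you are right to flag it; you also correctly observe that the algorithm must deterministically reproduce earlier in-component outputs from the current (larger) component view and the in-component random bits, which is what makes the map consistent with the component-wise equivalence requirement. Overall this is a correct and, on one point, more careful rendering of the same argument.
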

\begin{proof}
    Consider any graph \(G\) of \(n\) nodes, and any adversarial processing order of the nodes \((v_1, \ldots, v_n)\).
    We now construct a randomized component-wise \olocal algorithm \(\AA\) with the desired properties.
    When the adversary reveals \(v_1\) and \((G_1, (v_1), \localVar \restriction_{\neighborhood_{T}[v_1]})\) to \(\AA\), \(\AA\) just samples an output \(\oupt(v_1)= \lambda_{v_1}\) for \(v_1\) from \(\outcome(G,\localVar)[\{v_1\}]\).
    Now, when the adversary reveals \(v_i\) and \((G_i, (v_1, \ldots, v_i), \localVar \restriction_{\neighborhood_{T}[\{v_1, \ldots, v_i\}]})[v_i] = (\bar{G}_i, (v_1', \ldots, v_k'), \localVar \restriction_{\neighborhood_{T}[\{v_1', \ldots, v_k'\}]})\) to \(\AA\), it samples an output \(\oupt(v_i)\) for \(v_i\) from the distribution \(\outcome(G,\localVar)[\neighborhood_{T}[\{v_1', \ldots, v_k'\}]]\) conditional on the fact that the outputs of \(v_1', \ldots, v_{k-1}'\) have already been sampled and are equal to \(\oupt(v_1') = \lambda_{v_1'}, \ldots, \oupt(v_{k-1}') = \lambda_{v_{k-1}'}\).
    Let \(\PP_{\AA}\) denote the probability measure induced by the algorithm \(\AA\) on the outputs of the nodes, and let \(\PP_{\outcome}\) denote the probability measure induced by the \boundep outcome \(\outcome\) on the outputs of the nodes.
    Fix any \(k \in [n]\), and suppose
    \[
        (G_{k}, (v_1, \ldots, v_{k}), \localVar \restriction _{\neighborhood_{T}[\{v_1, \ldots, v_{k}\}]})[v_{k}] = (\bar{G}_{k}, (w_1, \ldots, w_h), \localVar \restriction _{\neighborhood_{T}[\{w_1, \ldots, w_h\}]}),
    \]
    with \(w_h = v_k\).
    We first show that, for all \((\lambda_{w_1}, \ldots, \lambda_{w_h}) \in \Sigma^h\), we have that 
    \[
        \PP_\AA(\cap_{i = 1}^{h} \{\oupt(w_i) = \lambda_{w_i}\}) = 
        \PP_{\outcome}(\cap_{i = 1}^{h} \{\oupt(w_i) = \lambda_{w_i}\}).
    \]
    Let us start by induction on \(h\).
    The base case is \(h = 1\).
    Since \(w_1\) is the only node in \(\bar{G_1}\), then it means that there are no nodes in the radius-\((2T)\) neighborhood of \(w_1\) which have been previously labeled (otherwise the partial \olocal run wound contain them).
    By construction of \(\AA\) we are freely sampling from \(\outcome(G,\localVar)[\neighborhood_{T}[w_1]]\) and the thesis is trivial.
    Suppose now \(h > 1\).
    We have that 
    \begin{align*}
        &\PP_{\AA}(\cap_{i = 1}^h \{\oupt(w_i) = \lambda_{w_i}\}) \\
        = \ & \PP_{\AA}\left(\oupt(w_h) = \lambda_{v_h} \st \cap_{i = 1}^{h-1} \{w_i = \lambda_{w_i}\}  \right) \PP_{\AA}( \cap_{i = 1}^{h-1} \{w_i = \lambda_{w_i}\}).
    \end{align*}
    By the inductive hypothesis, \(\PP_{\AA}( \cap_{i = 1}^{h-1} \{w_i = \lambda_{w_i}\}) = \PP_{\outcome}( \cap_{i = 1}^{h-1} \{w_i = \lambda_{w_i}\})\) and, by construction of \(\AA\), \(\PP_{\AA}\left(\oupt(w_h) = \lambda_{v_h} \st \cap_{i = 1}^{h-1} \{w_i = \lambda_{w_i}\}  \right)= \PP_{\outcome}\left(\oupt(w_h) = \lambda_{v_h} \st \cap_{i = 1}^{h-1} \{w_i = \lambda_{w_i}\}  \right)\).
    Hence,
    \begin{align*}
        &\PP_{\AA}(\cap_{i = 1}^h \{\oupt(w_i) = \lambda_{w_i}\}) \\
        = \ & \PP_{\AA}\left(\oupt(w_h) = \lambda_{v_h} \st \cap_{i = 1}^{h-1} \{w_i = \lambda_{w_i}\}  \right) \PP_{\AA}( \cap_{i = 1}^{h-1} \{w_i = \lambda_{w_i}\}) \\
        = \ & \PP_{\outcome}\left(\oupt(w_h) = \lambda_{v_h} \st \cap_{i = 1}^{h-1} \{w_i = \lambda_{w_i}\}  \right) \PP_{\outcome}( \cap_{i = 1}^{h-1} \{w_i = \lambda_{w_i}\}) \\
        = \ & \PP_{\outcome}(\cap_{i = 1}^h \{\oupt(w_i) = \lambda_{w_i}\}),
    \end{align*}
    and the induction is complete.
    Let \(S \subseteq V(G)\) with \(S = \{v_{j_1}, \ldots, v_{j_h}\}\) where \(j_1, \ldots, j_h\) are induced by the ordering \(v_1, \ldots, v_n\).
    The above argument implies that, for all \((\lambda_{v_{j_1}}, \ldots, \lambda_{v_{j_h}}) \in \Sigma^{h}\) it holds that 
    \[
        \PP_{\AA}(\cap_{i = 1}^{h} \{\oupt(v_{j_i}) = \lambda_{v_{j_i}}\}) = 
        \PP_{\outcome}(\cap_{i = 1}^{h} \{\oupt(v_{j_i}) = \lambda_{v_{j_i}}\}).
    \]
    In fact, consider the partial \olocal runs \(\{(G_{j_i}, (v_1, \ldots, v_{j_i}), \localVar \restriction _{\neighborhood_{T}[\{v_1, \ldots, v_{j_i}\}]})[v_{j_i}]\}_{i \in [h]}\).
    Consider the maximal subsequence of partial \olocal runs \(\{(G_{j_{i_k}}, (v_1, \ldots, v_{j_{i_k}}), \localVar \restriction _{\neighborhood_{T}[\{v_1, \ldots, v_{j_{i_k}}\}]})[v_{j_{i_k}}]\}_{i_k \in [h^\star]}\) such that \(G_{j_{i_k}}\) has no intersection with \(G_{j_{i}}\) for all \(i > i_k\), \(i \in [h]\).
    We have two properties: 
    First, it trivially holds that \(\cup_{k = 1}^{h^\star}G_{j_{i_k}} =\cup_{i = 1}^{h}G_{j_{i}}\).
    As for the second property, for any two \(k \neq k' \in [h^\star]\), 
    let 
    \[
        (\bar{G}_{j_{i_k}},(w_1, \ldots, w_s),  \localVar \restriction _{\neighborhood_{T}[\{w_1, \ldots, w_{s}\}]}) = (G_{j_{i_k}}, (v_1, \ldots, v_{j_{i_k}}), \localVar \restriction _{\neighborhood_{T}[\{v_1, \ldots, v_{j_{i_k}}\}]})[v_{j_{i_k}}]
    \]
    and
    \[
        (\bar{G}_{j_{i_{k'}}},(w_1', \ldots, w'_{s'}),  \localVar \restriction _{\neighborhood_{T}[\{w_1', \ldots, w_{s'}'\}]}) = (G_{j_{i_{k'}}}, (v_1, \ldots, v_{j_{i_{k'}}}), \localVar \restriction _{\neighborhood_{T}[\{v_1, \ldots, v_{j_{i_{k'}}}\}]})[v_{j_{i_{k'}}}]. 
    \]
    Then, \(\dist_G(\{w_1, \ldots, w_S\},\{w'_1, \ldots, w'_{s'}\}) > 2T\).
    Hence, we can treat the two partial \olocal runs as independent, and we can apply the inductive hypothesis separately to each of them to conclude the proof and get that the overall success probability is at least \(p\).
\end{proof}

\begin{remark}
    The \nonsign model (see \cite{akbari2024} for a definition) cannot be simulated in the component-wise \olocal model because \nonsign can make use of globally shared resources, such as shared randomness, while component-wise \olocal cannot.
\end{remark}

\begin{remark}
    \cref{thm:findep-sim} is almost-tight for the randomized component-wise \olocal model.
    This is because \cite{akbari_et_al:LIPIcs.ICALP.2023.10} provides a deterministic \(O(\log n)\)-round component-wise \olocal algorithm for \(3\)-coloring bipartite graphs, and \cref{thm:findep-sim} would give a \(\tilde{O}(\sqrt{n})\)-round deterministic \local algorithm for the same problem.
    However, we know that \({\Omega}(\sqrt{n})\) is a lower bound for the problem in the \local model, as well as in the bounded-dependence model (and even in the \nonsign model).
\end{remark}
 \end{document}